\documentclass[12pt]{article}

\usepackage[colorlinks, linkcolor=blue, citecolor=blue]{hyperref}           
\usepackage{color}
\usepackage{graphicx,subfigure,amsmath,amssymb,amsfonts,bm,epsfig,epsf,url,dsfont}
\usepackage{amsthm}

\newtheorem{thm}{Theorem}[section]
\newtheorem{lem}[thm]{Lemma}
\newtheorem{assum}[thm]{Assumption}
\newtheorem{proposition}[thm]{Proposition}
\newtheorem{remark}[thm]{Remark}
\newtheorem{definition}[thm]{Definition}

\newtheorem{corollary}[thm]{Corollary}
\newtheorem{model}[thm]{Model}

\usepackage{tikz}
\usepackage{bbm}
\usepackage[small,bf]{caption}
\usepackage{natbib}
\usepackage[top=1in,bottom=1in,left=1in,right=1in]{geometry}
\usepackage{fancybox}
\usepackage{hyperref}
\usepackage{algorithm}
\usepackage{verbatim}

\usepackage{threeparttable}
\usepackage[titletoc,toc]{appendix}

\usepackage{mathtools}

\usepackage{scalerel,stackengine}
\stackMath
\newcommand\reallywidehat[1]{%
\savestack{\tmpbox}{\stretchto{%
  \scaleto{%
    \scalerel*[\widthof{\ensuremath{#1}}]{\kern-.6pt\bigwedge\kern-.6pt}%
    {\rule[-\textheight/2]{1ex}{\textheight}}%WIDTH-LIMITED BIG WEDGE
  }{\textheight}% 
}{0.5ex}}%
\stackon[1pt]{#1}{\tmpbox}%
}

\newcommand*{\rom}[1]{\expandafter\@slowromancap\romannumeral #1@}
\newcommand{\Cov}{\mathrm{Cov}}

\newcommand{\tr}{\mathrm{tr}}

\newcommand{\cN}{\mathcal{N}}

\DeclareMathOperator*{\argmax}{arg max}
\DeclareMathOperator*{\argmin}{arg min}

\usepackage{xspace}

\numberwithin{equation}{section}

\allowdisplaybreaks
\usepackage[utf8]{inputenc}
\usepackage{authblk}
\usepackage{url}
\usepackage[left=1in,right=1in,top=1in,bottom=1in]{geometry}

\usepackage{titlesec}
\titleformat{\part}[display]
  {\centering\bfseries\LARGE}
  {}{0pt}{}
\titlespacing*{\part}{0pt}{1.0em}{1.0em}

\usepackage{booktabs}
\usepackage{array}
\usepackage{enumitem}
\newcolumntype{L}[1]{>{\raggedright\arraybackslash}p{#1}}
\setlist[itemize]{leftmargin=*,nosep,topsep=0pt,parsep=0pt,partopsep=0pt}

\makeatletter
\renewcommand\paragraph{\@startsection{paragraph}{4}{\z@}%
  {1ex}% space before
  {-1em}% run-in
  {\normalfont\normalsize\bfseries}}
\makeatother

\begin{document}
\title{The Interplay of Signal-to-Noise Ratio and Variance Misspecification in Gaussian Mixtures}

\author[1]{Vladimir Serov\thanks{These authors contributed equally to this work.}}
\author[1]{Amnon Balanov$^{*}$\thanks{Corresponding author: \url{amnonba15@gmail.com}}}
\author[1]{Tamir Bendory}

\affil[1]{\normalsize School of Electrical and Computer Engineering, Tel Aviv University, Tel Aviv 69978, Israel}

\maketitle

\begin{abstract}
We study estimation and clustering in Gaussian mixture models under variance misspecification. Observations are generated with true variance $\sigma^2$, while the component means are estimated using a likelihood with variance $\tau^2$, yielding a family of mismatched likelihood functions parameterized by the ratio $\rho=\tau/\sigma$.
We show that the interplay between $\rho$ and the signal-to-noise ratio (SNR) induces a sharp phase diagram. Under correct specification ($\rho=1$), maximum likelihood recovers the true means, independently of the SNR. However, once the model is misspecified, two different regimes emerge. Under under-smoothing ($\rho<1$), the  estimated Gaussian means are displaced from the truth, and in low SNR this discrepancy grows as the SNR decreases: for every fixed $\rho<1$, the squared error scales as $\mathrm{SNR}^{-1}$. Under over-smoothing ($\rho>1$), the fitted likelihood blurs the cluster separation, causing distinct component means to collapse towards the overall mixture center once $\rho^2$ exceeds a threshold of the form $1 + \lambda\,\mathrm{SNR}$, where $\lambda$ depends on the geometry of the true means. %Thus, in low SNR, even mild variance misspecification can qualitatively alter the target itself.
We further show that the hard assignment objective arises as the limit $\tau\to 0$ of the same mismatched likelihood family, and derive corresponding low- and high-SNR results for hard-assignment mean estimation and latent-label recovery. Furthermore, in low SNR, Bayes-optimal clustering is close to random guessing, and the hard-assignment target remains far from the true means. These results show that in low-SNR applications, even mild variance misspecification or hard-assignment procedures can induce substantial bias, whereas in high SNR these effects are largely absent.

\end{abstract}

\newpage
\tableofcontents

\newpage

\section{Introduction}
\label{sec:intro}

\subsection{Problem description}

Gaussian mixture models (GMMs) are a canonical framework for latent-variable inference~\cite{mclachlan2019finite,bishop2006pattern}. Two fundamental tasks arise in these models: estimating the mixture parameters, such as the component means, and inferring the latent label of each sample, a task often referred to as clustering~\cite{fraley2002model, celeux1995gaussian}. In many applications, however, these tasks are carried out using objectives that are not perfectly matched to the true data-generating model. This issue is typically referred to in the literature as model misspecification~\cite{white1982maximum}. 

A particularly important source of mismatch is the noise variance, especially in high-noise settings. In practice, the variance is often treated as known and fixed, even though it may only be approximately calibrated or may differ from the true variance. Since the  presumed variance directly shapes the likelihood landscape, even mild misspecification can alter the population parameter targeted by the estimator. This raises the following fundamental question:
\begin{quote}
\emph{How does variance misspecification affect mean estimation and clustering in GMMs?}
\end{quote}

To study this question, we consider the isotropic equal-weight GMM. We observe $n$ i.i.d.\ samples $y_1,\dots,y_n\in\mathbb{R}^d$, each generated by first drawing a latent label $L\in\{0,\dots,K-1\}$ uniformly at random and then sampling
\begin{align}
    L \sim \mathrm{Unif}(\{0,1,\ldots,K-1\}),
    \qquad
    Y \mid (L=\ell)\sim \mathcal{N}(\mu_\ell^\star,\sigma^2 I_d),
\label{eq:intro_gmm}
\end{align}
where $\bm{\mu}^\star=(\mu_0^\star,\dots,\mu_{K-1}^\star)$ denotes the unknown component means and $\sigma^2$ is the true noise variance. Under the model in \eqref{eq:intro_gmm}, the marginal distribution of $Y$ is the equal-weight GMM
\begin{align}
    p_{\bm{\mu}^\star,\sigma}(y)
    \triangleq
    \frac{1}{K}\sum_{\ell=0}^{K-1} \varphi_d(y;\mu_\ell^\star,\sigma^2 I_d),    \label{eqn:def_marginal_dist}
\end{align}
where $\varphi_d(y;m,\Sigma)$ denotes the density of the $d$-dimensional Gaussian distribution $\mathcal{N}(m,\Sigma)$ evaluated at $y\in\mathbb{R}^d$. Given the observations $y_1,\dots,y_n$, we distinguish between two inferential tasks that are often intertwined in the GMM literature: (i) \emph{component mean estimation}, namely recovery of the component means $\bm{\mu}^\star$; and (ii) \emph{clustering}, or latent-label recovery, namely, inference of the hidden labels $L_1,\dots,L_n$~\cite{jain2010data}.

\subsection{Estimation under variance misspecification}

In what follows, we focus on the first task and develop a theory of mean estimation in GMMs under variance misspecification; we return to clustering later in the work. Specifically, rather than restricting attention to the correctly specified likelihood, we consider a broader family of variance-mismatched objectives in which estimation is performed with an \emph{algorithmic} variance $\tau^2$ that may differ from the true variance $\sigma^2$. In the terminology of misspecified likelihood theory~\cite{white1982maximum}, this means that the fitted model is not perfectly matched to the true data-generating distribution. At the population level, this leads to the mismatched negative log-likelihood
\begin{align}
    \mathcal{L}_\tau(\bm{\mu}; \bm{\mu^\star}) \triangleq -\mathbb{E}_{Y\sim p_{\bm{\mu}^\star,\sigma}} \big[\log p_{\bm{\mu},\tau}(Y)\big].
    \label{eq:def_L_tau}
\end{align}
Thus, $\mathcal{L}_\tau(\bm{\mu}; \bm{\mu^\star})$ evaluates data generated from the true mixture $p_{\bm{\mu}^\star,\sigma}$~\eqref{eqn:def_marginal_dist} using a fitted GMM $p_{\bm{\mu},\tau}$ with variance $\tau^2$. The corresponding population minimizer is any point
$\bm{\mu}_\tau^\star \in \argmin_{\bm{\mu}} \mathcal{L}_\tau(\bm{\mu}; \bm{\mu^\star})$.
The special case $\tau=\sigma$ recovers the population maximum-likelihood target under the correctly specified model.

A natural scale-free parameter is the mismatch ratio
\begin{align}
    \label{eq:rho-def}
    \rho \triangleq \frac{\tau}{\sigma}.
\end{align}
This ratio organizes the estimation problem into three qualitatively distinct regimes: $\rho=1$ corresponds to the correct specification, $\rho<1$ to an \emph{under-smoothed} regime in which the fitted model is sharper than the truth, and $\rho>1$ to an \emph{over-smoothed} regime in which the fitted model is more diffuse. These distinctions become especially important in low SNR, a regime that arises naturally in computational imaging and structural biology, including single-particle cryogenic electron microscopy (cryo-EM)~\cite{lyumkis2019challenges,bendory2020single,scheres2012relion}. In high SNR, each observation is sufficiently informative that moderate variance mismatch typically has only a limited effect on the population target. In low SNR, by contrast, variance misspecification can qualitatively reshape the population landscape. One of the main findings of this work is that under-smoothing displaces the population minimizer away from the ground-truth means, whereas over-smoothing leads to collapsed configurations in which the estimated means merge toward the overall mixture center.

The following informal theorem summarizes this phase diagram. Formal statements are provided in Section~\ref{sec:main-results}.

\begin{thm}[Informal theorem: variance-mismatch phase diagram]
Assume the data are generated from the isotropic $K$-component GMM~\eqref{eq:intro_gmm}, while estimation is performed using the variance-mismatched population objective $\mathcal{L}_\tau$~\eqref{eq:def_L_tau} with algorithmic variance $\tau^2$. Then, the population landscape exhibits the following qualitatively distinct regimes:
\begin{enumerate}
    \item \emph{Correct specification ($\rho=1$).}
    The population minimizer coincides with the ground-truth means, up to permutation.

    \item \emph{Under-smoothed regime ($\rho<1$).}
    The population minimizer is biased away from the truth. This effect is especially pronounced in low SNR: for every fixed $\rho<1$, the population mismatch remains nonvanishing  and the mean-squared-error (MSE) satisfies
    \begin{align}
        \mathrm{MSE}\left(\bm{\mu}_{\tau}^\star,\bm{\mu}^\star\right) \gtrsim \sigma^2  \asymp \mathrm{SNR}^{-1}.       \label{eqn:MSE_lower_bound_under_smooth}
    \end{align}
    By contrast, in high SNR the effect of under-smoothing becomes much less significant.

    \item \emph{Over-smoothed regime ($\rho>1$).}
    When the fitted variance is larger than the true variance, the model tends to blur the cluster structure. As a result, the population objective favors configurations in which the estimated means, $\bm{\mu}_{\tau}^\star$, merge near the overall mixture center, and such merged solutions can become local minima. In the canonical symmetric two-component case, this occurs through a phase transition at
    \begin{align}
        \rho^2 > 1 + \mathrm{SNR}.        \label{eqn:phase_transition_th}
    \end{align}
\end{enumerate}
\end{thm}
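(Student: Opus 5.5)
We treat the three regimes separately, all at the population level, starting from the decomposition
\begin{align*}
\mathcal{L}_\tau(\bm{\mu};\bm{\mu}^\star)=\mathrm{KL}\big(p_{\bm{\mu}^\star,\sigma}\,\|\,p_{\bm{\mu},\tau}\big)+H\big(p_{\bm{\mu}^\star,\sigma}\big).
\end{align*}

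\textbf{Correct specification ($\rho=1$).} The entropy term does not depend on $\bm{\mu}$, so the minimizers are exactly the $\bm{\mu}$ with $p_{\bm{\mu},\sigma}=p_{\bm{\mu}^\star,\sigma}$. Identifiability of finite Gaussian location mixtures with common known covariance then gives $\bm{\mu}=\bm{\mu}^\star$ up to permutation. Identifiability can be seen from characteristic functions: dividing by $e^{-\sigma^2\|\xi\|^2/2}$ leaves $\frac1K\sum_\ell e^{i\langle\xi,\mu_\ell\rangle}$, and distinct exponentials are linearly independent.

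\textbf{Under-smoothed regime ($\rho<1$), low SNR.} We use the first-order condition $\nabla_{\mu_k}\mathcal{L}_\tau=0$. This is the fixed-point equation
\begin{align*}
\mu_k=\frac{\mathbb{E}[w_k(Y)Y]}{\mathbb{E}[w_k(Y)]},
\end{align*}
where $w_k$ are the posterior responsibilities under $p_{\bm{\mu},\tau}$. We normalize $\mu^\star_\ell=\sigma\theta_\ell$ with $\|\theta_\ell\|^2\asymp\mathrm{SNR}$ and take $\mathrm{SNR}\to0$.

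The plan is to show that no critical point can lie within distance $c\sigma$ of $\bm{\mu}^\star$, with $c=c(\rho)>0$. To do this, suppose $\|\bm{\mu}-\bm{\mu}^\star\|\le c\sigma$. Then all means sit inside a ball of radius $O(\sigma)$, while the weights use the sharper scale $\tau=\rho\sigma$. Rescaling $Y=\sigma Z$, we can compute $\mathbb{E}[w_k(Y)Y]$ by Taylor expansion in the small quantities $\theta_\ell$ and $(\mu_k-\mu_k^\star)/\sigma$.

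The key point is that the self-consistency map has linearization $\rho^{-2}$ times a mixture-covariance term. Plugging in $\bm{\mu}^\star$ leaves a residual of order $(\rho^{-2}-1)\sigma\cdot(\text{spread})$, and this residual does not vanish relative to the spread. The cleanest route may instead be the following. At low SNR, $p_{\bm{\mu}^\star,\sigma}\approx\mathcal{N}(\bar\mu,\sigma^2I+\Sigma^\star)$, where $\Sigma^\star$ is the between-component covariance. A $\tau$-mixture best matches this by spreading its means so that the fitted second moment satisfies $\tau^2I+\Sigma_\mu\approx\sigma^2I+\Sigma^\star$. This forces $\Sigma_\mu-\Sigma^\star\approx(1-\rho^2)\sigma^2I$.

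Taking traces gives
\begin{align*}
\tfrac1K\sum_k\|\mu_k-\bar\mu\|^2-\tfrac1K\sum_k\|\mu_k^\star-\bar\mu\|^2\approx(1-\rho^2)d\sigma^2.
\end{align*}
Any permutation-aligned MSE therefore obeys $\mathrm{MSE}\ge\big(\sqrt{A+(1-\rho^2)d\sigma^2}-\sqrt{A}\big)^2/\ldots$, where $A$ is the true spread. This is $\gtrsim\sigma^2$ when $A=O(\sigma^2)$. To make it rigorous, we would match the first two moment (cumulant) equations obtained from the stationarity condition, and control the error terms uniformly, with bounds of order $O(\mathrm{SNR}^{3/2})$ relative to the leading terms.

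\textbf{Over-smoothed regime ($\rho>1$), two components.} Take $\mu^\star_{0,1}=\pm\theta$ and restrict to the symmetric family $\mu_{0,1}=\pm m$; by symmetry, $m=0$ is a critical point. We compute the Hessian of $\mathcal{L}_\tau$ at $m=0$. Since $p_{(m,-m),\tau}=\varphi(\cdot;0,\tau^2)\cosh(\langle m,y\rangle/\tau^2)e^{-\|m\|^2/2\tau^2}$, the expansion is
\begin{align*}
-\log p=\mathrm{const}+\frac{\|m\|^2}{2\tau^2}-\frac{\langle m,Y\rangle^2}{2\tau^4}+O(\|m\|^4).
\end{align*}
Using $\mathbb{E}[YY^\top]=\sigma^2I+\theta\theta^\top$, the Hessian equals
\begin{align*}
\tau^{-2}I-\tau^{-4}(\sigma^2I+\theta\theta^\top).
\end{align*}
This is positive definite if and only if $\tau^2>\sigma^2+\|\theta\|^2$, i.e.\ $\rho^2>1+\mathrm{SNR}$ with $\mathrm{SNR}=\|\theta\|^2/\sigma^2$. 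Asymmetric perturbations $\mu_0=\mu_1=\delta$ also have positive curvature, since they just shift a single Gaussian. So the collapsed point is a local minimum exactly past the threshold \eqref{eqn:phase_transition_th}. The general-$K$ version replaces $\|\theta\|^2$ with $\lambda\sigma^2\mathrm{SNR}$, where $\lambda\sigma^2\mathrm{SNR}$ is the top eigenvalue of $\Sigma^\star$.

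\textbf{Main obstacle.} The hardest step is making the under-smoothed lower bound rigorous for the global minimizer rather than only near-truth critical points. This requires showing the minimizer lies in an $O(\sigma)$ ball so that the moment expansion applies. We would attempt this with a coercivity bound on $\mathcal{L}_\tau$.
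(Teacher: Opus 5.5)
Your arguments for correct specification (KL decomposition plus identifiability via characteristic functions) and for the over-smoothed threshold (second-order expansion at the collapsed point, split into common-shift and zero-sum directions) are sound. They agree with the paper's Hessian computation in Proposition~\ref{prop:collapse_generalK}. You should add only that the Hessian has no cross terms between the two subspaces, which a one-line computation confirms.

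The under-smoothed low-SNR part, however, has a genuine gap. Your plan is to show that no critical point of $\mathcal{L}_\tau$ lies within $c\sigma$ of $\bm{\mu}^\star$, and this is false. The collapsed configuration $(\bar{\mu}^\star,\dots,\bar{\mu}^\star)$ is a critical point for every $\tau$, since the gradient there is $-\frac{1}{K\tau^2}\mathbb{E}[Y-\bar{\mu}^\star]=0$. Its distance to $\bm{\mu}^\star$ is $\bigl(\sum_\ell\|\mu_\ell^\star-\bar{\mu}^\star\|^2\bigr)^{1/2}=\sigma\sqrt{K\,\mathrm{SNR}}$, a $\sigma$-independent constant and hence $o(\sigma)$. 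So no argument based only on the fixed-point equation can give the bound.

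The moment-matching route does not repair this. With $\tau$ fixed, stationarity in $\bm{\mu}$ yields no second-moment equation: the collapsed point is stationary yet violates $\tau^2 I+\Sigma_\mu\approx\sigma^2 I+\Sigma^\star$. That relation is also unattainable when $d\ge K$, because $\Sigma_\mu$ has rank at most $K-1$. For example, with $K=2$ and zero signal, the objective for a pair $\pm m$ factorizes along $m/\|m\|$ and its orthogonal complement. The optimal $\|m\|^2$ is therefore the one-dimensional value for every $d$, and $\tr\Sigma_\mu$ does not grow with $d$ as your trace identity predicts.

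The missing ingredient is already in your over-smoothed computation. In centered coordinates, the Hessian at the collapsed point restricted to zero-sum directions is $\bm{h}\mapsto\frac{1}{K\tau^2}\sum_\ell\|h_\ell\|^2-\frac{1}{K\tau^4}\sum_\ell h_\ell^\top\mathbb{E}[YY^\top]h_\ell$. For $\rho<1$ it is negative definite, because $\mathbb{E}[YY^\top]\succeq\sigma^2 I\succ\tau^2 I$. So the collapsed point is a strict saddle and cannot be the minimizer.

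The paper converts this into the $\sigma^2$ lower bound by a soft compactness argument, not an expansion around the truth:
\begin{enumerate}
\item Normalize $\sigma=1$ and encode the signal as $\beta=\sigma^{-1}$.
\item At $\beta=0$ the origin has strictly negative curvature (Lemma~\ref{lem:negative_curvature_zero_signal}). Hence, on a compact $V\ni\bm{0}$, the objective stays strictly above its minimum near $\bm{0}$.
\item Since $\mathcal{L}_\rho(\cdot;\beta\bm{\mu}^\star)\to\mathcal{L}_\rho(\cdot;\bm{0})$ uniformly on $V$, minimizers satisfy $\|\widehat{\bm{\mu}}(\beta)\|_F\ge\eta_\rho$ for small $\beta$ (Lemma~\ref{lem:persistence_uniform_convergence}).
\item Because $\beta\bm{\mu}^\star\to\bm{0}$, the triangle inequality gives $d_{\mathrm{perm}}\ge\eta_\rho/2$. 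Undoing the scaling gives $C_\rho\sigma^2$.
\end{enumerate}
This needs only that minimizers lie in a fixed compact set in normalized units. That is the weakest form of the localization you flag, and the paper builds it in by working on $V$; you do not need a small ball around $\bm{\mu}^\star$.

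Finally, you omit the high-SNR half of item~2. The paper proves it in Proposition~\ref{prop:highSNR_under_smoothed_upper_uniform}, using the stationarity identity $m_\ell-\mu_\ell^\star=\mathbb{E}[\gamma_\ell(Y)(Y-\mu_\ell^\star)]/\mathbb{E}[\gamma_\ell(Y)]$. Here $m_\ell$ is the matched fitted mean and $\gamma_\ell$ its responsibility under the $\tau$-model. This identity is combined with Gaussian tail bounds on the misallocated soft-assignment mass.
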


Figure~\ref{fig:3} visualizes this phase diagram in the $(\mathrm{SNR},\rho^2)$ plane. The line $\rho=1$ corresponds to the correctly specified model. Moving downward to $\rho<1$ enters the under-smoothed regime, in which low-SNR population drift occurs; see~\eqref{eqn:MSE_lower_bound_under_smooth}. Moving upward to $\rho>1$ enters the over-smoothed regime, where the fitted likelihood becomes increasingly diffuse and, beyond the phase-transition threshold~\eqref{eqn:phase_transition_th}, favors collapsed configurations in which distinct means merge.
A key practical consequence is that, in low SNR, variance misspecification is not a benign modeling error: \emph{even a mild mismatch in the fitted variance can substantially shift the population target away from the ground-truth means.}
%The line $\rho=1$ corresponds to correct specification. Moving downward corresponds to under-smoothed fitting ($\rho < 1$), whereas moving upward corresponds to over-smoothed fitting ($\rho > 1$). 

\begin{figure*}[t!]
  \centering
  \includegraphics[width=0.6\textwidth]{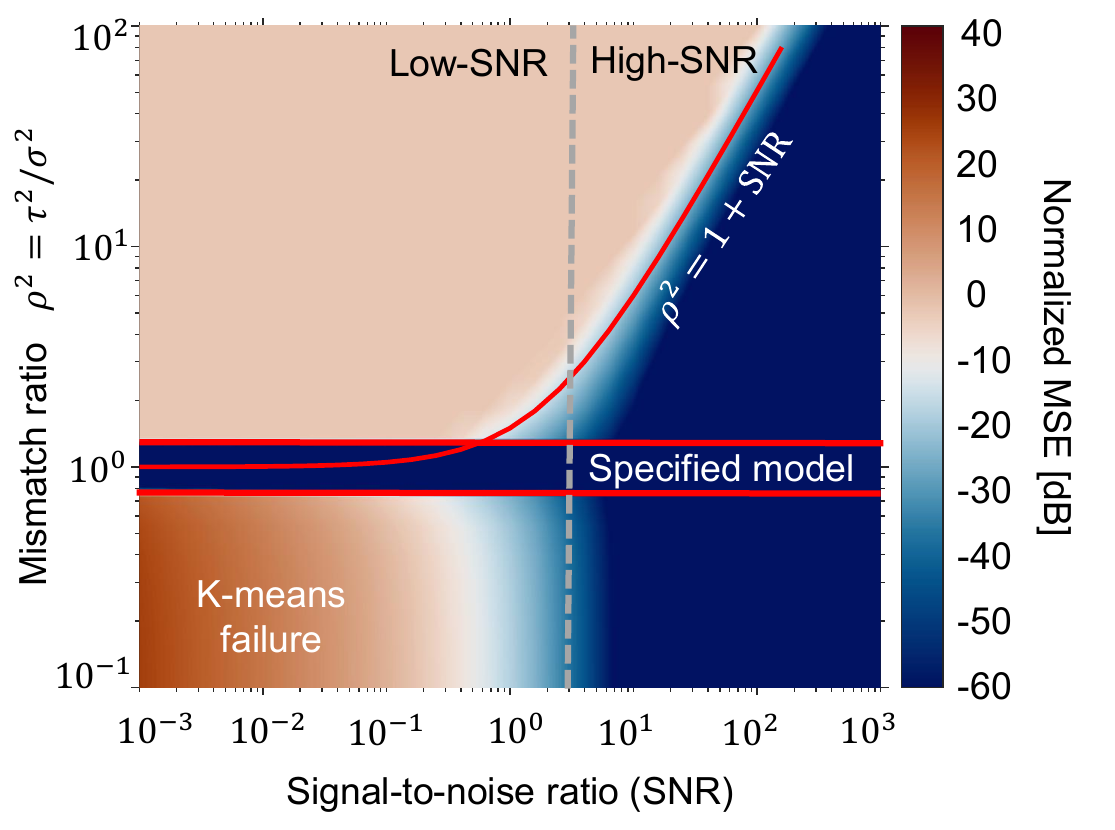}
  \caption{\textbf{Variance-mismatch phase diagram for component mean estimation.}
  Heat map of the normalized mean-squared error (MSE, in dB) of the population means estimator as a function of the signal-to-noise ratio (SNR, horizontal axis) and the variance-mismatch ratio $\rho^2=\tau^2/\sigma^2$ (vertical axis). The true data-generating variance is $\sigma^2$, while $\tau^2$ is the algorithmic variance used by the fitted objective. The line $\rho=1$ corresponds to correct specification, the limit $\tau\to0$ corresponds to hard assignment, and large $\rho$ corresponds to over-smoothed fitting. The diagram highlights two distinct mismatch-induced failure modes: low-SNR population drift in the under-smoothed regime, and collapse of component means to the mixture center in the over-smoothed regime above the phase-transition threshold~\eqref{eqn:phase_transition_th}.}
  \label{fig:3}
\end{figure*}

\subsection{Hard assignment and \texorpdfstring{$k$}{k}-means}
We now turn to a particularly important special case of the variance-mismatch family, namely, the hard-assignment regime. In many applications, component mean estimation is performed by iterative hard-assignment procedures: at each iteration, every observation is assigned to a single component based on the current center, and the means are then updated from the resulting partition. Algorithms of this type are attractive and widely used because of their simplicity and computational efficiency, and are closely related to classical clustering methods such as $k$-means~\cite{mcqueen1967some,lloyd1982least,forgy1965cluster}.

From the perspective developed above, hard assignment is not a separate estimation principle, but rather the zero-temperature limit $\tau \to 0$ of variance-mismatched likelihood fitting. This interpretation will be central in what follows: it places hard-assignment methods within the same continuum as under-smoothed likelihood-based estimation, and allows us to analyze their behavior through the common parameter $\rho=\tau/\sigma \to 0$; see Proposition~\ref{prop:hard_assignment_limit} for a formal statement. In addition, this interpretation leads to a coupled analysis of two problems: latent-label recovery (clustering) and component mean estimation. The key point is that both are governed by the same nearest-center geometry, and their behavior depends strongly on the SNR. In high SNR, hard assignments are typically reliable and the resulting estimated means remain close to the truth; in low SNR, label recovery becomes intrinsically unstable and the hard-assignment population target itself can drift away from the ground truth. Table~\ref{tab:snr_contribution_gmm_compact} summarizes these two regimes and their main consequences.

\begin{table*}[t]
\centering
\begingroup
\setlength{\tabcolsep}{3.5pt}
\renewcommand{\arraystretch}{1.12}
\footnotesize

\caption{\textbf{Low and high-SNR summary for clustering and hard-assignment mean estimation in GMMs}}
\label{tab:snr_contribution_gmm_compact}

\begin{tabular}{|L{2.0cm}||L{6.6cm}|L{6.6cm}|}
\hline
\textbf{Regime} &
\textbf{Latent-label clustering} &
\textbf{Mean estimation} \\
\hline\hline

\textbf{Low SNR } ($\mathrm{SNR} \ll 1$) &
Bayes misclassification error is near random guessing (Proposition~\ref{prop:low_snr_lb_classification}; Corollary~\ref{cor:upper-bound-mutual-information}).
\[ 
    P_{\mathrm{err}}^\star \ge 1-\frac{1}{K} - \frac{1}{2}\sqrt{\mathrm{SNR}}
\]

&
Hard-assignment population optimizer deviates from ground-truth by at least order $\sigma^2$ (Theorem~\ref{thm:lowSNR_under_smoothed_fixed_rho}). 
\[
    \mathrm{MSE}\!\left({\bm{\mu}}_{\mathrm{HA}}^\star(\sigma),\bm{\mu}^\star\right) \gtrsim \sigma^2 \asymp \mathrm{SNR}^{-1}
\]

\\
\hline

\textbf{High SNR} ($\sigma \ll \Delta_{\min}$) &
Bayes misclassification error decays exponentially in $\Delta_{\min}^2/\sigma^2$ (Proposition~\ref{prop:highSNR_exp_bound}). 
\[
    P_{\mathrm{err}}^\star \;\le\; \frac{K-1}{2}\exp\!\left(-\frac{\Delta_{\min}^2}{8\sigma^2}\right)
\]
&
Mean estimation becomes consistent and converges exponentially fast to ground-truth means (Corollary~\ref{prop:highSNR_mean_upper}).
\[
     \mathrm{MSE} \lesssim K^3\big(\Delta_{\max}^2+\sigma^2 d\big)\, \exp\!\left(-\frac{\Delta_{\min}^2}{8\sigma^2}\right)
\]
 
\\
\hline

\end{tabular}
\endgroup
\end{table*}

\begin{figure}
  \centering  \includegraphics[width=0.9\textwidth]{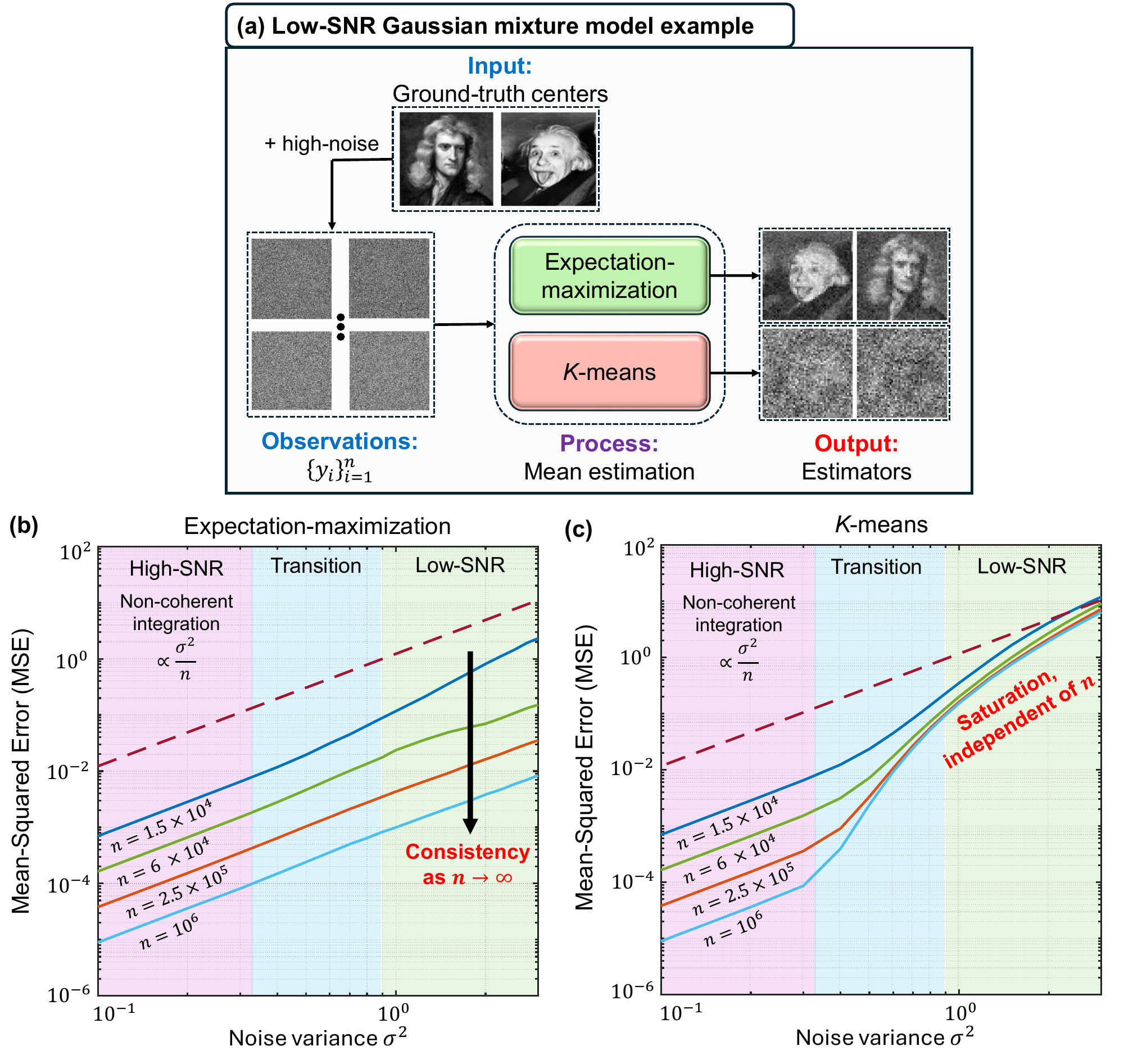}
    \caption{\textbf{Low-SNR Gaussian mixture mean estimation: maximum-likelihood versus hard assignment.}
    \textbf{(a)} Illustration of the low-SNR setting. Observations are generated by adding strong Gaussian noise to samples from a two-component GMM with ground-truth means illustrated by portrait images. The goal is to estimate the component means. In this example, maximum-likelihood estimation, implemented via the expectation-maximization (EM) algorithm, recovers accurate estimates of the means, whereas a hard-assignment-based procedure implemented via $k$-means fails and returns noise-like estimates. Simulation parameters: $\mathrm{SNR}=10^{-6}$, image size $50\times 50$, sample size $n=10^7$, and 300 iterations.
    \textbf{(b)} Maximum-likelihood estimation. MSE of the estimated means versus the noise variance $\sigma^2$ for several sample sizes $n$ (log-log scale). The shaded regions indicate high-SNR, transition, and low-SNR regimes. In the high-SNR regime, the error follows the non-coherent integration scaling $\propto \sigma^2/n$ (dashed guideline). In the low-SNR regime, increasing $n$ continues to reduce the error, consistent with the fact that the likelihood-based population target remains the true mean configuration under correct specification.
    \textbf{(c)} Hard assignment. The same metric and regimes for a hard-assignment-based estimator implemented via $k$-means. In contrast to maximum-likelihood estimation, the error exhibits \emph{low-SNR saturation}: for sufficiently large $\sigma^2$, increasing $n$ yields no further improvement. As explained in the paper, this behavior reflects the interaction between unreliable per-sample label recovery and the bias induced by variance misspecification in the hard-assignment limit. Simulation parameters: image size $15\times 15$, 100 iterations, ground-truth initialization, and 50 Monte Carlo trials.}    \label{fig:1}
\end{figure}

Figures~\ref{fig:1} and~\ref{fig:2} illustrate these phenomena in the two-component case ($K=2$). Figure~\ref{fig:1}(a) shows a low-SNR mixture in which individual observations are visually uninformative, even though the underlying estimation problem remains well posed. Figure~\ref{fig:1}(b) shows that correctly specified likelihood-based estimation, implemented here via expectation-maximization (EM), continues to improve as the sample size increases. Figure~\ref{fig:1}(c), by contrast, shows that a hard-assignment-based procedure exhibits a low-SNR saturation effect: beyond a certain point, increasing $n$ leads to little further improvement.

Figure~\ref{fig:2} explains this discrepancy from two complementary perspectives. Panel~(a) shows that, in low SNR, the Bayes error for latent-label recovery approaches random guessing. Panel~(b) shows that the population target associated with hard assignment develops a nonvanishing mismatch from the true means. Taken together, these figures suggest that the low-SNR failure of hard-assignment methods is not merely an optimization artifact, but rather a consequence of the interaction between label unrecoverability and the population bias induced by variance misspecification. 

\begin{figure*}[t!]
  \centering  \includegraphics[width=0.9\textwidth]{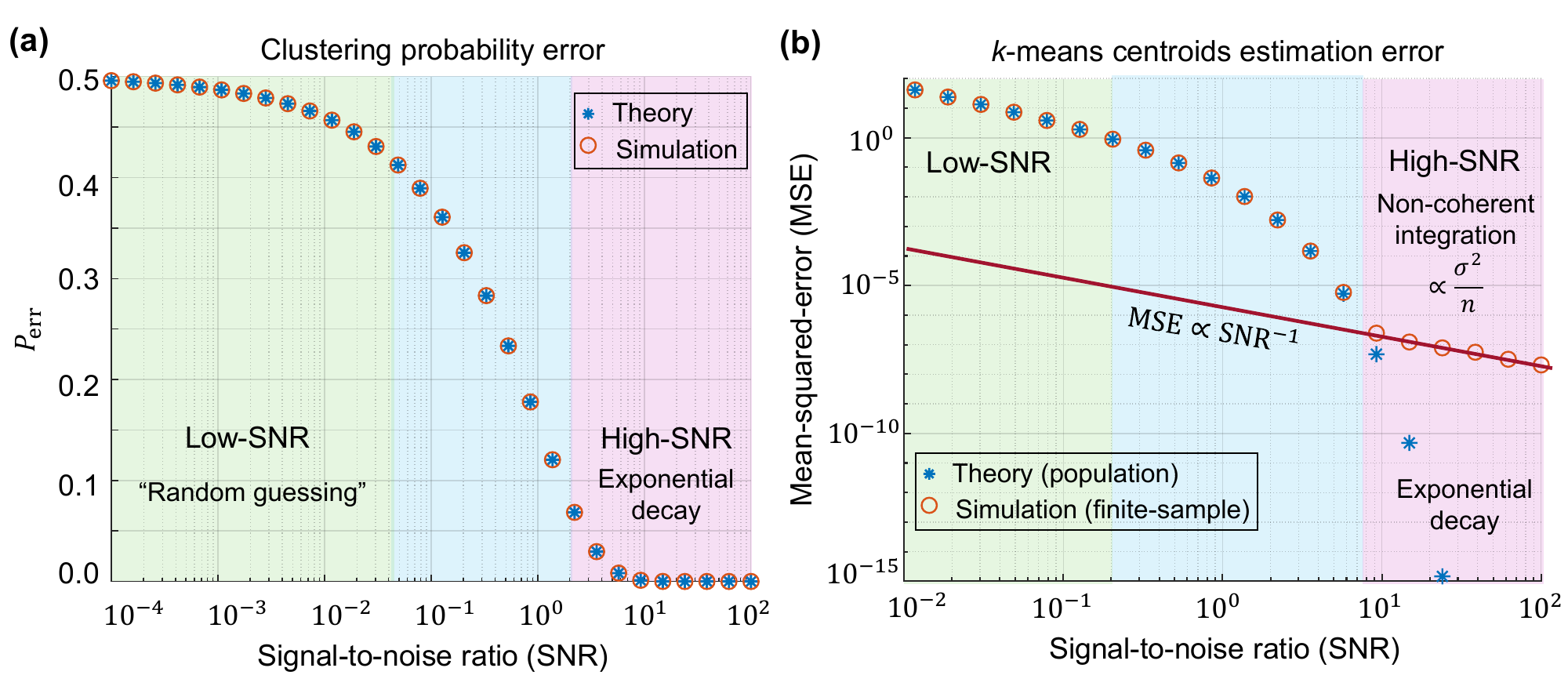}   
  \caption{\textbf{Clustering vs. mean estimation across SNR for a two-component GMM ($K = 2)$.}
  (a) \textbf{Clustering probability of error.} The misclassification probability $P_{\mathrm{err}}$ (Bayes-optimal / theoretical prediction in blue, simulation in orange) as a function of the SNR. In the low-SNR regime, $P_{\mathrm{err}}$ approaches $1/2$, i.e., essentially random guessing, indicating that reliable per-sample label recovery is information-theoretically impossible. In the high-SNR regime, $P_{\mathrm{err}}$ decays rapidly (exponentially in SNR), yielding accurate clustering.
  (b) \textbf{Hard-assignment mean estimation error.} MSE of the $k$-means centers versus SNR, comparing the population-level prediction (blue markers) to finite-sample simulations (orange markers). The shaded regions indicate low-SNR, transition, and high-SNR regimes. In the low-SNR regime, MSE does \emph{not} follow the  $1/n$ law, but rather saturates for every $n$ as seen in Figure~\ref{fig:1}(c). This emphasizes that hard-assignment mean estimation can remain fundamentally limited, and qualitatively different from maximum-likelihood, precisely in the regime where per-sample label recovery is information-theoretically impossible. Simulation parameters: $n=10^6$ observations, 100 Monte Carlo trials, and signal dimension $d=1$.}    \label{fig:2}
\end{figure*}

\subsection{Related work}
A broader statistical literature studies maximum likelihood estimation under model misspecification. In this setting, the estimator generally converges not to the data-generating parameter itself, but to the Kullback--Leibler projection of the true distribution onto the fitted model class~\cite{white1982maximum}. For GMMs, misspecification may arise from the number of components, mixing weights, covariance structure, or component variances. Prior work has studied the impact of covariance and variance misspecification on estimation and inference in normal and multivariate GMMs~\cite{lo2011bias,heggeseth2013impact}, as well as the behavior of EM under misspecified GMMs~\cite{dwivedi2018theoretical}. Our work focuses on a specific misspecification axis: the fitted variance $\tau^2$ may differ from the true variance $\sigma^2$. We show that the resulting population target depends sharply on both the variance ratio $\rho=\tau/\sigma$ and the SNR, leading to under-smoothed drift, over-smoothed collapse, and the hard-assignment limit as $\tau\to0$.

On the algorithmic side, a large body of work studies $k$-means and Lloyd-type procedures~\cite{lloyd1982least}. The $k$-means objective minimizes the empirical quantization loss, and Lloyd's algorithm is the classical alternating method for this problem: given current centers, it assigns each observation to its nearest center and then updates each center by averaging the assigned points.  Many works analyze Lloyd-type methods and related hard-assignment procedures under strong separation or high-SNR assumptions, where misclassification is rare and hard decisions are reliable~\cite{lu2016statistical,loffler2021optimality,ndaoud2022sharp}.  Complementary recent work has emphasized the opposite regime, showing that assignment-based algorithms can exhibit severe pathologies in pure-noise or high-dimensional high-noise settings. In high-dimensional high-noise regimes, Lloyd's algorithm may suffer from fixed-point failures~\cite{lederman2026catastrophic}. In pure-noise GMM settings, hard-assignment updates can produce estimates strongly correlated with the algorithm's initialization even when the data contain no signal~\cite{balanov2025confirmation}.

On the statistical side, classical results view $k$-means as empirical risk minimization for the quantization loss: empirical minimizers converge, under suitable conditions, to minimizers of the population quantization risk, rather than necessarily to the true component means of an underlying mixture model~\cite{pollard1981strong}. More recent work develops nonasymptotic excess-risk bounds for the same objective~\cite{levrard2015nonasymptotic,bachem2017uniform}. A complementary line of work connects hard-assignment procedures to mixture-model objectives through classification EM, variational formulations, and small-variance asymptotics, where probabilistic mixture objectives reduce to $k$-means-like objectives as the fitted covariance or temperature parameter vanishes~\cite{celeux1992cem,lucke2019k,kulis2012revisiting,jiang2012small,broderick2013mad,barber2012bayesian,mackay2003information}. Finally, the learnability literature emphasizes that mixture parameters can remain estimable even when latent-label recovery is difficult~\cite{kalai2010efficiently,moitra2010settling,vempala2004spectral,azizyan2013minimax}.

Our contribution differs from these works in two main respects. First, we study the full variance-mismatch path indexed by $\rho=\tau/\sigma$ and characterize how it changes the population geometry of component mean estimation. Second, we derive SNR-explicit consequences under the true noisy GMM, thereby connecting variance misspecification, hard assignment, and latent-label recovery within a single framework. Thus, while several existing literatures touch on parts of this picture, none addresses the present question in this unified form.

\section{Preliminaries and theoretical framework}
\label{sec:prelim}

In this section, we introduce the statistical model, notation, and objective functions used throughout the paper. We begin with the isotropic equal-weight GMM and the geometric quantities that govern the problem, and then define the metrics used to compare configurations of component means. Finally, we present the family of variance-mismatched likelihood objectives, and explain how hard assignment arises as the zero-temperature limit of this family.

\paragraph{Notation.}
Throughout, $\xrightarrow[]{\mathcal{D}}$, $\xrightarrow[]{\mathbb{P}}$, $\xrightarrow[]{\mathrm{a.s.}}$, and $\xrightarrow[]{\mathcal{L}^p}$ denote convergence in distribution, in probability, almost surely, and in $\mathcal{L}^p$, respectively.
We write $\|\cdot\|$ for the Euclidean norm on vectors and $\|\cdot\|_F$ for the Frobenius norm on matrices. For an integer $K\ge 1$, we write $[K]\triangleq\{0,1,\dots,K-1\}$. 
We employ the following standard asymptotic notation. For nonnegative sequences ${a_n}$ and ${b_n}$, we write $a_n=O(b_n)$ if there exists a constant $C<\infty$ and $n_0$ such that $a_n\le C b_n$ for all $n\ge n_0$, and $a_n=o(b_n)$ if $a_n/b_n\to 0$. We write $a_n=\omega(b_n)$ if $a_n/b_n\to\infty$, and $a_n=\Theta(b_n)$ if both $a_n=O(b_n)$ and $b_n=O(a_n)$ hold.

\subsection{Gaussian mixture model (GMM)}
\label{subsec:gmm}

We study the problem of component mean estimation under variance misspecification in the isotropic equal-weight GMM. In this model, each observation is generated by first drawing a latent component uniformly at random and then adding isotropic Gaussian noise to the corresponding component mean, as described in the next model. %Throughout the paper, this is the only mixture model under consideration.

\begin{model}[Isotropic equal-weight Gaussian mixture model]
\label{model:gmm_isotropic_equal}
Fix integers $K\ge 2$ and $d\ge 1$. For any collection of means $\bm{\mu}=(\mu_0,\dots,\mu_{K-1})\in(\mathbb{R}^d)^K$ and any variance level $s>0$, define the corresponding equal-weight isotropic GMM density by
\begin{align}
    p_{\bm{\mu},s}(y) \triangleq \frac{1}{K}\sum_{\ell=0}^{K-1}\varphi_d\!\left(y;\mu_\ell,s^2 I_d\right),
    \qquad y\in\mathbb{R}^d.    \label{eq:gmm_isotropic_density}
\end{align}
where $\varphi_d(y;\mu,\Sigma)$ denotes the density of the Gaussian distribution $\mathcal{N}(\mu,\Sigma)$ evaluated at $y\in\mathbb{R}^d$. The observations are generated from ground-truth means $\bm{\mu}^\star=(\mu_0^\star,\dots,\mu_{K-1}^\star)\in(\mathbb{R}^d)^K$ with pairwise distinct entries and true noise level $\sigma>0$, namely, 
\begin{align}
    y_1,\dots,y_n \stackrel{\mathrm{i.i.d.}}{\sim} p_{\bm{\mu}^\star,\sigma}.   \label{eq:gmm_isotropic_model}
\end{align}
Equivalently, conditioned on $L = \ell$, 
\begin{align}
    Y \mid (L=\ell) \sim \mathcal{N}(\mu_\ell^\star,\sigma^2 I_d),
    \qquad \ell\in[K].    \label{eq:gmm_latent_model}
\end{align}
\end{model}

\paragraph{Signal-to-noise ratio and minimum pairwise distance.}
Let $\bar{\mu}^\star \triangleq \frac{1}{K}\sum_{\ell=0}^{K-1}\mu_\ell^\star$ denote the global mixture mean. We measure the SNR by
\begin{align}
    \mathrm{SNR} \triangleq \frac{1}{\sigma^2} \frac{1}{K}\sum_{\ell=0}^{K-1}\|\mu_\ell^\star-\bar{\mu}^\star\|^2,
    \label{eq:snr-def}
\end{align}
and we quantify class separation by the minimum pairwise distance
\begin{align}
    \Delta_{\min} \triangleq \min_{\ell\neq j}\|\mu_\ell^\star-\mu_j^\star\|.
    \label{eq:delta_min_def}
\end{align}
Throughout, we assume that the true means are pairwise distinct, that is, $\Delta_{\min}>0$.

\paragraph{Invariant distance and mean-squared error.}

Since mixture components are identifiable only up to permutation of their labels, all metrics are taken modulo label permutations.

\begin{definition}[Permutation-invariant distance]
\label{def:perm_distance}
Let $\bm{\mu}=(\mu_\ell)_{\ell\in[K]}$ and $\bm{\nu}=(\nu_\ell)_{\ell\in[K]}$ be two configurations in $(\mathbb{R}^d)^K$. Define the permutation-invariant distance by
\begin{align}
    d_{\mathrm{perm}}(\bm{\mu},\bm{\nu}) \triangleq \min_{\pi\in S_K}\big\|\bm{\mu}-\pi\cdot\bm{\nu}\big\|_F,
    \label{eq:perm-distance}
\end{align}
where $S_K$ is the symmetric group on $K$ elements, $\pi\cdot\bm{\nu}$ denotes the permuted tuple $(\nu_{\pi(\ell)})_{\ell\in[K]}$, and $\|\cdot\|_F$ is the Frobenius norm on $(\mathbb{R}^d)^K$.
\end{definition}

Throughout, we denote by $\mathcal{Y}=\{y_i\}_{i=1}^n$ the observations generated from the isotropic equal-weight model~\eqref{eq:gmm_isotropic_model}.

\begin{definition}[Normalized mean-squared error]
\label{def:normalized_mse}
Let $\widehat{\bm{\mu}}(\mathcal{Y})$ be an estimator of the ground-truth configuration $\bm{\mu}^\star$ constructed from the observations $\mathcal{Y}$. We define the normalized MSE by
\begin{align}
    \mathrm{MSE}\bigl(\widehat{\bm{\mu}}(\mathcal{Y}),\bm{\mu}^\star\bigr) \triangleq \frac{1}{\|\bm{\mu}^\star\|_F^2}\, \mathbb{E}\!\left[ d_{\mathrm{perm}}^2\bigl(\widehat{\bm{\mu}}(\mathcal{Y}),\bm{\mu}^\star\bigr) \right],
    \label{eq:mseDef_gmm}
\end{align}
where the expectation is taken with respect to the joint distribution of the observed sample $\mathcal{Y}=\{y_i\}_{i=1}^n$.
\end{definition}

\subsection{Maximum-likelihood estimation objectives}
\label{subsec:likelihood_and_kmeans}

 For a candidate mean configuration $\bm{\mu}=(\mu_0,\dots,\mu_{K-1})\in(\mathbb{R}^d)^K$ and an algorithmic variance level $\tau>0$, define the empirical variance-mismatched negative log-likelihood by
\begin{align}
    \mathcal{L}_{\tau}^{(n)}(\bm{\mu};\mathcal{Y}) \triangleq -\frac{1}{n}\sum_{i=1}^n \log p_{\bm{\mu},\tau}(y_i).
    \label{eq:empirical_loglik}
\end{align}
The associated empirical estimator is then defined as a minimizer
\begin{align}
    \widehat{\bm{\mu}}_{\tau}^{(n)} \in \argmin_{\bm{\mu}\in(\mathbb{R}^d)^K} \mathcal{L}_{\tau}^{(n)}(\bm{\mu};\mathcal{Y}).
    \label{eq:mle_general}
\end{align}

Passing to the population level, the strong law of large numbers implies that, for each fixed $\bm{\mu}$ and $\tau>0$,
\begin{align}
    \mathcal{L}_{\tau}^{(n)}(\bm{\mu};\mathcal{Y})
    \xrightarrow[n\to\infty]{\mathrm{a.s.}}
    \mathcal{L}_{\tau}(\bm{\mu}),
    \label{eq:lln_pointwise}
\end{align}
where $\mathcal{L}_{\tau}$ is the population objective defined in~\eqref{eq:def_L_tau}. Accordingly, the infinite-data target of the empirical estimator is any minimizer of $\mathcal{L}_{\tau}$, which we denote by
\begin{align}
    \bm{\mu}_\tau^\star \in \argmin_{\bm{\mu}\in(\mathbb{R}^d)^K} \mathcal{L}_{\tau}(\bm{\mu}).
    \label{eqn:quasi-MLE}
\end{align}
This is the population quasi-maximum-likelihood estimator associated with the fitted variance $\tau^2$. Equivalently, and as is standard in misspecified likelihood theory~\cite{white1982maximum}, $\bm{\mu}_\tau^\star$ is the Kullback-Leibler projection of the true distribution onto the model class with fitted variance $\tau^2$:
\begin{align}
    \argmin_{\bm{\mu}\in(\mathbb{R}^d)^K}\mathcal{L}_{\tau}(\bm{\mu}) = \argmin_{\bm{\mu}\in(\mathbb{R}^d)^K} \; D_\mathrm{KL}\bigl(p_{\bm{\mu}^\star,\sigma}\,\|\,p_{\bm{\mu},\tau}\bigr).    \label{eq:KL_projection}
\end{align}

\paragraph{Correctly specified likelihood model.}

The special case $\tau=\sigma$ corresponds to the correctly specified model. Then,  $\mathcal{L}_{\sigma}^{(n)}$ is the usual negative log-likelihood, and any minimizer $\widehat{\bm{\mu}}_{\sigma}^{(n)} \in \argmin_{\bm{\mu}\in(\mathbb{R}^d)^K} \mathcal{L}_{\sigma}^{(n)}(\bm{\mu};\mathcal{Y})$ is the ordinary maximum-likelihood estimator. Under correct specification, maximum-likelihood estimation serves as the classical benchmark in parametric inference: under standard regularity conditions, it is consistent, asymptotically normal, and asymptotically efficient. Thus, the correctly specified model serves as the natural reference point against which the variance-mismatched regimes are compared.
In practice, a standard method for minimizing $\mathcal{L}_{\sigma}^{(n)}$ is the expectation-maximization (EM) algorithm~\cite{dempster1977maximum}. %In particular, every EM fixed point is a stationary point of the empirical correctly specified likelihood.

\subsection{Hard assignment as a variance-misspecification limit}
\label{subsec:hard_assignment_limit}

We now make precise the connection between hard assignment and variance misspecification. The fitted variance $\tau^2$ defines a family of likelihood-based objectives $\{\mathcal{L}_\tau\}_{\tau>0}$, with the correctly specified model corresponding to $\tau=\sigma$. The limit $\tau\to 0$ yields the hard-assignment objective, showing that hard assignment is not a separate estimation principle, but the zero-temperature limit of the variance-mismatched likelihood family.

\paragraph{Hard-assignment objective.}
At the empirical level, hard assignment is described by the objective
\begin{align}
    \Phi_n(\bm{\mu};\mathcal{Y}) \triangleq \frac{1}{n}\sum_{i=1}^n \min_{\ell\in[K]}\|y_i-\mu_\ell\|^2,   \label{eq:kmeans_empirical}
\end{align}
with corresponding estimator
\begin{align}
    \widehat{\bm{\mu}}_{\mathrm{HA}}^{(n)} \in \argmin_{\bm{\mu}\in(\mathbb{R}^d)^K} \Phi_n(\bm{\mu};\mathcal{Y}). \label{eq:kmeans_empirical_minimizer}
\end{align}
Its population counterpart is
\begin{align}
    \Phi(\bm{\mu}) \triangleq \mathbb{E}_{Y\sim p_{\bm{\mu}^\star,\sigma}} \Big[\min_{\ell\in[K]}\|Y-\mu_\ell\|^2\Big], \label{eq:kmeans_population}
\end{align}
and the corresponding population hard-assignment target is any minimizer
\begin{align}
    \bm{\mu}_{\mathrm{HA}}^\star \in \argmin_{\bm{\mu}\in(\mathbb{R}^d)^K} \Phi(\bm{\mu}).    \label{eq:kmeans_population_minimizer}
\end{align}
For each fixed $\bm{\mu}$, the strong law of large numbers yields $\Phi_n(\bm{\mu};\mathcal{Y})\xrightarrow{\mathrm{a.s.}}\Phi(\bm{\mu})$, as $n \to \infty$, so $\bm{\mu}_{\mathrm{HA}}^\star$ is the infinite-data target of empirical hard assignment.

To state the limit result cleanly, and to avoid technical issues related to existence of minimizers, we restrict attention to a compact parameter set $\mathcal{U}\subset(\mathbb{R}^d)^K$.

\begin{assum}[Well-posedness and uniqueness up to permutation]
\label{ass:unique_minimizers_up_to_perm}
Assume that $\mathcal{U}\subset(\mathbb{R}^d)^K$ is compact. For every $\tau>0$, suppose that the population variance-mismatched objective $\mathcal{L}_\tau$, defined in~\eqref{eq:def_L_tau}, admits a unique minimizer over $\mathcal{U}$ up to permutation: namely, there exists $\bm{\mu}_\tau^\star\in\mathcal{U}$ such that
\begin{align}
    \argmin_{\bm{\mu}\in\mathcal{U}} \mathcal{L}_\tau(\bm{\mu}; \bm{\mu^\star}) = \{\pi\cdot\bm{\mu}_\tau^\star:\pi\in S_K\}.    \label{eq:unique_Ltau_up_to_perm_assump}
\end{align}
Likewise, suppose that the population hard-assignment risk $\Phi$, defined in~\eqref{eq:kmeans_population}, admits a unique minimizer over $\mathcal{U}$ up to permutation: namely, there exists $\bm{\mu}_{\mathrm{HA}}^\star\in\mathcal{U}$ such that
\begin{align}
    \argmin_{\bm{\mu}\in\mathcal{U}} \Phi(\bm{\mu}) = \{\pi\cdot\bm{\mu}_{\mathrm{HA}}^\star:\pi\in S_K\}.    \label{eq:unique_HA_up_to_perm_assump}
\end{align}
\end{assum}

The next proposition, which is proved in Appendix~\ref{sec:proof_hard_assignment_limit}, shows that hard assignment is the zero-temperature limit of the variance-mismatched likelihood family. In this sense, hard assignment corresponds to fitting an increasingly sharp GMM to data generated at a fixed nonzero noise level $\sigma^2$. This interpretation will be important in the sequel: the behavior of hard assignment in the low- and high-SNR regimes should be understood as an extreme manifestation of the broader variance-mismatch phenomenon.

\begin{proposition}[Hard assignment is the $\tau\to0$ limit]
\label{prop:hard_assignment_limit}
Under Assumption~\ref{ass:unique_minimizers_up_to_perm}, the following holds:
\begin{enumerate}
    \item For every fixed $\bm{\mu}\in\mathcal{U}$ and every $\tau>0$,
    \begin{align}
        \mathcal{L}_\tau(\bm{\mu}; \bm{\mu^\star}) = \frac{d}{2}\log(2\pi\tau^2)+\log K+\frac{1}{2\tau^2}\Phi(\bm{\mu})+r_\tau(\bm{\mu}),        \label{eq:Ltau_expansion_pointwise}
    \end{align}
    where
    \begin{align}
        r_\tau(\bm{\mu}) \triangleq -\mathbb{E}_{Y\sim p_{\bm{\mu}^\star,\sigma}} \left[ \log\left(\sum_{\ell\in[K]}\exp\left(-\frac{\|Y-\mu_\ell\|^2-\min_{j\in[K]}\|Y-\mu_j\|^2}{2\tau^2} \right) \right) \right].
        \label{eq:rtau_def}
    \end{align}

    \item The remainder satisfies the uniform bounds $-\log K\le r_\tau(\bm{\mu})\le 0$ for all $\bm{\mu}\in\mathcal{U}$ and all $\tau>0$, and for every fixed $\bm{\mu}\in\mathcal{U}$,
    \begin{align}
        r_\tau(\bm{\mu}) \xrightarrow[\tau\to0^+]{} 0.
        \label{eq:rtau_to_0}
    \end{align}

    \item If $\bm{\mu}_\tau^\star\in\argmin_{\bm{\mu}\in\mathcal{U}}\mathcal{L}_\tau(\bm{\mu}; \bm{\mu^\star})$ and $\bm{\mu}_{\mathrm{HA}}^\star\in\argmin_{\bm{\mu}\in\mathcal{U}}\Phi(\bm{\mu})$ are the corresponding minimizers, then
    \begin{align}
        \lim_{\tau\to0^+} d_{\mathrm{perm}}\big(\bm{\mu}_\tau^\star,\bm{\mu}_{\mathrm{HA}}^\star\big)=0.  \label{eq:mu_tau_to_mu_HA}
    \end{align}
\end{enumerate}

\end{proposition}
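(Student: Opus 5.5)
The plan is to prove the three parts in order, each feeding the next. For part 1, write the mismatched density as
\[
p_{\bm{\mu},\tau}(y)=\frac{1}{K}(2\pi\tau^2)^{-d/2}\sum_{\ell}\exp\!\left(-\frac{\|y-\mu_\ell\|^2}{2\tau^2}\right).
\]
Factor out $\exp(-m(y)/(2\tau^2))$, where $m(y)\triangleq\min_j\|y-\mu_j\|^2$. This gives the pointwise identity
\[
-\log p_{\bm{\mu},\tau}(y)=\frac d2\log(2\pi\tau^2)+\log K+\frac{m(y)}{2\tau^2}-\log S_\tau(y),
\]
where $S_\tau(y)=\sum_\ell \exp\bigl(-(\|y-\mu_\ell\|^2-m(y))/(2\tau^2)\bigr)$. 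Take expectations under $p_{\bm{\mu}^\star,\sigma}$. All terms are integrable: $m(Y)\le\|Y-\mu_0\|^2$ has a finite mean, and $\log S_\tau$ is bounded, as shown next. This yields \eqref{eq:Ltau_expansion_pointwise}.

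For part 2, each exponent in $S_\tau(y)$ is nonpositive, and at least one (the minimizing index) equals zero. Hence $1\le S_\tau(y)\le K$, so $-\log K\le r_\tau\le 0$ uniformly in $\bm{\mu}$ and $\tau$. For pointwise convergence, fix $\bm{\mu}$. For Lebesgue-a.e.\ $y$, and hence $p_{\bm{\mu}^\star,\sigma}$-a.s.\ since that law is absolutely continuous, the minimizer in $m(y)$ is unique. The tie sets $\{\|y-\mu_\ell\|=\|y-\mu_j\|\}$ are hyperplanes when $\mu_\ell\ne\mu_j$. If some $\mu_\ell$ coincide, ties persist, but then $S_\tau(y)\to$ the multiplicity of the nearest center, so the statement needs care. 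I would handle this by noting that for distinct entries $S_\tau(y)\to1$ a.s., and dominated convergence with bound $\log K$ gives $r_\tau\to0$. For configurations with repeated entries, the limit is $-\mathbb{E}\log(\text{multiplicity})$ rather than $0$. I would either restrict the claim to configurations with distinct entries or observe that only the uniform bound is used in part 3. In the proof I will flag this and rely on the uniform bound only.

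Part 3 is the main step, an argmin-continuity argument. Define $F_\tau(\bm{\mu})\triangleq 2\tau^2\bigl(\mathcal{L}_\tau(\bm{\mu})-\tfrac d2\log(2\pi\tau^2)-\log K\bigr)=\Phi(\bm{\mu})+2\tau^2 r_\tau(\bm{\mu})$. Its minimizers coincide with those of $\mathcal{L}_\tau$. By part 2,
\[
\sup_{\bm{\mu}\in\mathcal{U}}|F_\tau-\Phi|\le 2\tau^2\log K\to0,
\]
so $F_\tau\to\Phi$ uniformly on $\mathcal{U}$ without any pointwise limit of $r_\tau$. Next I verify that $\Phi$ is continuous on $\mathcal{U}$. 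Indeed $\bm{\mu}\mapsto\min_\ell\|y-\mu_\ell\|^2$ is continuous, and dominated convergence applies on the compact set $\mathcal{U}$ using the bound $2\|Y\|^2+2\max_{\mathcal{U}}\|\mu_\ell\|^2$.

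Now argue by contradiction. Suppose $\tau_k\to0$ and $d_{\mathrm{perm}}(\bm{\mu}_{\tau_k}^\star,\bm{\mu}_{\mathrm{HA}}^\star)\ge\varepsilon$. By compactness, pass to a subsequence with $\bm{\mu}_{\tau_k}^\star\to\bar{\bm{\mu}}\in\mathcal{U}$. Then
\[
\Phi(\bar{\bm{\mu}})=\lim\Phi(\bm{\mu}_{\tau_k}^\star)\le\lim\bigl(F_{\tau_k}(\bm{\mu}_{\tau_k}^\star)+2\tau_k^2\log K\bigr)\le\lim\bigl(F_{\tau_k}(\bm{\mu}_{\mathrm{HA}}^\star)+2\tau_k^2\log K\bigr)=\Phi(\bm{\mu}_{\mathrm{HA}}^\star).
\]
So $\bar{\bm{\mu}}$ minimizes $\Phi$ on $\mathcal{U}$. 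By Assumption~\ref{ass:unique_minimizers_up_to_perm}, $\bar{\bm{\mu}}=\pi\cdot\bm{\mu}_{\mathrm{HA}}^\star$ for some $\pi\in S_K$. Since $d_{\mathrm{perm}}$ is continuous and permutation invariant, $d_{\mathrm{perm}}(\bm{\mu}_{\tau_k}^\star,\bm{\mu}_{\mathrm{HA}}^\star)\to0$, a contradiction. The only delicate points are the uniform control of $r_\tau$, already supplied by part 2, and the continuity of $\Phi$. I expect no further obstacles beyond the tie-set caveat in part 2.
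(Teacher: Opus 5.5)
Your proposal is correct and follows the paper's proof essentially step for step. It uses the same factorization by $\exp(-m(y)/(2\tau^2))$, the same bound $1\le S_\tau\le K$, and the same rescaled objective $F_\tau=\Phi+2\tau^2 r_\tau$ with uniform error $2\tau^2\log K$; your sequential-compactness contradiction is equivalent to the paper's well-separated-minimum ($\delta_\varepsilon$) argument. Your caveat about repeated entries is well founded: the paper's Step~2 tacitly assumes the entries of $\bm{\mu}$ are distinct (at the collapsed configuration, for instance, $r_\tau\equiv-\log K$ for every $\tau$), and, as you note, part~3 uses only the uniform bound, so nothing downstream is affected.
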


\section{Component mean estimation under variance misspecification}
\label{sec:main-results}

We now develop the theory for estimating the component means under variance misspecification. To parameterize the mismatch level in a scale-free way, recall the mismatch ratio $\rho=\tau/\sigma$ from~\eqref{eq:rho-def}. It defines three regimes: under-smoothed fitting ($\rho<1$), correct specification ($\rho=1$; see Section~\ref{subsec:likelihood_and_kmeans}), and over-smoothed fitting ($\rho>1$). In this section, we analyze the over-smoothed regime in Section~\ref{sec:oversmoothed}, the under-smoothed regime in Section~\ref{sec:undersmoothed}, and conclude with a finite-sample interpretation in Section~\ref{sec:finite-sample-analysis}.

%Figure~\ref{fig:3} summarizes this picture in the $(\mathrm{SNR},\rho^2)$ plane. 

\subsection{Over-smoothed regime}
\label{sec:oversmoothed}

We begin with the \emph{over-smoothed} regime, in which the fitted variance exceeds the true one, namely,  $\rho=\tau/\sigma>1$. In this regime, the fitted likelihood treats the observations as noisier than they actually are and therefore tends to blur the cluster structure. As a result, the population objective leads to collapsed configurations in which several estimated means merge.

A canonical collapsed configuration is
\begin{align}
    \bm{\mu}_{\mathrm{coll}} \triangleq (\bar{\mu}^\star,\dots,\bar{\mu}^\star),
    \qquad
    \bar{\mu}^\star \triangleq \frac{1}{K}\sum_{\ell\in[K]} \mu_\ell^\star,
\end{align}
that is, the configuration in which all estimated means coincide with the global mixture center. %This behavior is qualitatively different from under-smoothing: rather than favoring sharper separation, over-smoothing suppresses effective cluster separation and stabilizes merged solutions.
For a general number of components $K$, the collapse threshold is governed by the covariance matrix of the true means configuration,
\begin{align}
    \Sigma_\mu \triangleq \frac{1}{K}\sum_{\ell\in[K]} (\mu_\ell^\star-\bar{\mu}^\star)(\mu_\ell^\star-\bar{\mu}^\star)^\top,
    \label{eqn:centered-second-moment-matrix}
\end{align}
whose largest eigenvalue, denoted by $\lambda_{\max}(\Sigma_\mu)$, identifies the direction of maximal variation between clusters.

\begin{proposition}[Collapse threshold for general $K$]
\label{prop:collapse_generalK}
Assume an equal-weight isotropic GMM as in Model~\ref{model:gmm_isotropic_equal}, with true means $\{\mu_\ell^\star\}_{\ell\in[K]}\subset\mathbb{R}^d$ and noise variance $\sigma^2 I_d$. Let $\bar{\mu}^\star\triangleq \frac{1}{K}\sum_{\ell\in[K]}\mu_\ell^\star$, and let $\Sigma_\mu$ be defined by~\eqref{eqn:centered-second-moment-matrix}. Then the following hold:
\begin{enumerate}
    \item The collapsed configuration $\bm{\mu}_{\mathrm{coll}}=(\bar{\mu}^\star,\dots,\bar{\mu}^\star)$ is a stationary point of $\mathcal{L}_\tau$, and it is a local minimum if and only if
    \begin{align}
        \rho^2 \ge 1+\frac{\lambda_{\max}(\Sigma_\mu)}{\sigma^2}.
    \end{align}

    \item  By~\eqref{eq:snr-def}, $\tr(\Sigma_\mu)=\sigma^2\,\mathrm{SNR}$, and thus the collapse threshold satisfies
    \begin{align}
        \label{eqn:lambda_max_bounds}
        1+\frac{\mathrm{SNR}}{d} \le 1+\frac{\lambda_{\max}(\Sigma_\mu)}{\sigma^2} \le 1+\mathrm{SNR}.
    \end{align}
\end{enumerate}
\end{proposition}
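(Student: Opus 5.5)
The plan is to compute the gradient and Hessian of $\mathcal{L}_\tau$ at $\bm{\mu}_{\mathrm{coll}}$ explicitly, using the softmax (responsibility) structure of the mixture log-likelihood. Write $w_\ell(y;\bm{\mu})\triangleq \varphi_d(y;\mu_\ell,\tau^2I_d)/\sum_j\varphi_d(y;\mu_j,\tau^2 I_d)$. A direct differentiation gives
\begin{align*}
\nabla_{\mu_\ell}\mathcal{L}_\tau(\bm{\mu})=-\frac{1}{\tau^2}\,\mathbb{E}\big[w_\ell(Y;\bm{\mu})(Y-\mu_\ell)\big].
\end{align*}
At the collapsed point all $w_\ell\equiv 1/K$, so the gradient equals $-\frac{1}{K\tau^2}(\mathbb{E}[Y]-\bar{\mu}^\star)=0$, since $\mathbb{E}[Y]=\bar{\mu}^\star$ under $p_{\bm{\mu}^\star,\sigma}$. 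This proves stationarity.

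For the second-order analysis, differentiate once more, using $\nabla_{\mu_k}w_\ell=w_\ell(\delta_{\ell k}-w_k)(y-\mu_k)/\tau^2$. At the collapsed point, with $Z\triangleq Y-\bar{\mu}^\star$, the $d\times d$ blocks are
\begin{align*}
H_{\ell k}=\frac{\delta_{\ell k}}{K\tau^2}I_d-\frac{1}{\tau^4}\Big(\frac{\delta_{\ell k}}{K}-\frac{1}{K^2}\Big)M,\qquad M\triangleq\mathbb{E}[ZZ^\top]=\sigma^2 I_d+\Sigma_\mu,
\end{align*}
where the formula for $M$ uses the mixture covariance. Thus $H=\frac{1}{K\tau^2}I_{Kd}-\frac{1}{K\tau^4}\,P\otimes M$, with $P=I_K-\frac1K\mathbf{1}\mathbf{1}^\top$ the centering projector.

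The spectrum of $H$ splits into two pieces:
\begin{itemize}
\item On the translation directions ($\mathbf{1}\otimes v$) the eigenvalue is $1/(K\tau^2)>0$.
\item On the complement ($P=1$) the eigenvalues are $\frac{1}{K\tau^2}\big(1-\frac{\sigma^2+\lambda_i(\Sigma_\mu)}{\tau^2}\big)$.
\end{itemize}
So $H\succeq0$ if and only if $\tau^2\ge\sigma^2+\lambda_{\max}(\Sigma_\mu)$, that is, $\rho^2\ge 1+\lambda_{\max}/\sigma^2$. This is where the main obstacle sits. Strict inequality gives a strict local minimum, and strict reverse inequality gives a descent direction, hence not a local minimum. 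The boundary case $\rho^2=1+\lambda_{\max}/\sigma^2$ has a degenerate Hessian and requires higher-order terms.

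For the boundary case, I would restrict to the line $\mu_\ell=\bar\mu^\star+t a_\ell v$ with $\sum a_\ell=0$ and $v$ a top eigenvector. I would then expand $\mathcal{L}_\tau$ to fourth order in $t$ via cumulants of the log-sum-exp; the third-order term should vanish after also perturbing along directions with $\sum a_\ell^3=0$ or by symmetry. The check is that the quartic coefficient is nonnegative (it is $-\tfrac{1}{\tau^8}$ times a negative cumulant term plus a positive one, mirroring the $K=2$ pitchfork). Alternatively, I would use the concavity-type fact that $\mathcal{L}_\tau$ restricted near collapse is bounded below by its collapsed value because the fitted mixture's KL to a more concentrated truth is minimized by merging. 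If this turns out delicate, I would state the iff with the boundary handled by a continuity/limiting argument in $\tau$: collapse is a local minimum for all larger $\tau$, and local-minimality of a fixed point is closed under the limit for analytic objectives. I would expect to spend most effort here.

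Part 2 is immediate. We have $\tr(\Sigma_\mu)=\frac1K\sum\|\mu_\ell^\star-\bar\mu^\star\|^2=\sigma^2\mathrm{SNR}$ by \eqref{eq:snr-def}. Since $\Sigma_\mu\succeq0$ is $d\times d$, we get $\tr/d\le\lambda_{\max}\le\tr$. Dividing by $\sigma^2$ and adding $1$ yields \eqref{eqn:lambda_max_bounds}.
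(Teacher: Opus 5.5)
Your stationarity argument, your Hessian $H=\frac{1}{K\tau^2}I_{Kd}-\frac{1}{K\tau^4}P\otimes M$, and Part~2 coincide with the paper's proof, which uses the blocks of Lemma~\ref{lem:hessian_blocks} and the same common-shift/zero-sum splitting. This rigorously gives two things: a strict local minimum when $\rho^2>1+\lambda_{\max}(\Sigma_\mu)/\sigma^2$, and a descent direction when $\rho^2<1+\lambda_{\max}(\Sigma_\mu)/\sigma^2$. You are right that equality is degenerate; the paper passes over it by equating ``Hessian PSD'' with ``local minimum''.

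\textbf{The gap.} Your plan for the boundary case cannot succeed, because for $K\ge 3$ the boundary claim is false in general. Work in centered coordinates and let $v$ be a unit top eigenvector of $\Sigma_\mu$. Take $\bm{h}=(a_\ell v)_{\ell\in[K]}$ with $\sum_\ell a_\ell=0$, set $s=\langle v,Y\rangle$, and let $A$ be uniform on $\{a_\ell\}$. Expand $\log\frac1K\sum_\ell\exp\bigl(ta_\ell s/\tau^2-t^2a_\ell^2/(2\tau^2)\bigr)$ in cumulants, using $\mathbb{E}[s]=0$, $\mathbb{E}[s^2]=\sigma^2+\lambda_{\max}(\Sigma_\mu)$ and $\mathbb{E}[s^3]=\frac1K\sum_j\langle v,\mu_j^\star\rangle^3$. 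This gives
\begin{align*}
\mathcal{L}_\tau(t\bm{h})-\mathcal{L}_\tau(\bm{0})=\frac{t^2\,\mathbb{E}[A^2]}{2\tau^2}\Bigl(1-\frac{\sigma^2+\lambda_{\max}(\Sigma_\mu)}{\tau^2}\Bigr)-\frac{t^3\,\mathbb{E}[A^3]\,\mathbb{E}[s^3]}{6\tau^6}+O(t^4).
\end{align*}
At equality the quadratic term vanishes, but the cubic one need not. Take the $K=3$ regular simplex of Corollary~\ref{cor:collapse_simplex}, with $v=v_0$ and $a=(2,-1,-1)$. Then $\mathbb{E}[A^3]=2$ and $\mathbb{E}[s^3]=\beta^3/4$. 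So at $\rho^2=1+\mathrm{SNR}/2$,
\begin{align*}
\mathcal{L}_\tau(t\bm{h})-\mathcal{L}_\tau(\bm{0})=-\frac{\beta^3t^3}{12\tau^6}+O(t^4)<0
\end{align*}
for small $t>0$. The collapsed configuration is therefore not a local minimum at the threshold, and Corollary~\ref{cor:collapse_simplex} fails at equality as well.

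\textbf{Why each of your routes fails.}
\begin{itemize}
\item Local minimality must hold along every direction, so you may not restrict to directions with $\sum_\ell a_\ell^3=0$.
\item No symmetry kills the cubic term except when $K=2$. There $a=(a_1,-a_1)$ forces $\mathbb{E}[A^3]=0$, and for $a=(1,-1)$ the quartic coefficient along the line is $\mathbb{E}[s^4]/(12\tau^8)>0$. That is the pitchfork you were extrapolating from.
\item The ``merging minimizes KL'' heuristic is contradicted by the same example: the perturbed fit acquires the truth's nonzero third moment and lowers the cross-entropy.
\item The limiting argument is invalid, because local minimality is not preserved under limits even for analytic families. For instance, $f_\epsilon(t)=\epsilon t^2-t^3$ has a strict local minimum at $0$ for every $\epsilon>0$ but not at $\epsilon=0$. $\mathcal{L}_\tau$ restricted to the line above has exactly this form as $\tau^2\downarrow\sigma^2+\lambda_{\max}(\Sigma_\mu)$.
\end{itemize}

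\textbf{What is actually proved.} Your argument, and the paper's, establishes a strict local minimum for $\rho^2>1+\lambda_{\max}(\Sigma_\mu)/\sigma^2$ and no local minimum for $\rho^2<1+\lambda_{\max}(\Sigma_\mu)/\sigma^2$. Equivalently, the ``iff'' holds for the second-order condition $\nabla^2\mathcal{L}_\tau(\bm{\mu}_{\mathrm{coll}})\succeq 0$. Local minimality at equality depends on higher-order terms and can fail.
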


Proposition~\ref{prop:collapse_generalK}, which is proved in Appendix~\ref{sec:proof_collapse_generalK}, identifies the exact onset of collapse. Below the threshold, the merged configuration is unstable and the population objective favors separated component means. Above the threshold, the collapsed configuration becomes locally stable, showing that sufficiently strong over-smoothing can destroy local identifiability already at the population level. The threshold is governed by the geometry of the true mean configuration through $\Sigma_\mu$: it depends on the most energetic between-cluster direction, $\lambda_{\max}(\Sigma_\mu)$. The bounds in~\eqref{eqn:lambda_max_bounds} place the transition between two natural SNR scales.

The next two corollaries, proved in Appendices~\ref{sec:proof_collapse_K2}-\ref{sec:proof_collapse_simplex} illustrate the two extremal cases of~\eqref{eqn:lambda_max_bounds}. The symmetric two-component model attains the upper bound, while the regular-simplex geometry attains the lower bound.

\begin{corollary}[Symmetric two-component case]
\label{cor:collapse_K2}
Assume $K=2$ and $(\mu_0^\star,\mu_1^\star)=(\mu,-\mu)$. Then, the collapsed configuration $(0,0)$ is a local minimum of $\mathcal{L}_\tau$ if and only if
\begin{align}
    \rho^2 \ge 1+\mathrm{SNR}.
\end{align}
\end{corollary}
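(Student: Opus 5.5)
This is a direct specialization of Proposition~\ref{prop:collapse_generalK}, so the plan is to compute $\bar{\mu}^\star$, $\Sigma_\mu$, $\lambda_{\max}(\Sigma_\mu)$ and $\mathrm{SNR}$ for the configuration $(\mu,-\mu)$ and substitute into the threshold of part~1.

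\emph{Center and covariance.} With $K=2$ and $(\mu_0^\star,\mu_1^\star)=(\mu,-\mu)$, the global center is $\bar{\mu}^\star=\tfrac12(\mu-\mu)=0$. Hence $\bm{\mu}_{\mathrm{coll}}=(0,0)$, which is exactly the configuration in the corollary. By~\eqref{eqn:centered-second-moment-matrix},
\begin{align}
    \Sigma_\mu=\tfrac12\bigl(\mu\mu^\top+(-\mu)(-\mu)^\top\bigr)=\mu\mu^\top .
\end{align}
This is a rank-one positive semidefinite matrix, so its only nonzero eigenvalue is $\lambda_{\max}(\Sigma_\mu)=\|\mu\|^2=\tr(\Sigma_\mu)$. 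The eigenvector is $\mu/\|\mu\|$, and $\mu\neq 0$ because the true means are assumed pairwise distinct.

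\emph{SNR.} From~\eqref{eq:snr-def},
\begin{align}
    \mathrm{SNR}=\frac{1}{\sigma^2}\cdot\tfrac12\bigl(\|\mu\|^2+\|\mu\|^2\bigr)=\frac{\|\mu\|^2}{\sigma^2}.
\end{align}
Therefore $\lambda_{\max}(\Sigma_\mu)/\sigma^2=\mathrm{SNR}$, which means the upper bound in~\eqref{eqn:lambda_max_bounds} holds with equality.

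\emph{Conclusion.} Substituting into part~1 of Proposition~\ref{prop:collapse_generalK}, $(0,0)$ is a stationary point of $\mathcal{L}_\tau$. It is a local minimum if and only if
\begin{align}
    \rho^2\ge 1+\frac{\lambda_{\max}(\Sigma_\mu)}{\sigma^2}=1+\mathrm{SNR}.
\end{align}

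There is no real obstacle here, since all the analytic work (the Hessian of $\mathcal{L}_\tau$ at the collapsed point) is already contained in Proposition~\ref{prop:collapse_generalK}. The only point to check is that the corollary's ``if and only if'' inherits the proposition's, including the boundary case $\rho^2=1+\mathrm{SNR}$.

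If one wanted a self-contained check, one could compute the Hessian directly. At $(0,0)$ the two components' responsibilities are both $1/2$. The Hessian of $\mathcal{L}_\tau$ in the antisymmetric direction $(v,-v)$ is then proportional to
\begin{align}
    \|v\|^2\Bigl(\tfrac{1}{\tau^2}-\tfrac{\mathbb{E}[(v^\top Y)^2]}{\tau^4\|v\|^2}\Bigr),
    \qquad
    \mathbb{E}[YY^\top]=\sigma^2 I_d+\mu\mu^\top .
\end{align}
This quantity is nonnegative for every $v$ exactly when $\tau^2\ge\sigma^2+\|\mu\|^2$, i.e.\ $\rho^2\ge 1+\mathrm{SNR}$. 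In the symmetric direction $(v,v)$ the Hessian is positive definite. This reproduces the proposition's criterion in this special case.
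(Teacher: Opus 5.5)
Your proposal is correct and follows the paper's proof exactly. Like the paper, you compute $\bar{\mu}^\star=0$, $\Sigma_\mu=\mu\mu^\top$ with $\lambda_{\max}(\Sigma_\mu)=\|\mu\|^2$, and $\mathrm{SNR}=\|\mu\|^2/\sigma^2$, then substitute into part~1 of Proposition~\ref{prop:collapse_generalK}; your optional Hessian check in the directions $(v,v)$ and $(v,-v)$ is just Steps~1--2 of that proposition's proof specialized to $K=2$.
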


\begin{definition}[Regular simplex in $\mathbb{R}^d$]
\label{def:regular_simplex}
Fix integers $K\ge 3$ and $d\ge K-1$. A collection of vectors $\bm{v}=(v_0,\dots,v_{K-1})\in(\mathbb{R}^d)^K$ is called a \emph{regular simplex} if
\begin{align}
    \|v_\ell\|=1,
    \qquad
    \langle v_\ell,v_m\rangle=-\frac{1}{K-1}\ \ (\ell\neq m),
    \qquad
    \sum_{\ell=0}^{K-1} v_\ell=0.
    \label{eq:regular_simplex_def}
\end{align}
\end{definition}

\begin{corollary}[Regular simplex case]
\label{cor:collapse_simplex}
Assume the true means form a scaled regular simplex, namely, $\mu_\ell^\star=\beta v_\ell$, where $\{v_\ell\}_{\ell\in[K]}$ is a regular simplex in the sense of Definition~\ref{def:regular_simplex}. Then $\lambda_{\max}(\Sigma_\mu)=\beta^2/(K-1)$, and therefore the collapsed configuration is a local minimum of $\mathcal{L}_\tau$ if and only if
\begin{align}
    \rho^2 \ge 1+\frac{\mathrm{SNR}}{K-1}.
\end{align}
\end{corollary}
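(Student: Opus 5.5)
The plan is to apply Proposition~\ref{prop:collapse_generalK}, which reduces the corollary to computing $\lambda_{\max}(\Sigma_\mu)$ for the simplex configuration and rewriting the threshold in terms of $\mathrm{SNR}$. There are three steps: compute $\Sigma_\mu$, identify its spectrum, and express $\beta^2$ through $\mathrm{SNR}$.

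First, by~\eqref{eq:regular_simplex_def} the mixture center is $\bar{\mu}^\star=\beta\frac{1}{K}\sum_\ell v_\ell=0$. Hence
\begin{align}
\Sigma_\mu=\frac{\beta^2}{K}\sum_{\ell\in[K]} v_\ell v_\ell^\top = \frac{\beta^2}{K} V V^\top,
\end{align}
where $V\in\mathbb{R}^{d\times K}$ has columns $v_\ell$. The nonzero eigenvalues of $VV^\top$ coincide with those of the Gram matrix
\begin{align}
G=V^\top V=\frac{K}{K-1}I_K-\frac{1}{K-1}\mathbf{1}\mathbf{1}^\top.
\end{align}
The all-ones vector lies in the kernel of $G$, consistent with $\sum_\ell v_\ell=0$. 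On $\mathbf{1}^\perp$ the matrix $G$ acts as $\frac{K}{K-1}$ times the identity, so $G$ has eigenvalue $\frac{K}{K-1}$ with multiplicity $K-1$ and eigenvalue $0$ once.

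Second, it follows that $\Sigma_\mu$ has eigenvalue $\frac{\beta^2}{K}\cdot\frac{K}{K-1}=\frac{\beta^2}{K-1}$ with multiplicity $K-1$, and all its remaining eigenvalues vanish. This is where $d\ge K-1$ is used: it guarantees that such a configuration exists. Therefore $\lambda_{\max}(\Sigma_\mu)=\beta^2/(K-1)$.

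As a check, $\tr(\Sigma_\mu)=\frac{\beta^2}{K}\sum_\ell\|v_\ell\|^2=\beta^2$. By~\eqref{eq:snr-def} this equals $\sigma^2\mathrm{SNR}$, so $\beta^2=\sigma^2\mathrm{SNR}$.

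Third, substituting into part~1 of Proposition~\ref{prop:collapse_generalK}, the collapsed configuration is a local minimum if and only if
\begin{align}
\rho^2\ge 1+\frac{\beta^2}{(K-1)\sigma^2}=1+\frac{\mathrm{SNR}}{K-1}.
\end{align}

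The only nontrivial step is the spectral computation through the Gram matrix; the rest is bookkeeping. Note that when $d=K-1$ the lower bound in~\eqref{eqn:lambda_max_bounds} is attained, since $\Sigma_\mu$ is then isotropic on its range.
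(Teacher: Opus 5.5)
Your proof is correct and follows the paper's argument: you reduce to part~1 of Proposition~\ref{prop:collapse_generalK}, compute $\lambda_{\max}(\Sigma_\mu)=\beta^2/(K-1)$, and use $\tr(\Sigma_\mu)=\beta^2=\sigma^2\,\mathrm{SNR}$. Your Gram-matrix computation $V^\top V=\frac{K}{K-1}I_K-\frac{1}{K-1}\mathbf{1}\mathbf{1}^\top$ supplies the justification for the paper's unproved remark that $\frac{1}{K}\sum_\ell v_\ell v_\ell^\top$ equals $\frac{1}{K-1}$ times the orthogonal projector onto $\mathrm{span}\{v_\ell\}$.
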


\subsection{Under-smoothed regime}
\label{sec:undersmoothed}

We now turn to the under-smoothed regime, in which the fitted variance is smaller than the true one, namely $\rho\triangleq\tau/\sigma<1$. This regime interpolates between the correctly specified model at $\rho=1$ and the hard-assignment limit as $\tau\to0$, and is therefore the natural setting in which to quantify how hard-assignment behavior emerges as the fitted temperature is decreased. Empirically, Figure~\ref{fig:3} indicates that the population error already approaches its hard-assignment behavior for moderately small values of $\rho<1$, well before $\tau$ becomes vanishingly small. Our goal in this subsection is to understand how the population quasi-MLE $\bm{\mu}_\tau^\star$ deviates from the ground-truth means in the low- and high-SNR regimes.

\subsubsection{Low-SNR regime}
We begin with the low-SNR regime. The next theorem, which is proved in Appendix~\ref{sec:proof_lowSNR_under_smoothed_fixed_rho}, shows that under-smoothing induces a genuine \emph{population drift}: for every fixed $\rho<1$, the population minimizer cannot remain close to the ground truth as the noise level grows. Thus, the effect of under-smoothing is not merely algorithmic or finite-sample; rather, even with infinite data, the fitted objective targets the wrong parameter.

\begin{thm}[Low-SNR under-smoothed mismatch for fixed $\rho<1$]
\label{thm:lowSNR_under_smoothed_fixed_rho}
Fix integers $K\ge 2$ and $d\ge 1$, and let the ground-truth means $\bm{\mu}^\star=(\mu_\ell^\star)_{\ell\in[K]}\in(\mathbb{R}^d)^K$ be distinct. Fix a variance mismatch ratio $\rho:=  \tau / \sigma \in(0,1)$, and for each $\sigma>0$ let $\bm{\mu}_{\tau}^\star\in\argmin_{\bm{\mu}\in(\mathbb{R}^d)^K}\mathcal{L}_{\tau}(\bm{\mu})$, where $\mathcal{L}_{\tau}$ is the population variance-mismatched negative log-likelihood defined in~\eqref{eq:def_L_tau}. Then,  there exist constants $C_\rho > 0$ and $\sigma_0<\infty$, depending on $K$, $d$, $\bm{\mu}^\star$, and $\rho$ but not on $\sigma$, such that
\begin{align}
    d_{\mathrm{perm}}^{2}\!\left(\bm{\mu}_{\tau}^\star,\bm{\mu}^\star\right) \ge C_\rho\,\sigma^{2},
    \qquad \forall\,\sigma\ge \sigma_0.
\end{align}
\end{thm}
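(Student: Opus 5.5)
The plan is a scaling argument combined with a stationarity contradiction. Rescale everything by $\sigma$: write $Y=\mu^\star_L+\sigma Z$ and a candidate configuration as $\mu_\ell=\bar\mu^\star+\sigma\nu_\ell$. Then $\mathcal{L}_\tau(\bm\mu)$ equals, up to additive constants, the objective
\[
F_\epsilon(\bm\nu)=-\mathbb{E}\log\sum_\ell \exp\!\Big(-\frac{\|\epsilon a_L+Z-\nu_\ell\|^2}{2\rho^2}\Big),
\]
where $\epsilon=1/\sigma$ and $a_\ell=\mu_\ell^\star-\bar\mu^\star$. The target configuration is $\bm\nu^\star_\epsilon=\epsilon\,\bm a$. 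Proving $d_{\mathrm{perm}}^2(\bm\mu_\tau^\star,\bm\mu^\star)\ge C_\rho\sigma^2$ is therefore equivalent to showing that every minimizer $\bm\nu_\epsilon$ of $F_\epsilon$ stays at distance $\ge c>0$ from the permutation orbit of $\epsilon\bm a$ for all small $\epsilon$. Since $\epsilon\bm a\to 0$, it suffices to show that minimizers of $F_\epsilon$ stay a fixed distance away from the collapsed point $\bm 0$.

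\textbf{Limit problem.} At $\epsilon=0$, $F_0$ is the mismatched objective for pure noise $Z\sim\mathcal N(0,I_d)$ fitted with a $K$-component GMM of variance $\rho^2<1$. Evaluating $F_0$ at $\bm 0$ gives $\tfrac{d}{2\rho^2}-\log K$ (dropping constants). I will show that $\bm 0$ is not a minimizer of $F_0$, with a positive gap. The natural tool is the Hessian at $\bm 0$, following the computation behind Proposition~\ref{prop:collapse_generalK}, specialized to $\mu^\star$ all equal, i.e.\ $\Sigma_\mu=0$. That proposition says collapse is a local minimum iff $\rho^2\ge 1+\lambda_{\max}/\sigma^2$, which reads $\rho^2\ge 1$ in the pure-noise case. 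Thus for $\rho<1$ there is a direction, namely splitting two centers symmetrically along $\pm v$, along which $F_0$ strictly decreases. Consequently $\inf F_0\le F_0(\bm 0)-2\delta$ for some $\delta=\delta(\rho,K,d)>0$. Concretely, I will pick $\bm\nu^{(1)}$ with $F_0(\bm\nu^{(1)})\le F_0(\bm 0)-2\delta$.

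\textbf{Perturbation.} Next I need continuity of $F_\epsilon$ in $\epsilon$, uniform over a neighborhood. The gradient of the log-sum-exp in its argument $y$ is bounded by $(\|y\|+\max_\ell\|\nu_\ell\|)/\rho^2$. Hence $|F_\epsilon(\bm\nu)-F_0(\bm\nu)|\le C\epsilon(1+\|\bm\nu\|)$, using $\mathbb{E}\|Z\|<\infty$ and $\|a_L\|\le \max_\ell\|a_\ell\|$. The same bound gives Lipschitz control of $F_\epsilon$ in $\bm\nu$ near $\bm 0$, with a constant $M$ uniform in small $\epsilon$. Choose $\epsilon_0$ and $c$ so that the following hold for all $\epsilon\le\epsilon_0$ and all $\|\bm\nu\|\le 2c$:
\[
F_\epsilon(\bm\nu^{(1)})\le F_0(\bm 0)-\tfrac{3}{2}\delta,\qquad F_\epsilon(\bm\nu)\ge F_0(\bm 0)-\delta.
\]
For the second inequality I use $F_0(\bm\nu)\ge F_0(\bm 0)-Mc$ together with smallness of $c$. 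It follows that any minimizer satisfies $\|\bm\nu_\epsilon\|>2c$. Since $\|\epsilon\bm a\|\le c$ once $\epsilon$ is small, and the norm is permutation invariant, the triangle inequality gives $d_{\mathrm{perm}}(\bm\nu_\epsilon,\epsilon\bm a)\ge c$. Undoing the scaling yields $d_{\mathrm{perm}}^2(\bm\mu_\tau^\star,\bm\mu^\star)\ge c^2\sigma^2$ for $\sigma\ge\sigma_0=1/\epsilon_0$.

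\textbf{Main obstacle.} The key step is the strict descent of $F_0$ away from $\bm 0$ when $\rho<1$. If Proposition~\ref{prop:collapse_generalK} does not cover the degenerate case $\Sigma_\mu=0$ with a strict instability statement, I would compute directly. Take $\bm\nu=(tv,-tv,0,\dots,0)$ and expand $F_0$ to second order in $t$. The log-sum-exp Hessian gives
\[
F_0=F_0(\bm 0)+\frac{t^2}{K\rho^2}\Big(1-\frac{\mathbb{E}\langle Z,v\rangle^2}{\rho^2}\Big)+O(t^4)=F_0(\bm 0)+\frac{t^2}{K\rho^2}\Big(1-\frac{1}{\rho^2}\Big)+O(t^4),
\]
and the coefficient is negative precisely when $\rho<1$. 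This proves the instability directly. Existence of minimizers, assumed in the statement, is the only other ingredient I rely on.
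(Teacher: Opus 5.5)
Your proposal is correct and follows essentially the same route as the paper: rescale to unit noise with signal strength $1/\sigma$, and use the negative curvature of the pure-noise objective at the collapsed configuration when $\rho<1$ (the paper's Lemma~\ref{lem:negative_curvature_zero_signal}). You then transfer the strict gap to small signal by a uniform perturbation argument (the paper's Lemma~\ref{lem:persistence_uniform_convergence}), and conclude via the triangle inequality and undoing the scaling. Your coupling bound $|F_\epsilon-F_0|\le C\epsilon(1+\|\bm\nu\|)$, together with the comparison against a fixed point $\bm\nu^{(1)}$ on a small ball, is a more quantitative version of the paper's dominated-convergence step and handles global minimizers over all of $(\mathbb{R}^d)^K$ directly. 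One harmless slip: the Lipschitz step should read $F_0(\bm\nu)\ge F_0(\bm 0)-2Mc$ on the ball of radius $2c$.
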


\begin{remark}
    The constant $C_\rho$ in Theorem~\ref{thm:lowSNR_under_smoothed_fixed_rho} necessarily depends on $\rho$. Indeed, at $\rho=1$ the model is correctly specified, and the population minimizer coincides with the ground-truth means, so one must have $C_\rho \to 0$ as $\rho \to 1$. More generally, it is natural to expect that the under-smoothing bias is monotone in $\rho$ (i.e., the bias increases as $\rho$ decreases on $(0,1]$), although we do not pursue such a refinement here. 
\end{remark}

Theorem~\ref{thm:lowSNR_under_smoothed_fixed_rho} shows that, for every fixed $\rho<1$, under-smoothing induces an intrinsic low-SNR population bias of order at least $\sigma^2$. Since hard assignment corresponds to the limit $\tau\to0$, its low-SNR failure is part of this broader under-smoothed phenomenon.
The next corollary records the hard-assignment limit of this result.

\begin{corollary}[Low-SNR hard-assignment mismatch]
\label{cor:lowSNR_HA_mismatch}
Assume Assumption~\ref{ass:unique_minimizers_up_to_perm}. Then, there exist constants $C>0$  and $\sigma_0<\infty$, depending on $K$, $d$, $\bm{\mu}^\star$ but not on $\sigma$, such that
\begin{align}
    d_{\mathrm{perm}}^{2}\!\left(\bm{\mu}_{\mathrm{HA}}^\star(\sigma),\bm{\mu}^\star\right) \ge C\,\sigma^{2},
    \qquad \forall\,\sigma\ge \sigma_0.
\end{align}
\end{corollary}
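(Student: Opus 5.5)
We need to prove the corollary: the hard-assignment population minimizer satisfies $d_{\mathrm{perm}}^2(\bm{\mu}_{\mathrm{HA}}^\star(\sigma),\bm{\mu}^\star)\ge C\sigma^2$ for large $\sigma$. The cleanest route is a direct argument exploiting scaling, rather than passing to the limit in Theorem~\ref{thm:lowSNR_under_smoothed_fixed_rho}. Passing to the limit would require $C_\rho$ to stay bounded away from zero as $\rho\to0$, which that theorem does not provide. The only way to use Proposition~\ref{prop:hard_assignment_limit} would be to track the uniformity of $C_\rho$ on $(0,1/2]$, and we avoid that.

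\textbf{Rescaling.} Write $Y=\mu_L^\star+\sigma Z$ and $\bm{\mu}=\sigma\bm{\nu}$. Then
\[
\Phi(\sigma\bm{\nu})=\sigma^2\,\mathbb{E}\bigl[\min_\ell\|\mu_L^\star/\sigma+Z-\nu_\ell\|^2\bigr]=:\sigma^2\Psi_\varepsilon(\bm{\nu}),\qquad \varepsilon=1/\sigma .
\]
As $\varepsilon\to0$, $\Psi_\varepsilon\to\Psi_0(\bm{\nu})=\mathbb{E}\min_\ell\|Z-\nu_\ell\|^2$ uniformly on compacts, since $\bigl|\min_\ell\|a+Z-\nu_\ell\|^2-\min_\ell\|Z-\nu_\ell\|^2\bigr|\le \|a\|\,(2\|Z\|+2\max_\ell\|\nu_\ell\|+\|a\|)$. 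The claim is equivalent to $d_{\mathrm{perm}}(\bm{\nu}_\varepsilon^\star,\bm{\mu}^\star/\sigma)\ge\sqrt C$ for the rescaled minimizer $\bm{\nu}^\star_\varepsilon=\bm{\mu}^\star_{\mathrm{HA}}/\sigma$. Because $\bm{\mu}^\star/\sigma\to0$, it suffices to show $\liminf_{\varepsilon\to0}\max_\ell\|\nu^\star_{\varepsilon,\ell}\|>0$.

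\textbf{Key step (the main obstacle).} We must show that minimizers of $\Psi_\varepsilon$ stay away from the collapsed point $\bm 0$. We use a strict gap. First, $\Psi_0(\bm 0)=d$. Second, for $K\ge2$ there is an explicit configuration with smaller value: take $\bm{\nu}^{(1)}=(a e_1,-a e_1,0,\dots,0)$ with $a=\sqrt{2/\pi}$. Its one-dimensional quantization error is $1-2/\pi$, so
\[
\Psi_0(\bm{\nu}^{(1)})\le d-2/\pi .
\]
By uniform convergence, for $\varepsilon$ small we get $\Psi_\varepsilon(\bm{\nu}_\varepsilon^\star)\le\Psi_\varepsilon(\bm{\nu}^{(1)})\le d-2/\pi+o(1)$.

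It remains to show that small configurations cannot achieve this. For $\|\bm{\nu}\|_F\le\delta$ and $\varepsilon\le\delta$, the pointwise inequality $\min_\ell\|x-\nu_\ell\|^2\ge\|x\|^2-2\|x\|\max_\ell\|\nu_\ell\|$, with $x=\varepsilon\mu^\star_L+Z$, gives
\[
\Psi_\varepsilon(\bm{\nu})\ge d-2\delta\,\mathbb{E}\|x\|\ge d-c\,\delta\sqrt d .
\]
Choosing $\delta$ small enough that $c\delta\sqrt d<1/\pi$ forces $\|\bm{\nu}^\star_\varepsilon\|_F>\delta$.

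\textbf{Conclusion.} For $\sigma\ge\sigma_0$ we have $\|\bm{\mu}^\star\|_F/\sigma\le\delta/2$, and for any permutation $\pi$,
\[
\|\bm{\nu}^\star_\varepsilon-\pi\cdot\bm{\mu}^\star/\sigma\|_F\ge\delta-\delta/2=\delta/2 .
\]
Hence $d_{\mathrm{perm}}^2(\bm{\mu}_{\mathrm{HA}}^\star,\bm{\mu}^\star)\ge(\delta^2/4)\sigma^2$, i.e.\ $C=\delta^2/4$. Existence of the minimizer is supplied by Assumption~\ref{ass:unique_minimizers_up_to_perm}; the argument above uses only the minimizing property, so it also applies to minimizers over $\mathcal{U}$, provided $\bm{\nu}^{(1)}\sigma\in\mathcal{U}$. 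If that membership fails, we use the unrestricted minimizer and note that the lower bound is unchanged.
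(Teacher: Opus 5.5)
Your argument is correct, and it takes a genuinely different route from the paper. The paper writes out no separate proof. It presents the corollary as the hard-assignment endpoint of Theorem~\ref{thm:lowSNR_under_smoothed_fixed_rho}, obtained by letting $\tau\to0$ through Proposition~\ref{prop:hard_assignment_limit}; this is the same device it uses, with a $\tau$-uniform bound, for Corollary~\ref{prop:highSNR_mean_upper}. As you note, in the low-SNR case this passage would need $C_\rho$ bounded below and $\sigma_0(\rho)$ bounded above as $\rho\to0$. The theorem's proof rests on the purely qualitative constants $\eta_\rho,\beta_0$ of Lemma~\ref{lem:persistence_uniform_convergence}, so it does not provide that uniformity. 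You instead work with $\Phi$ directly. After rescaling, the collapsed value $\Psi_0(\bm{0})=d$ is beaten by an explicit two-point quantizer by $2/\pi$, while every configuration in a $\delta$-ball around the origin undercuts $d$ by at most $O(\delta\sqrt d)$. This is the non-smooth analogue of the negative-curvature argument in Lemma~\ref{lem:negative_curvature_zero_signal}: the Hessian computation is replaced by an explicit competitor and a first-order perturbation bound. Your route gives three things. First, it is fully rigorous. Second, it yields an explicit constant $C=\delta^2/4$ with $\delta\asymp d^{-1/2}$. Using $\min_\ell\|x-\nu_\ell\|^2\ge\|x\|^2-2\max_\ell\langle x,\nu_\ell\rangle$ together with $\mathbb{E}\max_\ell\langle Z,\nu_\ell\rangle\le\delta\sqrt{2\log K}$ would even make $C$ independent of $d$. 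Third, it needs neither the uniqueness part of Assumption~\ref{ass:unique_minimizers_up_to_perm} nor the $\tau\to0$ limit.

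One point to tighten: your last sentence treats the unrestricted minimizer as a fallback, but it is actually forced. If $\mathcal{U}$ is a fixed compact set, then $d_{\mathrm{perm}}(\bm{\mu},\bm{\mu}^\star)\le\sup_{\bm{\nu}\in\mathcal{U}}\|\bm{\nu}\|_F+\|\bm{\mu}^\star\|_F$ for all $\bm{\mu}\in\mathcal{U}$. So no bound of order $\sigma^2$ can hold as $\sigma\to\infty$. The statement only makes sense for minimizers over $(\mathbb{R}^d)^K$, which exist by standard results on optimal quantizers, or over sets that contain $\sigma\bm{\nu}^{(1)}$. Rather than asserting that the bound is ``unchanged,'' say that your argument applies verbatim to any minimizer over any set containing $\sigma\bm{\nu}^{(1)}$. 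That is the reading under which the corollary is true.
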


Corollary~\ref{cor:lowSNR_HA_mismatch} shows that in the low-SNR regime, hard assignment is not merely statistically inefficient; rather, its population target itself drifts away from the ground truth. Thus, even with infinite data, hard assignment does not act as a consistent estimator of the true means in this regime.

\subsubsection{High-SNR regime}

In the high-SNR regime, the picture is very different. Unlike the low-SNR setting, where under-smoothing can substantially displace the population target, its effect becomes much weaker when the component means are well separated relative to the noise level. In that regime, observations rarely fall closer to the wrong component, and such misassignments occur with exponentially small probability. One therefore expects the entire under-smoothed family to remain uniformly close to the ground truth. The next proposition, proved in Appendix~\ref{sec:proof-prop-highSNR_under_smoothed_upper_uniform}, makes this high-SNR stability precise.

\begin{proposition}[High-SNR upper bound in the under-smoothed regime]
\label{prop:highSNR_under_smoothed_upper_uniform}
Assume Assumption~\ref{ass:unique_minimizers_up_to_perm}. Recall the definition of $\Delta_{\min}$ from~\eqref{eq:delta_min_def} and define $\Delta_{\max} \triangleq \max_{\ell\neq j}\|\mu_\ell^\star-\mu_j^\star\|$.
For each $\sigma>0$, let  $\tau\in(0,\sigma]$.
Then, for every $\eta>0$, there exists $\sigma_0=\sigma_0(\eta)>0$, depending only on $K,d,\bm{\mu}^\star,\eta$, such that for all $\sigma\le \sigma_0$,
\begin{align}
    \sup_{0<\tau\le \sigma} d_{\mathrm{perm}}^2  \bigl(\bm{\mu}_{\tau}^\star,\bm{\mu}^\star\bigr) \le 4K^2(K-1)\bigl(\Delta_{\max}^2+\sigma^2 d\bigr) \exp\!\left( -\Bigl(\frac{1}{8}-\eta\Bigr)\frac{\Delta_{\min}^2}{\sigma^2} \right).
\label{eq:highSNR_under_smoothed}
\end{align}
In particular, 
\begin{align}
    \sup_{0<\tau\le \sigma} d_{\mathrm{perm}} \bigl(\bm{\mu}_{\tau}^\star,\bm{\mu}^\star\bigr) \xrightarrow[\sigma\to0]{}0. \label{eq:high_SNR_conv}
\end{align}
\end{proposition}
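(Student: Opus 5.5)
The plan is to combine a stationarity (fixed-point) characterization of $\bm{\mu}_\tau^\star$ with an a priori localization step. Throughout, write $\bm{\mu}=\bm{\mu}_\tau^\star$ and define the tempered posterior weights
\begin{align*}
w_\ell(y;\bm{\mu})=\frac{\exp(-\|y-\mu_\ell\|^2/2\tau^2)}{\sum_j\exp(-\|y-\mu_j\|^2/2\tau^2)} .
\end{align*}
Any interior minimizer of $\mathcal{L}_\tau$ satisfies the EM-type fixed-point equations
\begin{align*}
\mu_\ell=\frac{\mathbb{E}[w_\ell(Y;\bm{\mu})Y]}{\mathbb{E}[w_\ell(Y;\bm{\mu})]} .
\end{align*}
For the hard-assignment endpoint ($\tau\to0$, covered by Proposition~\ref{prop:hard_assignment_limit}) these become the centroid conditions of the Voronoi cells. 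The goal is to show that, once $\bm{\mu}$ lies near a permutation of $\bm{\mu}^\star$, each $\mu_\ell$ is a weighted average of $Y$ whose weights are concentrated, up to exponentially small error, on the points generated from component $\ell$.

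\textbf{Step 1: localization.} First I will show that for small $\sigma$, after relabeling, $\max_\ell\|\mu_\ell-\mu_\ell^\star\|\le \delta$ uniformly in $\tau\le\sigma$. Here $\delta$ is any fixed fraction of $\Delta_{\min}$, say $\delta=\eta'\Delta_{\min}$. The route is a value comparison.
\begin{itemize}
\item Evaluate at the truth. Using the decomposition \eqref{eq:Ltau_expansion_pointwise}, $2\tau^2\mathcal{L}_\tau(\bm{\mu})=\Phi(\bm{\mu})+2\tau^2(\text{const}+r_\tau)$ with $-\log K\le r_\tau\le0$. Hence $\mathcal{L}_\tau(\bm{\mu}^\star)\le$ const $+\Phi(\bm{\mu}^\star)/2\tau^2$, and $\Phi(\bm{\mu}^\star)\le \sigma^2 d$.
\item Bound any configuration that is not $\delta$-close to a permutation of $\bm{\mu}^\star$. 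Some true center $\mu_j^\star$ then has no estimated center within $\delta$, so $\Phi(\bm{\mu})\ge \frac1K\mathbb{E}[\min_\ell\|\mu_j^\star+\sigma Z-\mu_\ell\|^2]\ge c\delta^2/K$ for small $\sigma$.
\item Compare. Since $2\tau^2\log K\le 2\sigma^2\log K$, the resulting gap $c\delta^2/K-\sigma^2(d+2\log K)>0$ for $\sigma$ small, which rules such configurations out as minimizers.
\end{itemize}
This works uniformly in $\tau\le\sigma$ because the remainder is bounded by $\log K$ and multiplied by $\tau^2$. Compactness of $\mathcal{U}$ ensures minimizers exist.

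\textbf{Step 2: error decomposition at the fixed point.} Write $Y=\mu_L^\star+\sigma Z$. Then
\begin{align*}
\mathbb{E}[w_\ell(\mu_\ell-Y)]=0 \;\Rightarrow\; \mu_\ell-\mu_\ell^\star=\frac{\mathbb{E}[w_\ell(Y-\mu_\ell^\star)]}{\mathbb{E}[w_\ell]}.
\end{align*}
Split the numerator according to $L=\ell$ versus $L\neq\ell$.
\begin{itemize}
\item \emph{Term with $L=\ell$.} This is $\frac1K\sigma\mathbb{E}[w_\ell Z\mid L=\ell]=-\frac{\sigma}{K}\mathbb{E}[(1-w_\ell)Z\mid L=\ell]$, using $\mathbb{E}Z=0$.
\item \emph{Terms with $L=m\neq\ell$.} These equal $\frac1K\mathbb{E}[w_\ell(\mu_m^\star-\mu_\ell^\star+\sigma Z)\mid L=m]$.
\end{itemize}
Both terms carry a factor of a weight that is small for correctly assigned points. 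By Step 1 the $\mu_j$ lie within $\delta$ of the truth. Moreover, $w_\ell\le \exp(-(\|y-\mu_\ell\|^2-\|y-\mu_m\|^2)/2\tau^2)$, which is at most $1$, and is tiny unless $y$ falls on the wrong side of the bisector of $\mu_\ell,\mu_m$. That bisector lies at distance at least $\Delta_{\min}/2-\delta$ from $\mu_m^\star$.

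Apply Cauchy--Schwarz together with $w\le\mathbf{1}\{\text{wrong side}\}+e^{-\text{margin}/2\tau^2}$. Each cross term is then bounded by
\begin{align*}
\sqrt{\Delta_{\max}^2+\sigma^2 d}\;\sqrt{\mathbb{P}(\langle Z,u\rangle\ge(\Delta_{\min}/2-2\delta)/\sigma)} .
\end{align*}
This gives a factor $\exp(-(1/8-\eta)\Delta_{\min}^2/\sigma^2)$ after squaring, with $\delta$ chosen small in terms of $\eta$. The denominator satisfies $\mathbb{E}[w_\ell]\ge\frac1{2K}$ for small $\sigma$. Summing over the $K-1$ wrong components and the $K$ indices $\ell$, and squaring, produces the stated constant $4K^2(K-1)(\Delta_{\max}^2+\sigma^2d)$, modulo bookkeeping. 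Letting $\sigma\to0$ then gives \eqref{eq:high_SNR_conv}.

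\textbf{Main obstacle.} The hardest part is making the weight bound uniform over $\tau\in(0,\sigma]$. In particular, the soft tail $e^{-\text{margin}/2\tau^2}$ must not degrade as $\tau$ approaches $\sigma$. I would handle this by treating only points whose margin exceeds $\eta\Delta_{\min}^2$ as correctly assigned; for these, the soft weight is at most $e^{-\eta\Delta_{\min}^2/2\sigma^2}$ since $\tau\le\sigma$. Points with smaller margin are counted as wrong-side points, which costs only the $\eta$ in the exponent.
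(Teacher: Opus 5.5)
\paragraph{Gap: the soft tail.} Your plan breaks at the step you flag as the main obstacle, and that step is exactly where the exponent $\tfrac18$ is decided.

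With a single cut at margin $M\ge\eta\Delta_{\min}^2$, the points you count as correctly assigned contribute up to $e^{-\eta\Delta_{\min}^2/(2\sigma^2)}$ when $\tau=\sigma$. This dominates the target $e^{-(1/8-\eta)\Delta_{\min}^2/\sigma^2}$ as soon as $\eta<1/12$. No other single cut fixes this. Cutting at $M\ge cD^2$ costs $e^{-(1-c)^2D^2/(8\sigma^2)}$ for points below the cut and $e^{-cD^2/(2\sigma^2)}$ for points above it. The best balance, $c=3-2\sqrt2$, gives exponent $(3-2\sqrt2)/2\approx 0.086<1/8$.

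So your displayed Cauchy--Schwarz bound, with a bare crossing probability under the square root, is asserted but not delivered. The paper's Step~2 passes over the same point: it bounds the misallocated soft mass $q_\ell$ directly by crossing probabilities and never controls the soft tail.

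\paragraph{Missing idea.} Integrate the soft weight against the Gaussian law instead of bounding it pointwise. Because of the truncation at $1$, $w\le\min\{1,e^{-M/(2\tau^2)}\}\le\min\{1,e^{-M/(2\sigma^2)}\}$ for every $\tau\le\sigma$, so $\tau=\sigma$ is the worst case. For a point from component $m$ competing with $\mu_\ell$, set $M=\|Y-\mu_\ell\|^2-\|Y-\mu_m\|^2$, $D=\|\mu_\ell-\mu_m\|$ and $g=\langle Z,(\mu_\ell-\mu_m)/D\rangle\sim\mathcal{N}(0,1)$. Your localization gives $M\ge D(D-2\delta)-2D\sigma g$, and completing the square in the one-dimensional Gaussian integral yields
\[
\mathbb{E}\min\bigl\{1,e^{-M/(2\sigma^2)}\bigr\}\le\mathbb{P}\Bigl(g\ge\frac{D-2\delta}{2\sigma}\Bigr)+e^{D\delta/\sigma^2}\,\mathbb{P}\Bigl(g\ge\frac{D+2\delta}{2\sigma}\Bigr)\le e^{-(D-2\delta)^2/(8\sigma^2)}.
\]
So soft leakage decays with the same exponent $\tfrac18$ as hard boundary crossing. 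At $\delta=0$ this quantity is just the overlap $\int\min\{\varphi_d(y;\mu_m,\sigma^2 I_d),\varphi_d(y;\mu_\ell,\sigma^2 I_d)\}\,dy$. Using $D\ge\Delta_{\min}-2\delta$ and taking $\delta$ small in terms of $\eta$ gives the rate $(\tfrac18-\eta)\Delta_{\min}^2/\sigma^2$.

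\paragraph{The rest of your plan.} With this lemma inserted, your argument goes through and follows the paper's Steps~2--3. Constant factors, such as the $2(K-1)$ leakage terms (out of and into component $\ell$), are absorbed by shrinking $\eta$.

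Two parts of your plan are sharper than the paper's argument. Your Step~1 value comparison, based on the decomposition in Proposition~\ref{prop:hard_assignment_limit}, is quantitative and uniform in $\tau\le\sigma$; the paper instead appeals loosely to a limiting discrete objective. Your Cauchy--Schwarz bound $\|\mathbb{E}[wV]\|^2\le\mathbb{E}[w]\,\mathbb{E}\|V\|^2$ is a correct route to the numerator estimate. The paper reaches that estimate by splitting $\mathbb{E}[(1-\gamma_\ell)\|Y-\mu_\ell^\star\|^2\mid L=\ell]$ into a product of expectations, which is not justified.
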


Proposition~\ref{prop:highSNR_under_smoothed_upper_uniform} shows that, in the well-separated regime, variance under-smoothing does not destroy consistency at the population level: uniformly over $0<\rho\le 1$, the population minimizer remains exponentially close to the true means. Thus, the population drift established in Theorem~\ref{thm:lowSNR_under_smoothed_fixed_rho} is a genuinely low-SNR phenomenon.
At the hard-assignment limit $\tau\to 0$, this conclusion remains true. The next corollary, proved in Appendix~\ref{subsec:proof_highSNR_HA}, makes this precise.

\begin{corollary}[High-SNR exponential consistency of the population hard-assignment estimator]
\label{prop:highSNR_mean_upper}
Assume Assumption~\ref{ass:unique_minimizers_up_to_perm}. 
Then, for every $\eta>0$, there exists $\sigma_0=\sigma_0(\eta)>0$, depending only on $K,d,\bm{\mu}^\star,\eta$, such that for all $\sigma\le \sigma_0$,
\begin{align}
    d_{\mathrm{perm}}^2\bigl(\bm{\mu}_{\mathrm{HA}}^\star(\sigma),\bm{\mu}^\star\bigr) \le 4K^2(K-1)\bigl(\Delta_{\max}^2+\sigma^2 d\bigr) \exp\!\left( -\Bigl(\frac{1}{8}-\eta\Bigr)\frac{\Delta_{\min}^2}{\sigma^2} \right).
\label{eq:highSNR_hard_assignment}
\end{align}
In particular,
\begin{align}
    d_{\mathrm{perm}}\bigl(\bm{\mu}_{\mathrm{HA}}^\star(\sigma),\bm{\mu}^\star\bigr)  \xrightarrow[\sigma\to0]{}0.
\end{align}
\end{corollary}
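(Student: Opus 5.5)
The natural route is to pass to the limit $\tau\to0^+$ in Proposition~\ref{prop:highSNR_under_smoothed_upper_uniform}, using the zero-temperature limit of Proposition~\ref{prop:hard_assignment_limit}. Fix $\eta>0$ and let $\sigma_0=\sigma_0(\eta)$ be the threshold given by Proposition~\ref{prop:highSNR_under_smoothed_upper_uniform}. For every $\sigma\le\sigma_0$, that proposition bounds $d_{\mathrm{perm}}^2(\bm{\mu}_\tau^\star,\bm{\mu}^\star)$ uniformly over $0<\tau\le\sigma$. Denote the right-hand side of \eqref{eq:highSNR_hard_assignment} by $B(\sigma)$; it does not depend on $\tau$.

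Next I would invoke part~3 of Proposition~\ref{prop:hard_assignment_limit}, which is available under Assumption~\ref{ass:unique_minimizers_up_to_perm}. It gives $d_{\mathrm{perm}}(\bm{\mu}_\tau^\star,\bm{\mu}_{\mathrm{HA}}^\star(\sigma))\to0$ as $\tau\to0^+$, for the fixed $\sigma$. Since $d_{\mathrm{perm}}$ is a minimum over permutations of Frobenius norms, it satisfies the triangle inequality on permutation orbits:
\[
d_{\mathrm{perm}}(\bm{\mu}_{\mathrm{HA}}^\star,\bm{\mu}^\star)\le d_{\mathrm{perm}}(\bm{\mu}_{\mathrm{HA}}^\star,\bm{\mu}_\tau^\star)+d_{\mathrm{perm}}(\bm{\mu}_\tau^\star,\bm{\mu}^\star)\le d_{\mathrm{perm}}(\bm{\mu}_{\mathrm{HA}}^\star,\bm{\mu}_\tau^\star)+\sqrt{B(\sigma)}.
\]
This uses that the permutation group acts isometrically, so $\|\bm a-\pi\bm b\|_F=\|\pi^{-1}\bm a-\bm b\|_F$ and composing minimizing permutations is legitimate. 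Letting $\tau\to0^+$ gives $d_{\mathrm{perm}}^2(\bm{\mu}_{\mathrm{HA}}^\star(\sigma),\bm{\mu}^\star)\le B(\sigma)$, which is \eqref{eq:highSNR_hard_assignment}. The convergence statement follows because $\Delta_{\min}>0$: choosing, say, $\eta=1/16$, the factor $(\Delta_{\max}^2+\sigma^2 d)$ stays bounded while the exponential tends to $0$ as $\sigma\to0$.

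The main point needing care is compatibility of the hypotheses. Proposition~\ref{prop:hard_assignment_limit} is stated for minimizers over the compact set $\mathcal{U}$, whereas the bound of Proposition~\ref{prop:highSNR_under_smoothed_upper_uniform} must hold for those same minimizers. Both are stated under Assumption~\ref{ass:unique_minimizers_up_to_perm}, so I would interpret $\bm{\mu}_\tau^\star$ consistently as the unique-up-to-permutation minimizer over $\mathcal{U}$ in both results. With that reading, the limit argument goes through with no further estimates.
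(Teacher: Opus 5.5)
Your proposal is correct and follows essentially the same route as the paper. The paper also fixes $\sigma\le\sigma_0$, uses the $\tau$-uniform bound of Proposition~\ref{prop:highSNR_under_smoothed_upper_uniform}, and passes to $\tau\to0^+$ via part~3 of Proposition~\ref{prop:hard_assignment_limit}. Your explicit triangle inequality for $d_{\mathrm{perm}}$ and your note on reading all minimizers over $\mathcal{U}$ make precise steps the paper leaves implicit.
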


Taken together, Theorem~\ref{thm:lowSNR_under_smoothed_fixed_rho} and Proposition~\ref{prop:highSNR_under_smoothed_upper_uniform} show that the under-smoothed regime exhibits a sharp SNR-dependent dichotomy: in low SNR, the population target drifts away from the truth, whereas in high SNR it remains exponentially close to it. Corollary~\ref{cor:lowSNR_HA_mismatch} and Corollary~\ref{prop:highSNR_mean_upper} show that the same dichotomy persists at the hard-assignment endpoint. Thus, hard assignment is accurate in the well-separated regime, where it effectively agrees with the true latent partition, but it becomes fundamentally biased in low SNR.

\subsection{Finite-sample analysis}
\label{sec:finite-sample-analysis}

We conclude this section with a brief finite-sample interpretation of the population theory developed above. The main point is to separate two distinct sources of estimation error: the \emph{finite-sample fluctuation} around the relevant population target, and the \emph{population mismatch} between that target and the ground-truth means. This viewpoint applies uniformly across the variance-mismatched family indexed by $\tau>0$, with hard assignment arising as $\tau\to 0$.

Fix $\tau>0$, and let $\widehat{\bm{\mu}}_{\tau}^{(n)}\in\argmin_{\bm{\mu}}\mathcal{L}_{\tau}^{(n)}(\bm{\mu};\mathcal{Y})$ denote an empirical variance-mismatched likelihood estimator, where $\mathcal{L}_{\tau}^{(n)}$ is the empirical negative log-likelihood defined in~\eqref{eq:empirical_loglik}. Let $\bm{\mu}_\tau^\star\in\argmin_{\bm{\mu}}\mathcal{L}_{\tau}(\bm{\mu})$ denote the corresponding population target from~\eqref{eqn:quasi-MLE}. Then, by the triangle inequality for the permutation-invariant distance $d_{\mathrm{perm}}$ from~\eqref{eq:perm-distance},
\begin{align}
    d_{\mathrm{perm}}(\widehat{\bm{\mu}}_{\tau}^{(n)},\bm{\mu}^\star) \le d_{\mathrm{perm}}(\widehat{\bm{\mu}}_{\tau}^{(n)},\bm{\mu}_\tau^\star) + d_{\mathrm{perm}}(\bm{\mu}_\tau^\star,\bm{\mu}^\star).
    \label{eq:finite-infinite-perm-distance}
\end{align}
Using $(a+b)^2\le 2a^2+2b^2$ together with the definition of the normalized MSE in~\eqref{eq:mseDef_gmm}, we obtain
\begin{align}
    \mathrm{MSE}(\widehat{\bm{\mu}}_{\tau}^{(n)},\bm{\mu}^\star) \le \frac{2}{\|\bm{\mu}^\star\|_F^2}\,   \mathbb{E}\!\left[d_{\mathrm{perm}}^2(\widehat{\bm{\mu}}_{\tau}^{(n)},\bm{\mu}_\tau^\star)\right] + \frac{2}{\|\bm{\mu}^\star\|_F^2}\, d_{\mathrm{perm}}^2(\bm{\mu}_\tau^\star,\bm{\mu}^\star).
    \label{eq:finite-sample-decomp-tau}
\end{align}
The first term is a finite-sample fluctuation term, while the second is a purely population-level mismatch term.

We do not pursue a full nonasymptotic finite-sample theory here. Rather, \eqref{eq:finite-sample-decomp-tau} serves as the basic asymptotic decomposition suggested by misspecified maximum-likelihood theory. Under standard regularity assumptions, the empirical minimizer $\widehat{\bm{\mu}}_{\tau}^{(n)}$ is consistent for the population minimizer $\bm{\mu}_\tau^\star$ of the misspecified objective and satisfies the asymptotic normality expansion around it; see, for example,~\cite[Thm.~2.2 and Sec.~3]{white1982maximum}. Consequently, $\mathrm{MSE}(\widehat{\bm{\mu}}_{\tau}^{(n)},\bm{\mu}_\tau^\star) = O\!\left(1/n\right)$, with an implicit constant determined by the local misspecified likelihood geometry at $\bm{\mu}_\tau^\star$.

The second term in~\eqref{eq:finite-sample-decomp-tau}, namely $d_{\mathrm{perm}}^2(\bm{\mu}_\tau^\star,\bm{\mu}^\star)$, is entirely a population quantity. It depends on the signal geometry, the noise level $\sigma$, and the mismatch ratio $\rho = \tau/\sigma$, but not on the sample size $n$. In the correctly specified case $\tau=\sigma$, the population target coincides with $\bm{\mu}^\star$ up to permutation, and this term vanishes. For $\tau\neq\sigma$, however, it is precisely the population drift analyzed in the previous subsections. In particular, it grows on the order of $\sigma^2$ in low SNR for any $\rho<1$.

Thus, combining these two terms into~\eqref{eq:finite-sample-decomp-tau} shows that the total estimation error is controlled by the sum of a fluctuation term and a population mismatch term. In asymptotic form, this yields
\begin{align}
    \mathrm{MSE}(\widehat{\bm{\mu}}_{\tau}^{(n)},\bm{\mu}^\star) \lesssim \frac{A_\tau}{n} + B_\tau,    \label{eq:bias_variance_crossover}
\end{align}
where $A_\tau$ summarizes the local fluctuation scale and $B_\tau \triangleq \frac{2}{\|\bm{\mu}^\star\|_F^2}\, d_{\mathrm{perm}}^2(\bm{\mu}_\tau^\star,\bm{\mu}^\star)$ is the population mismatch floor.

This decomposition explains the two qualitatively different behaviors seen in Figure~\ref{fig:2}(b). When $B_\tau$ is negligible, as in the correctly specified model or in the high-SNR regime where the population drift is small, the dominant contribution to the error is the $O(1/n)$ fluctuation term, so increasing the sample size continues to reduce the MSE. By contrast, when $B_\tau$ is appreciable, as in the low-SNR regime, the $n$-independent population mismatch eventually dominates. More precisely, the crossover occurs at a sample size of order $n_{\mathrm{cross}} \asymp \frac{A_\tau}{B_\tau}$. Accordingly, for $n \ll n_{\mathrm{cross}}$ the error is fluctuation-dominated, whereas for $n \gg n_{\mathrm{cross}}$ it is mismatch-dominated. Beyond this point, additional samples reduce only the fluctuations around a biased population target, and the overall MSE saturates near the mismatch floor. In particular, the finite-sample analysis is most substantial in the well-separated, high-SNR regime, where $B_\tau$ is small and the estimator exhibits the familiar $1/n$ improvement. In low SNR, by contrast, the main limitation is typically not finite-sample variability but the population bias induced by variance mismatch itself.

\section{Latent-label clustering}
\label{sec:latent-label-misclassification}

Although our main focus is component mean estimation, the hard-assignment viewpoint developed above also leads naturally to latent-label recovery. In a GMM, this task asks: given an observation $Y$, infer the component label $L\in[K]$ from which it was generated. This connection is especially relevant here because, in the equal-weight isotropic model, the Bayes-optimal classifier is a nearest-center rule. Hence, the same geometric partition of the observation space that governs optimal clustering also appears in the assignment step of hard-assignment procedures. As a result, our analysis of hard assignment has direct implications for clustering: when latent labels are statistically unrecoverable, any method that relies on confident per-sample assignments is necessarily fragile. This is the decision-theoretic counterpart of the population-drift phenomenon established earlier for under-smoothed fitting and for the hard-assignment limit.

The aim of this section is to make this connection precise. We first define the optimal achievable clustering error, then identify the Bayes-optimal classifier and relate it to the nearest-center geometry underlying hard assignment. We next show that in low SNR the Bayes misclassification probability is necessarily close to random guessing, whereas in high SNR it decays exponentially fast with the separation-to-noise ratio.

\begin{definition}[Bayes misclassification probability]
\label{def:latent_label_model_isotropic}
Let $(Y,L)$ be distributed according to Model~\ref{model:gmm_isotropic_equal}. A measurable classifier is a map $\psi:\mathbb{R}^d\to[K]$, producing the estimate $\widehat L=\psi(Y)$. The Bayes, or minimum achievable, misclassification probability is
\begin{align}
    P_{\mathrm{err}}^\star(\bm{\mu}^\star,\sigma) \triangleq \inf_{\psi:\mathbb{R}^d\to[K]} \mathbb{P}\big(\psi(Y)\neq L\big).
    \label{eq:bayes_error_def_kmeans}
\end{align}
\end{definition}

The next proposition, a standard consequence of Gaussian discriminant analysis with common covariance; see, for example,~\cite{bishop2006pattern}, shows that in the equal-weight isotropic model the Bayes-optimal classifier is exactly the nearest-center rule. Thus, latent-label recovery and hard assignment are governed by the same Voronoi partition of the observation space, even though they address different statistical tasks: the Bayes classifier characterizes the best possible label recovery under the true model, whereas hard assignment uses the same decision regions as part of a mean estimation procedure.

\begin{proposition}[Bayes classifier is nearest-center]
\label{prop:bayes_is_nearest_mean}
Consider Model~\ref{model:gmm_isotropic_equal} with uniform weights and common covariance $\sigma^2 I_d$. Then the Bayes-optimal classifier $\psi^\star$ is given by the nearest-center rule:
\begin{align}
    \psi^\star(y) &\in \argmax_{\ell\in[K]} \mathbb{P}(L=\ell\mid Y=y)
    \notag\\
    &=
    \argmax_{\ell\in[K]} \varphi_d\!\left(y;\mu_\ell^\star,\sigma^2 I_d\right) = \argmin_{\ell\in[K]} \|y-\mu_\ell^\star\|^2.
    \label{eq:bayes_is_nearest_mean}
\end{align}
\end{proposition}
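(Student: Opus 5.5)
The plan is to compare the pointwise error probability of an arbitrary classifier with that of the nearest-center rule. The argument rests on the standard MAP characterization of the Bayes classifier.

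First, fix any measurable $\psi$. Conditioning on $Y$ gives
\[
\mathbb{P}(\psi(Y)\neq L)=1-\mathbb{E}\bigl[\mathbb{P}(L=\psi(Y)\mid Y)\bigr].
\]
For each $y$, the integrand $\mathbb{P}(L=\psi(y)\mid Y=y)$ is at most $\max_{\ell}\mathbb{P}(L=\ell\mid Y=y)$. Any selection $\psi^\star(y)\in\argmax_\ell \mathbb{P}(L=\ell\mid Y=y)$ attains this bound pointwise, so it achieves the infimum in~\eqref{eq:bayes_error_def_kmeans}. Measurability of such a selection comes from breaking ties by the smallest index. The argmax sets are determined by finitely many continuous comparisons, so the resulting map is Borel.

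Second, apply Bayes' rule with uniform weights:
\[
\mathbb{P}(L=\ell\mid Y=y)=\frac{\tfrac1K\varphi_d(y;\mu_\ell^\star,\sigma^2I_d)}{p_{\bm{\mu}^\star,\sigma}(y)}.
\]
The denominator and the factor $1/K$ do not depend on $\ell$, so the posterior argmax equals the argmax of $\varphi_d(y;\mu_\ell^\star,\sigma^2I_d)$.

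Third, write the density as $\varphi_d=(2\pi\sigma^2)^{-d/2}\exp(-\|y-\mu_\ell^\star\|^2/(2\sigma^2))$. The prefactor is common to all $\ell$ because the covariance is shared. Since $t\mapsto e^{-t/(2\sigma^2)}$ is strictly decreasing, maximizing the density over $\ell$ is the same as minimizing $\|y-\mu_\ell^\star\|^2$.

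There is no real obstacle here. The only points needing care are the measurable tie-breaking and noting that ties occur only on the hyperplanes $\{y:\|y-\mu_\ell^\star\|=\|y-\mu_j^\star\|\}$. These hyperplanes have Lebesgue measure zero because the means are distinct, so the choice of tie-breaking does not affect the error probability.
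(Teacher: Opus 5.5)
Your proposal is correct and is essentially the argument the paper relies on. The paper gives no proof of its own and simply cites this as the standard Gaussian discriminant-analysis fact, which is exactly your route: pointwise MAP optimality, Bayes' rule with equal weights, and strict monotonicity of the common-covariance Gaussian density in $\|y-\mu_\ell^\star\|^2$. Your handling of measurable smallest-index tie-breaking, with ties on Lebesgue-null hyperplanes, also matches the paper's appendix remark on arbitrary tie-breaking.
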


\subsection{Low-SNR regime}

We begin with the low-SNR regime. When the observation $Y$ carries very little information about the latent label $L$, even the optimal Bayes classifier cannot substantially outperform random guessing. To formalize this, we relate the Bayes misclassification probability to the mutual information $I(L;Y)$ between $L$ and $Y$ under the true GMM law~\cite{cover1999elements}. 

\begin{proposition}[Bayes error is close to random guessing at low SNR]
\label{prop:low_snr_lb_classification}
Let $(Y,L)$ be distributed according to Model~\ref{model:gmm_isotropic_equal}, and let $P_{\mathrm{err}}^\star(\bm{\mu}^\star,\sigma)$ be defined as in~\eqref{eq:bayes_error_def_kmeans}. Then, 
\begin{align}
    P_{\mathrm{err}}^\star(\bm{\mu}^\star,\sigma) \ge 1-\frac{1}{K}-\sqrt{\frac{I(L;Y)}{2}}, \label{eq:bayes_error_lb_in_terms_of_MI}
\end{align}
where $I(L;Y)$ is the mutual information between $L$ and $Y$.
\end{proposition}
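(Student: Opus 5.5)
The plan is to prove the bound by comparing an arbitrary classifier with one that ignores $Y$, using Pinsker's inequality. Fix any measurable $\psi:\mathbb{R}^d\to[K]$. The success probability is
\begin{align}
    \mathbb{P}(\psi(Y)=L)=\sum_{\ell\in[K]}\mathbb{P}(L=\ell,\psi(Y)=\ell).
\end{align}
Let $Q$ be the product law $P_L\otimes P_Y$, under which $L$ is independent of $Y$. Under $Q$ the event $\{\psi(Y)=L\}$ has probability
\begin{align}
    \sum_\ell \tfrac1K\,\mathbb{P}(\psi(Y)=\ell)=\tfrac1K,
\end{align}
because $L$ is uniform on $[K]$ and the events $\{\psi(Y)=\ell\}$ partition the space.

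Next we compare the joint law $P_{L,Y}$ with $Q$. For the event $A=\{(\ell,y):\psi(y)=\ell\}$ we have
\begin{align}
    P_{L,Y}(A)-Q(A)\le \mathrm{TV}(P_{L,Y},Q).
\end{align}
Pinsker's inequality gives
\begin{align}
    \mathrm{TV}(P_{L,Y},Q)\le\sqrt{D_{\mathrm{KL}}(P_{L,Y}\|Q)/2}.
\end{align}
By definition of mutual information, $D_{\mathrm{KL}}(P_{L,Y}\|P_L\otimes P_Y)=I(L;Y)$, measured in nats. Combining these three facts yields
\begin{align}
    \mathbb{P}(\psi(Y)=L)\le \tfrac1K+\sqrt{I(L;Y)/2}.
\end{align}
Taking the complement and then the infimum over $\psi$ gives \eqref{eq:bayes_error_lb_in_terms_of_MI}.

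The only points that need care are measurability and the convention that $I(L;Y)$ is in nats, so that the constant $1/2$ in Pinsker's inequality matches. The infimum over $\psi$ poses no difficulty, since the bound holds uniformly in $\psi$. If $I(L;Y)=\infty$ the bound is trivial, and here $I(L;Y)\le\log K$ anyway.

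I do not expect a genuine obstacle. The one step to get right is that the bound uses the total variation between the joint law and the product of marginals, not the Bayes classifier structure of Proposition~\ref{prop:bayes_is_nearest_mean}, which is not needed.
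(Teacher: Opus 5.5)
Your proposal is correct and follows essentially the same route as the paper's proof: both compare the joint law $P_{L,Y}$ with the product $P_L\otimes P_Y$ on the success event $\{\psi(y)=\ell\}$, use that this event has probability exactly $1/K$ under the product measure, and then bound the difference by total variation, Pinsker's inequality, and the identity $D_{\mathrm{KL}}(P_{L,Y}\,\|\,P_L\otimes P_Y)=I(L;Y)$ in nats. Your side remarks (no need for the nearest-center structure of the Bayes rule, and $I(L;Y)\le\log K$) are accurate.
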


Proposition~\ref{prop:low_snr_lb_classification}, proved in Appendix~\ref{sec:proof-lowSNR_classification}, shows that any nontrivial improvement over random guessing requires the mutual information $I(L;Y)$ to be bounded away from zero. The next corollary bounds this mutual information explicitly in terms of the SNR.

\begin{corollary}[Low-SNR mutual-information bound]
\label{cor:upper-bound-mutual-information}
The mutual information in~\eqref{eq:bayes_error_lb_in_terms_of_MI} satisfies the nonasymptotic bound
\begin{align}
    I(L;Y) \le \frac{1}{2\sigma^2}\cdot \frac{1}{K}\sum_{\ell\in[K]}\|\mu_\ell^\star-\bar{\mu}^\star\|^2 = \frac{\mathrm{SNR}}{2},  \label{eq:MI_energy_upper_bound_raw}
\end{align}
where $\bar{\mu}^\star\triangleq \frac{1}{K}\sum_{\ell\in[K]}\mu_\ell^\star$ is the mixture mean. Therefore
\begin{align}
    P_{\mathrm{err}}^\star(\bm{\mu}^\star,\sigma) \ge 1-\frac{1}{K}-\frac{1}{2}\sqrt{\mathrm{SNR}}. \label{eq:bayes_error_lb_energy}
\end{align}
\end{corollary}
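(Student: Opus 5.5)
The bound on $I(L;Y)$ follows from the variational (golden-formula) characterization of mutual information. Then substitute it into Proposition~\ref{prop:low_snr_lb_classification}.

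Write $P_\ell=\mathcal{N}(\mu_\ell^\star,\sigma^2 I_d)$ for the conditional law of $Y$ given $L=\ell$, and $P_Y=p_{\bm{\mu}^\star,\sigma}$ for the marginal. Then
\begin{align}
I(L;Y)=\frac{1}{K}\sum_{\ell\in[K]} D_{\mathrm{KL}}(P_\ell\,\|\,P_Y)=\min_{Q}\frac{1}{K}\sum_{\ell\in[K]} D_{\mathrm{KL}}(P_\ell\,\|\,Q),
\end{align}
where the minimum is over probability laws $Q$ on $\mathbb{R}^d$. The second equality holds because, for any $Q$ with finite divergences, $\frac{1}{K}\sum_\ell D_{\mathrm{KL}}(P_\ell\|Q)=I(L;Y)+D_{\mathrm{KL}}(P_Y\|Q)\ge I(L;Y)$. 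This identity follows by splitting $\log(dP_\ell/dQ)=\log(dP_\ell/dP_Y)+\log(dP_Y/dQ)$ and averaging over $\ell$. The expectations are finite for Gaussian $P_\ell$ and any Gaussian $Q$ or the mixture $P_Y$, so no integrability issues arise.

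Choose the test distribution $Q=\mathcal{N}(\bar{\mu}^\star,\sigma^2 I_d)$. Two Gaussians with the same covariance $\sigma^2 I_d$ satisfy $D_{\mathrm{KL}}(\mathcal{N}(a,\sigma^2I_d)\|\mathcal{N}(b,\sigma^2I_d))=\|a-b\|^2/(2\sigma^2)$. Hence
\begin{align}
I(L;Y)\le \frac{1}{K}\sum_{\ell\in[K]}\frac{\|\mu_\ell^\star-\bar{\mu}^\star\|^2}{2\sigma^2}=\frac{\mathrm{SNR}}{2},
\end{align}
where the last equality is the definition~\eqref{eq:snr-def}. This proves~\eqref{eq:MI_energy_upper_bound_raw}. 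An equivalent route is convexity of $D_{\mathrm{KL}}(P_\ell\|\cdot)$ applied to the mixture, combined with the pairwise KL formula. That route gives the weaker constant $\frac{1}{K^2}\sum_{\ell,j}\|\mu_\ell^\star-\mu_j^\star\|^2/(2\sigma^2)=\mathrm{SNR}$, which is off by a factor of 2. This is why I use the golden formula with the centered Gaussian $Q$.

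Finally, plug the bound into~\eqref{eq:bayes_error_lb_in_terms_of_MI}. Since $x\mapsto-\sqrt{x/2}$ is decreasing,
\begin{align}
P_{\mathrm{err}}^\star\ge 1-\frac{1}{K}-\sqrt{\mathrm{SNR}/4}=1-\frac{1}{K}-\frac{1}{2}\sqrt{\mathrm{SNR}}.
\end{align}

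The only step needing care is the variational identity; everything else is a direct computation.
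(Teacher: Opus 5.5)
Your proof is correct, but it reaches the bound by a different route from the paper. The paper writes $Y=\mu_L+\sigma Z$ and uses $I(L;Y)=h(Y)-h(\sigma Z)$. It then bounds $h(Y)$ by the entropy of a Gaussian with covariance $\Sigma_X+\sigma^2 I_d$ (the maximum-entropy principle), which gives $I(L;Y)\le \frac12\log\det\bigl(I_d+\sigma^{-2}\Sigma_X\bigr)$, and only then linearizes with $\log(1+t)\le t$.

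You instead use the golden formula $\frac1K\sum_\ell D_{\mathrm{KL}}(P_\ell\|Q)=I(L;Y)+D_{\mathrm{KL}}(P_Y\|Q)$ with the reference $Q=\mathcal{N}(\bar{\mu}^\star,\sigma^2 I_d)$. This gives $\mathrm{SNR}/2$ in one line, using only the closed-form divergence between equal-covariance Gaussians and never touching differential entropy.

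The two arguments are closely related. If you take the moment-matched reference $Q=\mathcal{N}(\bar{\mu}^\star,\Sigma_X+\sigma^2 I_d)$ in your formula, you recover exactly the paper's intermediate $\log\det$ bound. So the paper's route buys the sharper capacity-type bound, which grows only logarithmically at high SNR, before it is linearized. Your choice of reference skips that step and lands directly on the trace bound the statement needs.

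Your route also yields the exact identity $I(L;Y)=\mathrm{SNR}/2-D_{\mathrm{KL}}\bigl(P_Y\,\|\,\mathcal{N}(\bar{\mu}^\star,\sigma^2 I_d)\bigr)$, which pins down the slack in the bound.

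One wording fix: you call the convexity argument ``an equivalent route'' and then correctly observe that it loses a factor of $2$. It is not equivalent, only weaker.
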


Corollary~\ref{cor:upper-bound-mutual-information}, proved in Appendix~\ref{sec:corr-upper-bound-mutual-information}, implies that whenever $\mathrm{SNR}\to 0$, the Bayes error approaches the random-guessing benchmark at a rate proportional to $\sigma^{-1}$:
\begin{align}
    P_{\mathrm{err}}^\star(\bm{\mu}^\star,\sigma) = 1-\frac{1}{K}-o(1),
    \qquad \sigma\to\infty.
    \label{eq:random_guessing_limit}
\end{align}
Thus, in the low-SNR regime, even the optimal classifier cannot reliably recover the latent labels from a single sample. This provides an information-theoretic explanation for the brittleness of hard-assignment methods in the same regime: if the labels themselves are nearly unrecoverable, then any procedure that relies on accurate hard per-sample decisions is necessarily unstable.

\subsection{High-SNR regime}

We next complement the low-SNR impossibility result with a high-SNR guarantee. When the component means are well separated relative to the noise level, the Bayes classifier is effectively a nearest-mean rule, and errors occur only when the noise pushes an observation across one of the pairwise decision hyperplanes. Consequently, the misclassification probability decays exponentially fast in the $\Delta_{\min}^2/\sigma^2$ ratio.
The next proposition makes this precise; its proof is given in Appendix~\ref{sec:proof-of-highSNR-classification}.

\begin{proposition}[High-SNR exponential upper bound on misclassification]
\label{prop:highSNR_exp_bound}
Recall the definition of the pairwise separations $\Delta_{\ell j}\triangleq \|\mu_\ell^\star-\mu_j^\star\|$, and $\Delta_{\min}\triangleq \min_{\ell\neq j}\Delta_{\ell j}$ from~\eqref{eq:delta_min_def}.
Then, the Bayes misclassification probability~\eqref{eq:bayes_error_def_kmeans} satisfies
\begin{align}
    P_{\mathrm{err}}^\star(\bm{\mu}^\star,\sigma) \le \frac{1}{2K}\sum_{\ell\in[K]}\sum_{j\in[K]\setminus\{\ell\}}\exp\!\left(-\frac{\Delta_{\ell j}^2}{8\sigma^2}\right).    \label{eq:highSNR_pairwise_sum_bound}
\end{align}
In particular,
\begin{align}
    P_{\mathrm{err}}^\star(\bm{\mu}^\star,\sigma) \le \frac{K-1}{2}\exp\!\left(-\frac{\Delta_{\min}^2}{8\sigma^2}\right).
    \label{eq:highSNR_minsep_bound}
\end{align}
\end{proposition}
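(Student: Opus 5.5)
By Proposition~\ref{prop:bayes_is_nearest_mean}, the Bayes classifier is the nearest-center rule $\psi^\star(y)\in\argmin_{\ell}\|y-\mu_\ell^\star\|^2$. Its error probability therefore equals $P_{\mathrm{err}}^\star$, and we bound it by a union bound over pairwise confusions. We condition on the label:
\begin{align*}
P_{\mathrm{err}}^\star=\frac{1}{K}\sum_{\ell\in[K]}\mathbb{P}\big(\psi^\star(Y)\neq \ell \mid L=\ell\big).
\end{align*}
Given $L=\ell$, a misclassification requires that some $j\neq\ell$ satisfy $\|Y-\mu_j^\star\|\le\|Y-\mu_\ell^\star\|$. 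Ties occur on a null set, so the way $\psi^\star$ breaks them is irrelevant. The union bound then gives
\begin{align*}
\mathbb{P}(\psi^\star(Y)\neq\ell\mid L=\ell)\le\sum_{j\neq\ell}\mathbb{P}\big(\|Y-\mu_j^\star\|^2\le\|Y-\mu_\ell^\star\|^2\mid L=\ell\big).
\end{align*}

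Each pairwise event reduces to a one-dimensional Gaussian tail. Write $Y=\mu_\ell^\star+\sigma Z$ with $Z\sim\mathcal{N}(0,I_d)$, and let $u=(\mu_j^\star-\mu_\ell^\star)/\Delta_{\ell j}$. Expanding the squares shows that the event is equivalent to $\sigma\langle Z,u\rangle\ge \Delta_{\ell j}/2$, i.e.\ to $Y$ lying on the far side of the bisecting hyperplane. Since $\langle Z,u\rangle\sim\mathcal{N}(0,1)$, the pairwise probability is exactly $Q(\Delta_{\ell j}/(2\sigma))$, where $Q$ is the standard normal tail.

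The only slightly delicate point is the constant $1/2$ in front of the double sum. For this we use the sharpened Chernoff bound $Q(x)\le \tfrac12 e^{-x^2/2}$ for $x\ge0$. A standard proof: for $x\ge0$,
\begin{align*}
Q(x)=\int_x^\infty\varphi(t)\,dt=\int_0^\infty\varphi(x+s)\,ds\le e^{-x^2/2}\int_0^\infty\varphi(s)\,ds=\tfrac12e^{-x^2/2},
\end{align*}
where the inequality uses $e^{-(x+s)^2/2}\le e^{-x^2/2}e^{-s^2/2}$, which holds because $xs\ge0$. With $x=\Delta_{\ell j}/(2\sigma)$ this gives $\tfrac12\exp(-\Delta_{\ell j}^2/(8\sigma^2))$. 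Summing over $j\neq\ell$ and averaging over $\ell$ yields \eqref{eq:highSNR_pairwise_sum_bound}.

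For \eqref{eq:highSNR_minsep_bound}, bound every term using $\Delta_{\ell j}\ge\Delta_{\min}$. The double sum has $K(K-1)$ terms, so we obtain $\frac{1}{2K}K(K-1)\exp(-\Delta_{\min}^2/(8\sigma^2))=\frac{K-1}{2}\exp(-\Delta_{\min}^2/(8\sigma^2))$.
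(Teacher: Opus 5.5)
Your proposal is correct and follows essentially the same route as the paper: the Bayes rule is the nearest-center rule, you condition on the label, apply a union bound over the pairwise hyperplane-crossing events (each reducing to the one-dimensional Gaussian tail $\Phi_{\mathcal{N}}(-\Delta_{\ell j}/(2\sigma))$), then use $\Phi_{\mathcal{N}}(-t)\le\frac12 e^{-t^2/2}$ and $\Delta_{\ell j}\ge\Delta_{\min}$. The only difference is that you include a short, valid self-contained proof of that tail bound, which the paper simply cites as standard.
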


\begin{remark}
\label{remark:sharper_asymptotics_main}
Proposition~\ref{prop:highSNR_exp_bound} applies for every noise level $\sigma$. If one is interested in sharper high-SNR asymptotics as $\sigma \to 0$, then the Gaussian tail bound used in the proof can be refined, yielding the same exponential rate as in~\eqref{eq:highSNR_minsep_bound} but with a sharper prefactor; in particular, this prefactor can be bounded uniformly by a term of order $\sigma/\Delta_{\min}$. See the corresponding Remark~\ref{rem:mills_ratio_high_snr} for details.
\end{remark}

Proposition~\ref{prop:highSNR_exp_bound} shows that latent-label recovery undergoes a sharp transition across SNR regimes. In the high-SNR regime, the Bayes error is exponentially small, and nearest-center decisions are reliable. In the low-SNR regime, by contrast, the Bayes error is essentially indistinguishable from random guessing. This dichotomy mirrors the behavior of hard-assignment mean estimation established in Section~\ref{sec:main-results}: the high-SNR estimate~\eqref{eq:highSNR_minsep_bound} is the clustering analogue of Corollary~\ref{prop:highSNR_mean_upper}, whereas the near-random-guessing regime in~\eqref{eq:random_guessing_limit} is the decision-theoretic counterpart of the low-SNR hard-assignment drift in Corollary~\ref{cor:lowSNR_HA_mismatch}. Thus, hard assignment is accurate when the Voronoi partition induced by the true means is reliable, but it becomes fundamentally fragile once that partition ceases to carry substantial label information.

\section{Symmetric two-component Gaussian mixtures}
\label{sec:theoretical-results-structured-GMM}
We now specialize the general theory to the two-component case. Consider first a general two-component configuration $(\mu_1,\mu_2)$ in Model~\ref{model:gmm_isotropic_equal} with $K=2$. Define its midpoint and half-difference by $m \triangleq \frac{1}{2}(\mu_1+\mu_2)$, and $\mu \triangleq \frac{1}{2}(\mu_1-\mu_2)$.
Then, the means may be written as $\mu_1=m+\mu$ and $\mu_2=m-\mu$. Since the population $k$-means objective is invariant under joint translation of the data and component means, and $d_{\mathrm{perm}}$ is likewise translation-invariant, any such two-component model can be reduced, without loss of generality, to the centered symmetric form obtained by subtracting $m$.
Indeed, with $Y'\triangleq Y-m$, the translated model becomes
\begin{align}
    Y' \sim \frac{1}{2}\,\mathcal{N}(\mu,\sigma^2 I_d) + \frac{1}{2}\,\mathcal{N}(-\mu,\sigma^2 I_d). \label{eq:K2_d_gmm_2}
\end{align}

We therefore work throughout this section with the isotropic symmetric two-component GMM~\eqref{eq:K2_d_gmm_2}, so that the ground-truth center configuration is $\bm{\mu}^\star=(\mu,-\mu)\in(\mathbb{R}^d)^2$.
This is the simplest nontrivial benchmark in which the two main phenomena identified above already appear: low-SNR failure of hard assignment and high-SNR recovery. At the same time, the model is sufficiently explicit to permit closed-form analysis, allowing the general qualitative picture of Sections~\ref{sec:main-results} and~\ref{sec:latent-label-misclassification} to be sharpened into exact formulas.

Since the mixture is centered, $\bar{\mu}^\star=0$, the SNR defined
in~\eqref{eq:snr-def} reduces to
\begin{align}
    \mathrm{SNR} = \frac{1}{\sigma^2}\cdot \frac{1}{2}\bigl(\|\mu\|^2+\|-\mu\|^2\bigr) = \frac{\|\mu\|^2}{\sigma^2}.
    \label{eq:SNR_K2_identity}
\end{align}
We begin with the component mean estimation error of the population hard-assignment. In the symmetric $K=2$ model, its permutation-invariant error admits an explicit closed form, as stated next.

\begin{thm}[Population hard-assignment error for symmetric $K=2$]
\label{thm:perm_MSE_d_K2}
Fix $d\ge 1$ and consider the symmetric two-component model~\eqref{eq:K2_d_gmm_2}. Let $\bm{\mu}_{\mathrm{HA}}^\star(\sigma)$ denote the population hard-assignment minimizer introduced in Section~\ref{sec:main-results}. Then, 
\begin{align}
    \frac{d_{\mathrm{perm}}^2\!\big(\bm{\mu}_{\mathrm{HA}}^\star(\sigma),\bm{\mu}^\star\big)}{\|\bm{\mu}^\star\|_F^2} = \frac{1}{\pi\,\|\mu\|^2} \left(\sqrt{2}\,\sigma\,\exp\!\Big(-\frac{\|\mu\|^2}{2\sigma^2}\Big) -  \operatorname{erfc}\!\Big(\frac{\|\mu\|}{\sqrt{2}\sigma}\Big)\,\|\mu\| \right)^2.
    \label{eq:K2_d_permMSE_normalized}
\end{align}
\end{thm}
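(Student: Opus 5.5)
The plan is to identify $\bm{\mu}_{\mathrm{HA}}^\star(\sigma)$ explicitly as $(c\,u,-c\,u)$ with $u\triangleq\mu/\|\mu\|$ and $c=\mathbb{E}|X|$, where $X\triangleq\langle Y',u\rangle$ is the one-dimensional projection. Once this is done, the error formula is a routine Gaussian computation. The only delicate point is the structural step, showing that the minimizer is antipodal and aligned with $\mu$. I would handle it by symmetry combined with the uniqueness-up-to-permutation part of Assumption~\ref{ass:unique_minimizers_up_to_perm}, rather than by a direct variational argument.

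\textbf{Step 1 (symmetry reduction).} The law of $Y'$ in~\eqref{eq:K2_d_gmm_2} is invariant under the reflection $y\mapsto -y$ and under every orthogonal map $R$ with $Ru=u$. The objective $\Phi$ in~\eqref{eq:kmeans_population} is equivariant under such maps, i.e.\ $\Phi(R\mu_1,R\mu_2)=\Phi(\mu_1,\mu_2)$, so these maps send minimizers to minimizers. By uniqueness up to permutation, $(-\mu_1,-\mu_2)$ must equal $(\mu_1,\mu_2)$ up to a swap. This forces either $\mu_1=\mu_2=0$ or $\mu_2=-\mu_1$.

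The collapsed option is excluded because its cost is $\mathbb{E}\|Y'\|^2$, while $(cu,-cu)$ with small $c>0$ has strictly smaller cost (see Step 2). For the antipodal option, the set $\{\mu_1,-\mu_1\}$ must be invariant under all rotations fixing $u$. For $d\ge2$ this forces the component of $\mu_1$ orthogonal to $u$ to vanish: a nonzero orthogonal part could be rotated to any other direction, not only to its negative, once $d\ge3$. The case $d=2$ is handled by the reflection across the line $\mathbb{R}u$ together with the global reflection. Hence $\bm{\mu}_{\mathrm{HA}}^\star=(cu,-cu)$ for some $c\ge0$.

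\textbf{Step 2 (one-dimensional optimization).} For centers $\pm cu$, the Voronoi boundary is the hyperplane $\{\langle y,u\rangle=0\}$. This gives
\begin{align*}
\min\big(\|Y'-cu\|^2,\|Y'+cu\|^2\big)=\|Y'\|^2-2c|X|+c^2,
\end{align*}
so that $\Phi=\mathbb{E}\|Y'\|^2-2c\,\mathbb{E}|X|+c^2$. This is minimized at $c=\mathbb{E}|X|>0$, which also confirms the exclusion of collapse in Step 1. Here $X$ has the law $\tfrac12\mathcal{N}(m,\sigma^2)+\tfrac12\mathcal{N}(-m,\sigma^2)$ with $m=\|\mu\|$. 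By symmetry $\mathbb{E}|X|=\mathbb{E}|Z|$ for $Z\sim\mathcal{N}(m,\sigma^2)$, and the folded-normal mean is
\begin{align*}
\mathbb{E}|Z|=\sigma\sqrt{2/\pi}\,e^{-m^2/(2\sigma^2)}+m\Big(1-\operatorname{erfc}\big(m/(\sqrt2\sigma)\big)\Big).
\end{align*}

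\textbf{Step 3 (error).} Next I would check that the permutation minimizing $d_{\mathrm{perm}}$ is the identity, using $c>0$ and $m>0$. This gives $d_{\mathrm{perm}}^2=2(c-m)^2$ and $\|\bm{\mu}^\star\|_F^2=2m^2$. Therefore the normalized error is $(c-m)^2/m^2$, with
\begin{align*}
c-m=\sigma\sqrt{2/\pi}\,e^{-m^2/(2\sigma^2)}-m\operatorname{erfc}\big(m/(\sqrt2\sigma)\big).
\end{align*}
Factoring out $1/\sqrt{\pi}$ then produces the form~\eqref{eq:K2_d_permMSE_normalized}. In doing so I would double-check the constant multiplying the $\operatorname{erfc}$ term, since my computation yields a $\sqrt{\pi}$ there after factoring, and this should be reconciled against the stated formula.

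The main obstacle is Step 1. If one prefers not to lean on Assumption~\ref{ass:unique_minimizers_up_to_perm}, a substitute is a direct argument: any optimal 2-means partition is a half-space, and the between-cluster term $p(1-p)\|a-b\|^2$ is maximized by the half-space $\{\langle y,u\rangle>0\}$. That route requires a more careful variational inequality, so I would take the symmetry route first.
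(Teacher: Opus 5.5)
\textbf{Gap: Step 1 fails for $d=2$.} In $d=2$ the only orthogonal maps fixing $u$ are the identity and the reflection across $\mathbb{R}u$. Write $\mu_1=au+w$ with $w\perp u$. Requiring $\{\mu_1,-\mu_1\}$ to be invariant under that reflection gives only ``$w=0$ or $a=0$''. The global reflection adds nothing, since it always sends $\mu_1$ to $-\mu_1$. In fact the configuration $(bv,-bv)$ with $v\perp u$ is invariant, up to swap, under the entire symmetry group of the law. So symmetry plus uniqueness cannot exclude a minimizer orthogonal to the signal direction.

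\textbf{Repair.} Apply your Step~2 identity in an arbitrary unit direction $v$: $\min(\|y-cv\|^2,\|y+cv\|^2)=\|y\|^2-2c|\langle y,v\rangle|+c^2$. Hence the best antipodal configuration along $v$ costs $\mathbb{E}\|Y'\|^2-(\mathbb{E}|\langle Y',v\rangle|)^2$. Write $Y'=S\mu+\sigma Z$ with $S$ a uniform sign, so that $\langle Y',v\rangle=Sm\langle u,v\rangle+\sigma\langle Z,v\rangle$. Then $\mathbb{E}|\langle Y',v\rangle|=f(m|\langle u,v\rangle|)$, where $f(\theta)\triangleq\mathbb{E}|\theta+\sigma G|$ with $G\sim\mathcal{N}(0,1)$. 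Since $f'(\theta)=\operatorname{erf}(\theta/(\sqrt{2}\sigma))>0$ for $\theta>0$, the unique optimal direction is $v=\pm u$. This argument works in every dimension. Once the global reflection gives $\mu_2=-\mu_1$, the rotation argument is no longer needed.

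\textbf{Domain caveat.} Assumption~\ref{ass:unique_minimizers_up_to_perm} gives uniqueness only on a compact $\mathcal{U}$ with no symmetry imposed. Your reflection step therefore needs either that $\mathcal{U}$ is invariant under $(\mu_1,\mu_2)\mapsto(-\mu_1,-\mu_2)$ and contains $(cu,-cu)$, or uniqueness over all of $(\mathbb{R}^d)^2$.

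\textbf{Steps 2--3 and the constant.} With that repair the rest is correct, and you should trust your own constant. With $m=\|\mu\|$ and $c=\mathbb{E}|X|$, the normalized error is
$(c-m)^2/m^2=\frac{1}{\pi m^2}\bigl(\sqrt{2}\,\sigma e^{-m^2/(2\sigma^2)}-\sqrt{\pi}\,m\operatorname{erfc}(m/(\sqrt{2}\sigma))\bigr)^2$.
This is exactly the paper's appendix expression, whose bracket is $\sqrt{2/\pi}\,\sigma e^{-m^2/(2\sigma^2)}-m\operatorname{erfc}(\cdot)$. The displayed statement has simply dropped the $\sqrt{\pi}$. Without that factor the bracket behaves like $(1-\pi^{-1/2})\sqrt{2}\,\sigma e^{-m^2/(2\sigma^2)}$ as $\sigma\to0$. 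That loses the cancellation responsible for the $\mathrm{SNR}^{-3}e^{-\mathrm{SNR}}$ rate in Corollary~\ref{cor:K2_HA_asymptotics}, so do not reconcile toward the printed version.

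\textbf{Comparison with the paper's route.} The paper starts directly from antipodal centers $\pm au$ and reads off the half-space Voronoi cell. It then applies the centroid condition $\mu_{\mathrm{HA},1}^\star=\mathbb{E}[Y\mid T\ge 0]$ and uses independence of $T=\langle u,Y\rangle$ from the orthogonal part $W$ to get $u\,\mathbb{E}|T|$. It never justifies why the minimizer is antipodal and aligned with $u$. Your route, once repaired, actually proves that structural step. It also obtains $c$ by minimizing an explicit quadratic instead of using the centroid and independence argument. The final folded-normal computation is the same in both.
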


Theorem~\ref{thm:perm_MSE_d_K2}, proved in Appendix \ref{app:MSEForK2}, gives the exact population error of hard assignment in the symmetric two-component model. In particular, the error depends only on the SNR in~\eqref{eq:SNR_K2_identity}. The next corollary records the resulting low- and high-SNR asymptotics.

\begin{corollary}[Low- and high-SNR asymptotics for the population hard-assignment error]
\label{cor:K2_HA_asymptotics}
In the setting of Theorem~\ref{thm:perm_MSE_d_K2}, the population hard-assignment error satisfies the following asymptotics:
\begin{align}
    \mathrm{MSE}\big(\bm{\mu}_{\mathrm{HA}}^\star(\sigma),\bm{\mu}^\star\big)
    =
    \begin{cases}
        \dfrac{2}{\pi }\,\dfrac{1}{\mathrm{SNR}}\bigl(1+o(1)\bigr),
        & \mathrm{SNR}\to0, \\[3ex]
        \dfrac{2\exp\{-\mathrm{SNR}\}}{\pi}\,\dfrac{1}{(\mathrm{SNR})^3} \left(1+O\!\left(\dfrac{1}{\mathrm{SNR}}\right)\right),
        & \mathrm{SNR}\to\infty.
    \end{cases}
\end{align}
\end{corollary}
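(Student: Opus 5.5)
The plan is a direct asymptotic expansion of the closed form in Theorem~\ref{thm:perm_MSE_d_K2}. Since $\bm{\mu}_{\mathrm{HA}}^\star(\sigma)$ is deterministic, the expectation in \eqref{eq:mseDef_gmm} is trivial, so $\mathrm{MSE}$ equals the normalized squared distance in \eqref{eq:K2_d_permMSE_normalized}. Set $s\triangleq\|\mu\|/\sigma$, so that $\mathrm{SNR}=s^2$ by \eqref{eq:SNR_K2_identity}. Pulling $\sigma$ out of the bracket makes the error a function of $s$ alone:
\begin{align}
\mathrm{MSE}=\frac{1}{\pi s^2}\,B(s)^2 .
\end{align}

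\textbf{Correcting the bracket first.} I would rederive $B(s)$ from the population fixed point rather than copy it from \eqref{eq:K2_d_permMSE_normalized}. By symmetry the minimizer is $(c,-c)$ with $c$ along $\mu$. The half-space centroid is
\begin{align}
c=\mathbb{E}[Y\mid \langle Y,\mu\rangle>0]
=\Big(\sigma\sqrt{2/\pi}\,e^{-s^2/2}+\|\mu\|\operatorname{erf}(s/\sqrt2)\Big)\frac{\mu}{\|\mu\|}.
\end{align}
Hence $\|c-\mu\| = \bigl|\sigma\sqrt{2/\pi}\,e^{-s^2/2}-\|\mu\|\operatorname{erfc}(s/\sqrt2)\bigr|$. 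Since $\|\bm{\mu}^\star\|_F^2=2\|\mu\|^2$, the normalized error is $\|c-\mu\|^2/\|\mu\|^2$, which gives
\begin{align}
B(s)=\sqrt2\,e^{-s^2/2}-\sqrt{\pi}\,s\,\operatorname{erfc}(s/\sqrt2).
\end{align}
This differs from the displayed formula \eqref{eq:K2_d_permMSE_normalized} by a factor $\sqrt\pi$ on the erfc term. The displayed version, as written, cannot produce the stated high-SNR rate, because its leading terms do not cancel. I would therefore work with this corrected $B$ and flag the discrepancy.

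\textbf{Low SNR ($s\to0$).} Both $e^{-s^2/2}\to1$ and $\operatorname{erfc}(s/\sqrt2)\to1$, so $s\operatorname{erfc}(s/\sqrt2)\to0$ and $B(s)=\sqrt2+O(s)$. This gives $\mathrm{MSE}=\frac{2}{\pi s^2}(1+o(1))=\frac{2}{\pi\,\mathrm{SNR}}(1+o(1))$.

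\textbf{High SNR ($s\to\infty$).} I would use the standard asymptotic expansion of the complementary error function (Mills ratio),
\begin{align}
\operatorname{erfc}(x)=\frac{e^{-x^2}}{x\sqrt\pi}\Big(1-\frac{1}{2x^2}+\frac{3}{4x^4}+O(x^{-6})\Big),
\end{align}
at $x=s/\sqrt2$. This gives
\begin{align}
\sqrt\pi\, s\operatorname{erfc}(s/\sqrt2)=\sqrt2\,e^{-s^2/2}\Big(1-\frac{1}{s^2}+\frac{3}{s^4}+O(s^{-6})\Big).
\end{align}
The order-one terms of $B(s)$ cancel exactly, leaving
\begin{align}
B(s)=\sqrt2\,e^{-s^2/2}\,s^{-2}\bigl(1+O(s^{-2})\bigr).
\end{align}
Therefore $\mathrm{MSE}=\frac{2e^{-s^2}}{\pi s^6}\bigl(1+O(s^{-2})\bigr)$, which is the claimed $\frac{2e^{-\mathrm{SNR}}}{\pi\,\mathrm{SNR}^3}\bigl(1+O(1/\mathrm{SNR})\bigr)$.

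\textbf{Main obstacle.} The delicate step is the high-SNR cancellation. The leading term vanishes, so the erfc expansion must be carried to second order, with a remainder that is controlled uniformly. I would justify the $O(x^{-6})$ remainder through the alternating bounds obtained by repeated integration by parts of $\int_x^\infty e^{-t^2}\,dt$. The constant in front is fixed correctly only because the erfc term carries the $\sqrt\pi$ factor.
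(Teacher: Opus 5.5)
Your proposal is correct and takes essentially the same route as the paper, which also expands the closed form directly. The paper uses the small-argument $\operatorname{erfc}$ expansion for $\mathrm{SNR}\to0$, and the large-argument expansion (needed only through the $-1/(2x^2)$ term with an $O(x^{-4})$ remainder) to exhibit the cancellation for $\mathrm{SNR}\to\infty$. Your flag about the missing $\sqrt{\pi}$ is right: the displayed formula \eqref{eq:K2_d_permMSE_normalized} contains a typo, and the paper's proof actually expands the appendix expression \eqref{eq:1DMse}, which coincides with your corrected $B(s)$.
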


Corollary~\ref{cor:K2_HA_asymptotics}, proved in Appendix~\ref{app:AsymptoticSigmaInfK2}, makes the dichotomy from Section~\ref{sec:main-results} completely explicit in the symmetric two-component model. In low SNR, the population hard-assignment error grows on the order of $\mathrm{SNR}^{-1}$, in agreement with the general lower-bound phenomenon from Corollary~\ref{cor:lowSNR_HA_mismatch}. In high SNR, the error is exponentially small, consistent with Corollary~\ref{prop:highSNR_mean_upper}, but here the symmetric structure yields an exact closed form and sharp asymptotics. 

Next, we specialize the latent-label clustering analysis from Section~\ref{sec:latent-label-misclassification} into the same symmetric $K=2$ model. In this setting, the Bayes misclassification probability also admits a closed form. The proof of Proposition~\ref{prop:bayes_error_K2} is deferred to Appendix~\ref{sec:proof_bayes_error_K2}.

\begin{proposition}[Bayes misclassification error for symmetric $K=2$]
\label{prop:bayes_error_K2}
Fix $d\ge 1$ and consider the model~\eqref{eq:K2_d_gmm_2}. Then the Bayes misclassification probability~\eqref{eq:bayes_error_def_kmeans} is
\begin{align}
    P_{\mathrm{err}}^\star(\bm{\mu}^\star,\sigma) = \frac{1}{2}\,\operatorname{erfc}\!\left(\frac{\|\mu\|}{\sqrt{2}\,\sigma}\right) = \frac{1}{2}\,\operatorname{erfc}\!\left(\sqrt{\frac{\mathrm{SNR}}{2}}\right). \label{eq:bayes_error_K2_closed_form}
\end{align}
\end{proposition}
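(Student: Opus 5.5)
By Proposition~\ref{prop:bayes_is_nearest_mean}, the Bayes-optimal classifier in the symmetric model~\eqref{eq:K2_d_gmm_2} is the nearest-center rule with centers $\mu$ and $-\mu$. The plan is to compute the error of this rule directly. Label the two components so that $L=0$ corresponds to mean $\mu$ and $L=1$ to mean $-\mu$. The rule then selects label $0$ exactly when $\|y-\mu\|^2<\|y+\mu\|^2$. Expanding both squares, this is equivalent to $\langle y,\mu\rangle>0$. Hence the Voronoi partition is the pair of half-spaces separated by the hyperplane through the origin orthogonal to $\mu$. The tie set $\{\langle y,\mu\rangle=0\}$ has Lebesgue measure zero, so tie-breaking does not affect the error.

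Next I will condition on the label. By the symmetry $y\mapsto -y$, the two conditional errors coincide, so
\begin{align*}
P_{\mathrm{err}}^\star=\tfrac12\,\mathbb{P}(\langle Y,\mu\rangle\le 0\mid L=0)+\tfrac12\,\mathbb{P}(\langle Y,\mu\rangle\ge 0\mid L=1)=\mathbb{P}(\langle Y,\mu\rangle\le 0\mid L=0).
\end{align*}
Given $L=0$, write $Y=\mu+\sigma Z$ with $Z\sim\mathcal N(0,I_d)$. Then $\langle Y,\mu\rangle=\|\mu\|^2+\sigma\|\mu\|\,G$, where $G=\langle Z,\mu\rangle/\|\mu\|\sim\mathcal N(0,1)$ by rotational invariance. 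This reduction to a scalar Gaussian is what makes the answer independent of $d$.

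The error is therefore $\mathbb{P}(G\le -\|\mu\|/\sigma)=Q(\|\mu\|/\sigma)$, where $Q$ is the standard Gaussian tail. Using the identity $Q(x)=\tfrac12\operatorname{erfc}(x/\sqrt2)$ gives the first expression in~\eqref{eq:bayes_error_K2_closed_form}. Substituting $\|\mu\|/\sigma=\sqrt{\mathrm{SNR}}$ from~\eqref{eq:SNR_K2_identity} gives the second.

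No step is a real obstacle. The only points needing care are to invoke Proposition~\ref{prop:bayes_is_nearest_mean} so that the infimum in~\eqref{eq:bayes_error_def_kmeans} is attained by the half-space rule, and to note that boundary ties have probability zero.
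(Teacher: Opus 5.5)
Your proposal is correct and follows essentially the same route as the paper: invoke the nearest-center Bayes rule, reduce it to the sign of the projection $\langle u,Y\rangle$ with $u=\mu/\|\mu\|$, compute the conditional Gaussian tail $\frac{1}{2}\operatorname{erfc}\!\bigl(\|\mu\|/(\sqrt{2}\sigma)\bigr)$ given one label, and combine the two labels by symmetry and equal weights. Your explicit remarks on the measure-zero tie set and on the substitution $\|\mu\|/\sigma=\sqrt{\mathrm{SNR}}$ fill in details the paper leaves implicit.
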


The next corollary, which is proved in Appendix~\ref{sec:proof_bayes_error_K2_asymptotics}, records the corresponding low- and high-SNR asymptotics.

\begin{corollary}[Low- and high-SNR asymptotics for the symmetric $K=2$ Bayes error]
\label{cor:bayes_error_K2_asymptotics}
In the setting of Proposition~\ref{prop:bayes_error_K2},
\begin{align}
    P_{\mathrm{err}}^\star(\bm{\mu}^\star,\sigma)
    =
    \begin{cases}
        \dfrac{1}{2} -\sqrt{\dfrac{\mathrm{SNR}}{2\pi}} +        O\!\bigl((\mathrm{SNR})^{3/2}\bigr),
        & \mathrm{SNR}\to0, \\[3ex]
        \dfrac{1}{\sqrt{2\pi \mathrm{SNR}}}\,
        \exp\!\left(-\dfrac{\mathrm{SNR}}{2}\right)\bigl(1+o(1)\bigr),
        & \mathrm{SNR}\to\infty.
    \end{cases}    \label{eq:bayes_error_K2_lowSNR}
\end{align}
\end{corollary}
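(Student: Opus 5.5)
The plan is to Taylor-expand the closed form of Proposition~\ref{prop:bayes_error_K2} in both regimes. Set $x \triangleq \sqrt{\mathrm{SNR}/2}$, so that $P_{\mathrm{err}}^\star = \tfrac12\operatorname{erfc}(x)$. Both asymptotics then reduce to standard one-variable expansions of the complementary error function: a Maclaurin series as $x\to 0$, and the Mills-ratio (Laplace-type) asymptotic as $x\to\infty$.

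For the low-SNR case, I would write $\operatorname{erfc}(x) = 1-\operatorname{erf}(x)$ and use
\begin{align}
\operatorname{erf}(x)=\frac{2}{\sqrt{\pi}}\int_0^x e^{-t^2}\,dt .
\end{align}
Since $e^{-t^2} = 1 + O(t^2)$ uniformly for $t\in[0,1]$, integration gives $\operatorname{erf}(x) = \tfrac{2}{\sqrt{\pi}}\bigl(x + O(x^3)\bigr)$ as $x\to 0$. Hence
\begin{align}
P_{\mathrm{err}}^\star = \tfrac12 - \tfrac{x}{\sqrt\pi} + O(x^3).
\end{align}
Substituting $x=\sqrt{\mathrm{SNR}/2}$ gives $x/\sqrt{\pi} = \sqrt{\mathrm{SNR}/(2\pi)}$ and $O(x^3) = O(\mathrm{SNR}^{3/2})$, which is the first line of~\eqref{eq:bayes_error_K2_lowSNR}.

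For the high-SNR case, I would use the standard bounds, valid for $x>0$:
\begin{align}
\frac{e^{-x^2}}{\sqrt{\pi}}\Big(\frac{1}{x}-\frac{1}{2x^3}\Big)\le \operatorname{erfc}(x)\le \frac{e^{-x^2}}{\sqrt{\pi}\,x}.
\end{align}
To justify them, substitute $t = x+s$ in $\operatorname{erfc}(x) = \tfrac{2}{\sqrt\pi}\int_x^\infty e^{-t^2}\,dt$ and integrate by parts once for the upper bound and twice for the lower bound. Equivalently, one can differentiate $e^{-x^2}/x - \tfrac{\sqrt\pi}{1}\operatorname{erfc}(x)$ and check its sign. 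These bounds give
\begin{align}
\operatorname{erfc}(x) = \frac{e^{-x^2}}{\sqrt\pi\,x}\bigl(1+O(x^{-2})\bigr), \qquad x\to\infty .
\end{align}
With $x^2 = \mathrm{SNR}/2$, we get $\tfrac{1}{2\sqrt{\pi}\,x} = \tfrac{1}{\sqrt{2\pi\,\mathrm{SNR}}}$ and $e^{-x^2} = e^{-\mathrm{SNR}/2}$. The correction is $O(1/\mathrm{SNR}) = o(1)$, which yields the second line.

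No step is a real obstacle. The only point needing care is a rigorous, rather than formal, justification of the erfc tail asymptotic, and the explicit sandwich bounds above handle that. It also gives the slightly stronger $O(1/\mathrm{SNR})$ remainder, consistent with Remark~\ref{remark:sharper_asymptotics_main}.
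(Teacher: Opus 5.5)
Your proposal is correct and follows the same route as the paper: substitute $x=\sqrt{\mathrm{SNR}/2}$ into the closed form $\tfrac12\operatorname{erfc}(x)$ from Proposition~\ref{prop:bayes_error_K2}, then use the small-$x$ expansion and the large-$x$ tail asymptotic of $\operatorname{erfc}$. The only difference is that the paper cites the Abramowitz--Stegun expansions, whereas you derive them directly (via $|e^{-t^2}-1|\le t^2$ and the sandwich bounds on $\operatorname{erfc}$, both of which check out), which also gives the explicit $O(1/\mathrm{SNR})$ remainder.
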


\paragraph{Relation to the general bounds.}
The exact formulas above refine the general results of Sections~\ref{sec:main-results} and~\ref{sec:latent-label-misclassification}. For mean estimation, Corollary~\ref{cor:K2_HA_asymptotics} shows that the general low-SNR hard-assignment drift from Corollary~\ref{cor:lowSNR_HA_mismatch} is sharp in its $\sigma^2$-scaling, and Corollary~\ref{prop:highSNR_mean_upper} captures the correct exponential high-SNR decay.

For latent-label recovery, the general information-theoretic lower bound from Proposition~\ref{prop:low_snr_lb_classification} together with Corollary~\ref{cor:upper-bound-mutual-information} yields, for $K=2$,
\begin{align}
    P_{\mathrm{err}}^\star(\bm{\mu}^\star,\sigma) \ge \frac{1}{2} - \frac{1}{2}\sqrt{\mathrm{SNR}},
    \qquad \mathrm{SNR}\to 0,    \label{eq:bayes_error_general_lowSNR_specialized_K2}
\end{align}
which captures the correct $\Theta(\sqrt{\mathrm{SNR}})$ improvement over random guessing, but not the sharp constant. The exact expansion~\eqref{eq:bayes_error_K2_lowSNR} shows that the optimal constant is $1/\sqrt{2\pi}$. Likewise, Proposition~\ref{prop:highSNR_exp_bound} with $\Delta_{\min}=2\|\mu\|$ gives
\begin{align}
    P_{\mathrm{err}}^\star(\bm{\mu}^\star,\sigma) \le \frac{1}{2}\exp\!\left(-\frac{\|\mu\|^2}{2\sigma^2}\right) = \frac{1}{2}\exp\!\left(-\frac{\mathrm{SNR}}{2}\right),    \label{eq:bayes_error_general_highSNR_specialized_K2}
\end{align}
which has the correct exponential rate. The exact asymptotic~\eqref{eq:bayes_error_K2_lowSNR} further identifies the precise polynomial prefactor $(2\pi\,\mathrm{SNR})^{-1/2}$, which coincides with the asymptotic prefactor described in Remark~\ref{remark:sharper_asymptotics_main}.

\section{Discussion and outlook}
\label{sec:outlook}

The main conclusion of this work is that Gaussian mixture mean estimation can be highly sensitive to the variance parameter used in the fitted likelihood, with the effect becoming especially severe in the low-SNR regime. Variance misspecification is therefore not a secondary modeling issue: depending on the mismatch ratio $\rho=\tau/\sigma$, it can leave the target essentially unchanged, displace it away from the truth, or induce collapsed configurations. This is especially relevant in applications where the noise level must itself be estimated from the data, such as cryo-EM~\cite{lyumkis2019challenges,bendory2020single,scheres2012relion}, since even small calibration errors can move the fitted procedure into a qualitatively different regime. More broadly, in structural biology and low-SNR imaging, including cryo-EM and cryo-ET workflows~\cite{chen2019complete,schaffer2019cryo}, likelihood-based methods such as expectation-maximization are used throughout the analysis pipeline, from 2D classification to downstream estimation of class representatives and 3D volumes. Our results suggest that in such low-SNR settings, variance misspecification can induce systematic bias already at the population level, and not merely through finite-sample fluctuations.

A second conclusion is that the population geometry studied here has a direct counterpart in latent-label recovery and hard-assignment estimation, with implications for practical pipelines that rely on hard clustering or hard-assignment mean updates. While hard assignments are often introduced for computational simplicity, treating them as reliable estimates of latent component membership can be fundamentally misleading in low SNR. When latent labels are intrinsically difficult to recover, hard decisions at the level of individual observations can induce a biased population target. This is relevant to procedures in which samples are filtered, reassigned, or clustered using discrete labels, as in filtration stages that often follow 2D classification in cryo-EM~\cite{scheres2012relion,weiss2023unsupervised}. Consequently, the resulting bias need not be a finite-sample or optimization artifact; even an idealized infinite-data version of such a hard-assignment procedure may converge to a biased solution.

\paragraph{Future work.}
Several directions for future work are especially natural. First, it would be important to extend the present framework beyond finite discrete mixtures to continuous latent-variable models. This is particularly compelling for structural-biology applications such as cryo-EM and cryo-ET, where the latent variable is typically a continuous pose and the observation model includes projection and imaging operators in addition to noise. Extending the variance-mismatch perspective to such continuous-group models could clarify when pose estimation, hard assignment, or related approximation schemes are statistically justified, and when they induce a population bias analogous to the one identified here.

Second, the present work is primarily population-level and estimator-level. A natural next step is therefore to study the algorithmic behavior of expectation-maximization itself under variance misspecification~\cite{dwivedi2018theoretical}. In particular, it would be valuable to understand when expectation-maximization converges to the population quasi-maximum-likelihood target, when it is attracted to biased local optima, and how this behavior compares with the correctly specified setting, where expectation-maximization is often used to approach the maximum-likelihood estimator. Such results would connect the geometric picture developed here with the practical behavior of one of the most widely used algorithms for latent-variable inference.

\section*{Data Availability}
The detailed implementation and code are available at \href{https://github.com/VladiSer/GMM-Variance-Misspecification.git}{https://github.com/VladiSer/GMM-Variance-Misspecification}.

\section*{Acknowledgment}
T.B. is supported in part by BSF under Grant 2020159, in part by NSF-BSF under Grant 2019752, and in part by ISF under Grant 1924/21. 

\bibliographystyle{plain}
%\bibliography{refs}

\begin{appendices}

{\centering{\section*{Appendix}}}

\section{Preliminaries}

\subsection{Hessian at the origin  configuration}

Here we analyze the population objective at the origin configuration $\bm{\mu}=\bm{0}$. Throughout, we assume that the true means are centered, namely,  $\bar{\mu}^\star=0$ (equivalently, $\mathbb{E}[Y]=0$).

\begin{remark}
\label{remark:arbitrary_tie_breaking}
Throughout the proofs, nearest-center assignments of the form $\argmin_{m\in[K]}\|y-\mu_m\|^2$ may not be uniquely defined on the Voronoi boundaries $\{y:\|y-\mu_j\|=\|y-\mu_k\|\}$. Under our isotropic Gaussian model, and under distinct means $\bm{\mu}$, these boundaries have Lebesgue measure zero. We therefore adopt an arbitrary measurable tie-breaking rule wherever such assignments appear, and all subsequent statements hold for any such choice.
\end{remark}

\begin{lem}[Hessian of $\mathcal{L}_\tau$ at $\bm{\mu}=\bm{0}$]
\label{lem:hessian_blocks}
Let $Y$ be drawn according to model~\ref{model:gmm_isotropic_equal}, and recall the definition of $\mathcal{L}_\tau(\bm{\mu};\bm{\mu}^\star) $ from~\eqref{eq:def_L_tau}. At $\bm{\mu}=\bm{0}$, the Hessian $\nabla^2_{\bm{\mu}}\mathcal{L}_\tau(\bm{0};\bm{\mu}^\star)\in\mathbb{R}^{Kd\times Kd}$ consists of $K\times K$ blocks of size $d\times d$. Its diagonal and off-diagonal blocks are
\begin{align}
    \label{eq:Diag_L_Hess_term}  \nabla^2_{\bm{\mu}}\mathcal{L}_\tau(\bm{0};\bm{\mu}^\star)_{kk} &=
    \frac{I_d}{K\tau^2} -\frac{(K-1)\,\mathbb{E}[YY^\top]}{K^2\tau^4},
    \\    \label{eq:Off_Diag_L_Hess_term}    \nabla^2_{\bm{\mu}}\mathcal{L}_\tau(\bm{0};\bm{\mu}^\star)_{kj}
    &=
    \frac{\mathbb{E}[YY^\top]}{K^2\tau^4},
    \qquad k\neq j.
\end{align}
\end{lem}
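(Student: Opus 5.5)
We differentiate the integrand $-\log p_{\bm{\mu},\tau}(y)$ twice in $\bm{\mu}$, evaluate at $\bm{\mu}=\bm{0}$, and then take expectations. The exchange of derivative and expectation is justified by dominated convergence: for $\bm{\mu}$ in a neighborhood of $\bm{0}$, all derivatives are bounded by polynomials in $\|Y\|$, and these are integrable under the Gaussian mixture.

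Write $p_{\bm{\mu},\tau}(y)=\frac{1}{K}\sum_\ell \varphi_d(y;\mu_\ell,\tau^2 I_d)$ and introduce the posterior weights $w_\ell(y)=\varphi_d(y;\mu_\ell,\tau^2I_d)/\sum_m\varphi_d(y;\mu_m,\tau^2I_d)$. Since $\nabla_{\mu_k}\varphi_d(y;\mu_k,\tau^2I_d)=\varphi_d\,(y-\mu_k)/\tau^2$, the gradient is
\[
\nabla_{\mu_k}\bigl(-\log p_{\bm{\mu},\tau}(y)\bigr)=-w_k(y)\,\frac{y-\mu_k}{\tau^2}.
\]
Differentiating once more, and using $\nabla_{\mu_j}w_k=w_k(\delta_{kj}-w_j)(y-\mu_j)/\tau^2$, we obtain
\[
\nabla^2_{\mu_k\mu_j}=\delta_{kj}\frac{w_k}{\tau^2}I_d-\frac{w_k(\delta_{kj}-w_j)}{\tau^4}(y-\mu_k)(y-\mu_j)^\top .
\]

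At $\bm{\mu}=\bm{0}$ all components coincide, so $w_\ell\equiv 1/K$ independently of $y$. The diagonal block therefore becomes $\frac{1}{K\tau^2}I_d-\frac{1}{K}\bigl(1-\frac1K\bigr)\frac{yy^\top}{\tau^4}$. Taking expectation gives~\eqref{eq:Diag_L_Hess_term}, since $\frac1K\cdot\frac{K-1}{K}=\frac{K-1}{K^2}$. The off-diagonal block becomes $+\frac{1}{K^2\tau^4}yy^\top$, whose expectation gives~\eqref{eq:Off_Diag_L_Hess_term}.

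The only step needing care is the second derivative of the posterior weights, specifically the sign and the $\delta_{kj}$ structure. We verify it via the softmax Jacobian: $w=\mathrm{softmax}(s)$ with $s_\ell=-\|y-\mu_\ell\|^2/(2\tau^2)$ and $\nabla_{\mu_j}s_j=(y-\mu_j)/\tau^2$. The integrability justification is routine.
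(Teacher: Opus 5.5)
Your proof is correct and is essentially the paper's argument: both differentiate $-\log p_{\bm{\mu},\tau}(Y)$ twice under the expectation and use that all fitted components coincide at $\bm{\mu}=\bm{0}$; your responsibility terms map exactly onto the paper's split, with $w_k w_j (y-\mu_k)(y-\mu_j)^\top/\tau^4$ matching $\nabla p\,\nabla p^\top/p^2$ and the $\delta_{kj}$ terms matching $-\nabla^2 p/p$. Your dominated-convergence justification for exchanging derivative and expectation (derivatives bounded by $C(1+\|Y\|^2)$ uniformly near $\bm{0}$, since $0\le w_\ell\le 1$) is a small addition that the paper leaves implicit.
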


\begin{proof}[Proof of Lemma~\ref{lem:hessian_blocks}]
By definition~\eqref{eq:def_L_tau}, $\mathcal{L}_\tau(\bm{\mu};\bm{\mu}^\star) = -\mathbb{E}_{Y\sim p_{\bm{\mu}^\star,\sigma}} \big[\log p_{\bm{\mu},\tau}(Y)\big]$, where the expectation is taken under the true distribution of $Y$. Differentiating under the expectation yields
\begin{align}
    \nabla^2_{\bm{\mu}}\mathcal{L}_\tau(\bm{\mu};\bm{\mu}^\star) = \mathbb{E}_{Y\sim p_{\bm{\mu}^\star,\sigma}}\!\left[\frac{(\nabla_{\bm{\mu}}p_{\bm{\mu},\tau}(Y))(\nabla_{\bm{\mu}}p_{\bm{\mu},\tau}(Y))^\top} {p_{\bm{\mu},\tau}(Y)^2} - \frac{\nabla^2_{\bm{\mu}}p_{\bm{\mu},\tau}(Y)} {p_{\bm{\mu},\tau}(Y)} \right].
\end{align}
We now evaluate this at $\bm{\mu}=\bm{0}$, where all estimated components coincide.

For each $k\in[K]$,
\begin{align}
    \nabla_{\mu_k} p_{\bm{\mu},\tau}(Y)\big|_{\bm{\mu}=\bm{0}} = \frac{1}{K\tau^2}\,Y\,\varphi_d(Y;0,\tau^2 I_d),
\end{align}
so
\begin{align}
    \left[\frac{(\nabla_{\mu_k}p_{\bm{\mu},\tau}(Y))(\nabla_{\mu_j}p_{\bm{\mu},\tau}(Y))^\top} {p_{\bm{\mu},\tau}(Y)^2} \right]_{\bm{\mu}=\bm{0}} = \frac{YY^\top}{K^2\tau^4}.
\end{align}
For the second derivatives, when $k=j$,
\begin{align}
    \left[-\frac{\nabla^2_{\mu_k}p_{\bm{\mu},\tau}(Y)} {p_{\bm{\mu},\tau}(Y)} \right]_{\bm{\mu}=\bm{0}} = \frac{I_d}{K\tau^2} - \frac{YY^\top}{K\tau^4},
\end{align}
whereas for $k\neq j$,
$\nabla_{\mu_k}\nabla_{\mu_j}p_{\bm{\mu},\tau}(Y)=0$.
Combining these contributions and taking expectations gives \eqref{eq:Diag_L_Hess_term} and \eqref{eq:Off_Diag_L_Hess_term}.
\end{proof}

As an immediate consequence, the origin configuration $\bm{\mu}=\bm{0}$ exhibits a direction of strict negative curvature whenever $\tau<1$. We will use this below to show that $\bm{\mu}=\bm{0}$ cannot be a local minimizer in the under-smoothed regime.

\begin{lem}[Negative curvature at the origin configuration in the zero-signal model]
\label{lem:negative_curvature_zero_signal}
Fix $\tau\in(0,1)$, and consider the zero-signal model $Y\sim\mathcal{N}(0,I_d)$, that is the true means are $\bm{\mu^\star} = 0$.
Then, for every $\bm{h}=(h_\ell)_{\ell=1}^K\in(\mathbb{R}^d)^K$ satisfying $\sum_{\ell=1}^K h_\ell=0$ and $\bm{h}\neq\bm{0}$,
\begin{align}
    \left.\frac{d^2}{dt^2}\mathcal{L}_\tau(t\bm{h};\bm{0})\right|_{t=0} = -\frac{1-\tau^2}{K\tau^4}\sum_{\ell=1}^K \|h_\ell\|^2 <0.
\end{align}
In particular, $\bm{0}$ is not a local minimizer of $\mathcal{L}_\tau(\,\cdot\,;\bm{0})$.
\end{lem}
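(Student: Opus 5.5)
The argument rests on Lemma~\ref{lem:hessian_blocks}, specialized to the zero-signal model $Y\sim\mathcal{N}(0,I_d)$. There $\sigma=1$, $\bar{\mu}^\star=0$, and $\mathbb{E}[YY^\top]=I_d$. Substituting this into \eqref{eq:Diag_L_Hess_term}--\eqref{eq:Off_Diag_L_Hess_term}, the Hessian at $\bm{0}$ becomes a $K\times K$ block matrix with diagonal blocks $a I_d$ and off-diagonal blocks $b I_d$, where
\begin{align}
    a=\frac{1}{K\tau^2}-\frac{K-1}{K^2\tau^4},\qquad b=\frac{1}{K^2\tau^4}.
\end{align}
Equivalently, $\nabla^2\mathcal{L}_\tau(\bm{0};\bm{0}) = \bigl((a-b)I_K + b\,\mathbf{1}\mathbf{1}^\top\bigr)\otimes I_d$.

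Since $t\mapsto\mathcal{L}_\tau(t\bm{h};\bm{0})$ is smooth, with differentiation under the expectation justified as in that lemma by Gaussian integrability, its second derivative at $t=0$ is the quadratic form $\bm{h}^\top\nabla^2\mathcal{L}_\tau(\bm{0};\bm{0})\,\bm{h}$. Expanding the block structure gives
\begin{align}
    \bm{h}^\top\nabla^2\mathcal{L}_\tau(\bm{0};\bm{0})\,\bm{h} = (a-b)\sum_\ell\|h_\ell\|^2 + b\,\Bigl\|\sum_\ell h_\ell\Bigr\|^2.
\end{align}
The constraint $\sum_\ell h_\ell=0$ removes the second term. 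The coefficient of the first term is
\begin{align}
    a-b=\frac{1}{K\tau^2}-\frac{K}{K^2\tau^4}=\frac{\tau^2-1}{K\tau^4},
\end{align}
which is exactly the claimed value $-\frac{1-\tau^2}{K\tau^4}\sum_\ell\|h_\ell\|^2$. It is strictly negative because $\tau<1$ and $\bm{h}\neq 0$.

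To conclude that $\bm{0}$ is not a local minimizer, I will first confirm that $\bm{0}$ is a stationary point. The gradient at $\bm{0}$ is $-\frac{1}{K\tau^2}\mathbb{E}[Y]$ in each block, and this vanishes because $\mathbb{E}[Y]=0$.

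Next I need a direction that is both admissible and nonzero. Since $K\ge2$, I can take $h_0=e_1$, $h_1=-e_1$, and all other blocks zero; this satisfies $\sum_\ell h_\ell=0$.

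Along this line, a second-order Taylor expansion gives
\begin{align}
    \mathcal{L}_\tau(t\bm{h};\bm{0})=\mathcal{L}_\tau(\bm{0};\bm{0})+\tfrac{t^2}{2}\,c+o(t^2),\qquad c<0.
\end{align}
So $\mathcal{L}_\tau(t\bm{h};\bm{0})<\mathcal{L}_\tau(\bm{0};\bm{0})$ for all small $t\neq0$, which rules out a local minimum at $\bm{0}$.

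The main technical point is justifying the $C^2$ regularity and the interchange of derivative and expectation. This follows from the derivatives of $\log p_{\bm{\mu},\tau}$ being bounded by polynomials in $\|Y\|$, uniformly for $\bm{\mu}$ near $\bm{0}$, together with dominated convergence.
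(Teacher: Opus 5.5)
Your proposal is correct and follows essentially the same route as the paper. Both specialize Lemma~\ref{lem:hessian_blocks} with $\mathbb{E}[YY^\top]=I_d$, expand the quadratic form so that the zero-sum constraint eliminates the $\bigl\|\sum_\ell h_\ell\bigr\|^2$ term, and read off the coefficient $(\tau^2-1)/(K\tau^4)$. Your additional checks (stationarity at $\bm{0}$, an explicit admissible direction, and the second-order Taylor step) simply spell out what the paper leaves implicit in its ``in particular'' conclusion.
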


\begin{proof}[Proof of Lemma~\ref{lem:negative_curvature_zero_signal}]
In the zero-signal model, $\mathbb{E}[YY^\top]=I_d$. Hence, by Lemma~\ref{lem:hessian_blocks}, the Hessian of $\mathcal{L}_\tau(\,\cdot\,;\bm{0})$ at $\bm{\mu}=\bm{0}$ has diagonal blocks
\begin{align}
    \nabla^2_{\bm{\mu}}\mathcal{L}_\tau(\bm{0};\bm{0})_{kk} = \frac{I_d}{K\tau^2} - \frac{K-1}{K^2\tau^4}I_d,
\end{align}
and off-diagonal blocks
\begin{align}
    \nabla^2_{\bm{\mu}}\mathcal{L}_\tau(\bm{0};\bm{0})_{kj} = \frac{1}{K^2\tau^4}I_d,
    \qquad k\neq j.
\end{align}
Therefore, for any direction $\bm{h}=(h_\ell)_{\ell=1}^K$,
\begin{align}
    \left.\frac{d^2}{dt^2}\mathcal{L}_\tau(t\bm{h};\bm{0})\right|_{t=0} &= \langle \bm{h},  \nabla^2_{\bm{\mu}}\mathcal{L}_\tau(\bm{0};\bm{0})\bm{h}\rangle
    \notag\\
    &=
    \frac{1}{K\tau^2}\sum_{\ell=1}^K\|h_\ell\|^2 - \frac{K-1}{K^2\tau^4}\sum_{\ell=1}^K\|h_\ell\|^2 + \frac{1}{K^2\tau^4}\sum_{\ell\neq j}\langle h_\ell,h_j\rangle.
\end{align}
Using
\begin{align}
    \sum_{\ell\neq j}\langle h_\ell,h_j\rangle = \left\|\sum_{\ell=1}^K h_\ell\right\|^2 - \sum_{\ell=1}^K\|h_\ell\|^2 = -\sum_{\ell=1}^K\|h_\ell\|^2,
\end{align}
we obtain
\begin{align}
    \left.\frac{d^2}{dt^2}\mathcal{L}_\tau(t\bm{h};\bm{0})\right|_{t=0} = -\frac{1-\tau^2}{K\tau^4}\sum_{\ell=1}^K \|h_\ell\|^2.
\end{align}
Since $\tau\in(0,1)$ and $\bm{h}\neq\bm{0}$, this quantity is strictly negative. Hence, $\bm{0}$ is not a local minimizer of $\mathcal{L}_\tau(\,\cdot\,;\bm{0})$.
\end{proof}

We next show that this non-minimality persists under weak signal: for sufficiently small signal-strength $\beta$, minimizers of $\mathcal{L}_\tau(\cdot;\beta\bm{\mu}^\star)$ on a fixed compact set cannot approach the collapsed configuration.

\begin{lem}
\label{lem:persistence_uniform_convergence}
Fix $\rho\in(0,1)$, and let $V\subset(\mathbb{R}^d)^K$ be compact with $\bm{0}\in V$.
Then, there exist $\eta_\rho>0$ and $\beta_0>0$ such that every minimizer $\widehat{\bm{\mu}}(\beta)\in \argmin_{\bm{\mu}\in V}\mathcal{L}_\tau(\bm{\mu};\beta\bm{\mu}^\star)$ satisfies
\begin{align}
    \|\widehat{\bm{\mu}}(\beta)\|_F \ge \eta_\rho,
    \qquad \forall\,\beta\in[0,\beta_0].
\end{align}
\end{lem}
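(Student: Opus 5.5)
The plan is a perturbation argument from the zero-signal case $\beta=0$. Lemma~\ref{lem:negative_curvature_zero_signal} already shows that at $\beta=0$ the collapsed point $\bm{0}$ is not a local minimizer. A uniform-in-$\bm{\mu}$ continuity estimate in $\beta$ should then transfer a quantitative energy gap to all small $\beta$.

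Throughout I work in the normalized scale $\sigma=1$, so $\tau=\rho$. I write $F_\beta(\bm{\mu})\triangleq\mathcal{L}_\tau(\bm{\mu};\beta\bm{\mu}^\star)$. I assume $\bm{0}$ is an interior point of $V$, as it is when $V$ is a large closed ball. If $V$ only contains $\bm{0}$ on its boundary, the claim can fail, so this has to be part of the hypothesis or of how $V$ is chosen in the application.

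\emph{Step 1: energy gap at $\beta=0$.} Fix any $\bm{h}\neq\bm{0}$ with $\sum_\ell h_\ell=0$. By Lemma~\ref{lem:negative_curvature_zero_signal}, $t\mapsto F_0(t\bm{h})$ has vanishing first derivative at $t=0$, since the gradient of $F_0$ at $\bm{0}$ is $\tau^{-2}K^{-1}\mathbb{E}[Y]=0$. Its second derivative at $t=0$ is strictly negative. Hence there is $t_0>0$ with $t_0\bm{h}\in V$ and
\[
F_0(t_0\bm{h})\le F_0(\bm{0})-2\delta
\]
for some $\delta>0$ depending only on $\rho,K,d$. Next, $F_0$ is continuous at $\bm{0}$, so there is $\eta_\rho>0$ such that $F_0(\bm{\mu})>F_0(\bm{0})-\delta/2$ whenever $\|\bm{\mu}\|_F<\eta_\rho$. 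Shrinking $\eta_\rho$ if necessary, I also take $\eta_\rho<t_0\|\bm{h}\|_F$.

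\emph{Step 2: uniform continuity in $\beta$.} I will show $\sup_{\bm{\mu}\in V}|F_\beta(\bm{\mu})-F_0(\bm{\mu})|\to0$ as $\beta\to0$. Write
\[
F_\beta(\bm{\mu})=-\mathbb{E}\bigl[\log p_{\bm{\mu},\tau}(\beta\mu^\star_L+Z)\bigr],\qquad Z\sim\mathcal{N}(0,I_d).
\]
The gradient of $y\mapsto\log p_{\bm{\mu},\tau}(y)$ equals $\tau^{-2}\bigl(\sum_\ell w_\ell(y)\mu_\ell-y\bigr)$, where the $w_\ell(y)$ are softmax weights. It is therefore bounded by $\tau^{-2}\bigl(\max_\ell\|\mu_\ell\|+\|y\|\bigr)$, uniformly over $\bm{\mu}\in V$ because $V$ is compact. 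By the mean value theorem,
\[
|F_\beta(\bm{\mu})-F_0(\bm{\mu})|\le\beta\max_\ell\|\mu^\star_\ell\|\,\tau^{-2}\,\mathbb{E}\bigl[C_V+\|Z\|+\beta\max_\ell\|\mu^\star_\ell\|\bigr]\le C\beta
\]
for $\beta\le1$, where $C$ depends only on $V$, $\rho$, $\bm{\mu}^\star$ and $d$. Choose $\beta_0$ so that $C\beta_0<\delta/4$.

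\emph{Step 3: conclusion.} Take $\beta\le\beta_0$ and any $\bm{\mu}\in V$ with $\|\bm{\mu}\|_F<\eta_\rho$. Then
\[
F_\beta(\bm{\mu})>F_0(\bm{0})-\tfrac{\delta}{2}-\tfrac{\delta}{4},
\]
while
\[
F_\beta(t_0\bm{h})<F_0(\bm{0})-2\delta+\tfrac{\delta}{4}.
\]
So every such $\bm{\mu}$ has a strictly larger value than the feasible point $t_0\bm{h}$ and cannot be a minimizer. Any minimizer $\widehat{\bm{\mu}}(\beta)$ therefore satisfies $\|\widehat{\bm{\mu}}(\beta)\|_F\ge\eta_\rho$. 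Minimizers over $V$ exist because $F_\beta$ is continuous on the compact set $V$.

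The main obstacle I expect is Step 2. It needs the $\beta$-perturbation bound to hold uniformly over $\bm{\mu}\in V$, not just pointwise, and this rests on the linear-growth bound for the score of the fitted mixture. The other delicate point is the interiority of $\bm{0}$ in $V$ used in Step 1, which guarantees that the descent direction $t_0\bm{h}$ is feasible.
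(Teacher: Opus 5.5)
Your proposal is correct and follows essentially the same route as the paper: non-minimality of $\bm{0}$ at $\beta=0$ via Lemma~\ref{lem:negative_curvature_zero_signal}, uniform convergence $\sup_{\bm{\mu}\in V}|F_\beta(\bm{\mu})-F_0(\bm{\mu})|\to 0$, and transfer of the strict energy gap to all small $\beta$. The differences are minor. Your score-based Lipschitz estimate gives an explicit $O(\beta)$ rate, where the paper uses a quadratic envelope and dominated convergence. You are also right that $\bm{0}$ must be an interior point of $V$, or at least admit a feasible zero-sum descent direction. The paper uses this implicitly, and without it the statement fails, e.g.\ for $V=\{\bm{0}\}$.
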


\begin{proof}[Proof of Lemma~\ref{lem:persistence_uniform_convergence}]
We first show that
\begin{align}
    \sup_{\bm{\mu}\in V} \big| \mathcal{L}_\tau(\bm{\mu};\beta\bm{\mu}^\star) - \mathcal{L}_\tau(\bm{\mu};\bm{0}) \big| \xrightarrow[\beta\to 0]{}0.
    \label{eq:uniform_conv_beta_final}
\end{align}
Indeed, since $p_{\beta\bm{\mu}^\star,1}(x)\to \varphi_d(x;0,I_d)$ pointwise as $\beta\to0$, and since $V$ is compact, there exists $C_V>0$ such that
\begin{align}
    |\log p_{\bm{\mu},\rho}(x)| \le C_V(1+\|x\|^2)
    \qquad
    \text{for all } \bm{\mu}\in V.
\end{align}
Therefore, the family $\{\log p_{\bm{\mu},\rho}(\cdot):\bm{\mu}\in V\}$ is uniformly dominated by an integrable envelope, and dominated convergence yields \eqref{eq:uniform_conv_beta_final}.

By Lemma~\ref{lem:negative_curvature_zero_signal}, $\bm{0}$ is not a minimizer of $\mathcal{L}_\tau(\cdot;\bm{0})$ over $V$, and by the continuity of $\mathcal{L}_\tau(\cdot;\bm{0})$ on the compact set $V$, there exists a neighborhood of $\bm{0}$ on which $\mathcal{L}_\tau(\cdot;\bm{0})$ is bounded strictly above its minimum over $V$.
By the uniform convergence in \eqref{eq:uniform_conv_beta_final}, the same strict separation persists for $\mathcal{L}_\tau(\cdot;\beta\bm{\mu}^\star)$ for all sufficiently small $\beta$.
Hence, every minimizer of $\mathcal{L}_\tau(\cdot;\beta\bm{\mu}^\star)$ over $V$ must stay a positive distance away from $\bm{0}$, proving the claim.
\end{proof}

\subsection{Proof of Proposition~\ref{prop:hard_assignment_limit}} \label{sec:proof_hard_assignment_limit}

We prove the proposition in three steps.

\paragraph{Step 1: Decomposition into hard-assignment risk and remainder.}
By the definition of $\mathcal{L}_\tau$~\eqref{eq:def_L_tau} and of $p_{\bm{\mu}, \tau}$ in~\eqref{eq:gmm_isotropic_density},
\begin{align}
    \log p_{\bm{\mu},\tau}(y) = -\frac{d}{2}\log(2\pi\tau^2)-\log K + \log \sum_{\ell\in[K]} \exp\!\left(-\frac{\|y-\mu_\ell\|^2}{2\tau^2}\right),
\end{align}
and
\begin{align}
    \label{eqn:app_A2}
    \mathcal{L}_\tau(\bm{\mu}; \bm{\mu^\star})
    &=
    \frac{d}{2}\log(2\pi\tau^{2}) +\log K -\mathbb{E}\!\left[\log\!\left(\sum_{\ell\in[K]} \exp\!\left(-\frac{\|Y-\mu_{\ell}\|^{2}}{2\tau^{2}} \right) \right) \right].
\end{align}
Factoring out the minimal squared distance from $Y$ to the candidate centers inside the log-sum-exp term results
\begin{align}
    \|Y-\mu_\ell\|^2 = \min_{j\in[K]}\|Y-\mu_j\|^2 + \Big(\|Y-\mu_\ell\|^2-\min_{j\in[K]}\|Y-\mu_j\|^2\Big),
\end{align}
we obtain
\begin{align}
    \sum_{\ell\in[K]}
    & \exp\!\left(-\frac{\|Y-\mu_\ell\|^2}{2\tau^2}\right)
    \\&=
    \exp\!\left(-\frac{\min_{j\in[K]}\|Y-\mu_j\|^2}{2\tau^2}\right) \sum_{\ell\in[K]} \exp\!\left(-\frac{\|Y-\mu_\ell\|^2-\min_{j\in[K]}\|Y-\mu_j\|^2}{2\tau^2} \right).
\end{align}
Substituting this identity into~\eqref{eqn:app_A2} gives
\begin{align}
    \mathcal{L}_\tau(\bm{\mu}; \bm{\mu^\star}) = \frac{d}{2}\log(2\pi\tau^2) +\log K +\frac{1}{2\tau^2}\Phi(\bm{\mu}) +r_\tau(\bm{\mu}),
\end{align}
where $\Phi(\bm{\mu})=\mathbb{E}[\min_{j\in[K]}\|Y-\mu_j\|^2]$ is the population $k$-means risk~\eqref{eq:kmeans_population} and $r_\tau(\bm{\mu})$ is defined by~\eqref{eq:rtau_def}. This proves~\eqref{eq:Ltau_expansion_pointwise}. Moreover, since for every $\ell\in[K]$, $\|Y-\mu_\ell\|^2-\min_{j\in[K]}\|Y-\mu_j\|^2\ge 0$, and for at least one index this quantity is equal to zero, we have
\begin{align}
    \label{eq:r_tau_bound}
    -\log K \le r_\tau(\bm{\mu}) \le 0.
\end{align}

\paragraph{Step 2: Point-wise convergence of $r_\tau$.}
Since the means in $\bm{\mu}$ are distinct, the nearest center to $Y$ is unique almost surely (see Remark~\ref{remark:arbitrary_tie_breaking}). Therefore, for almost every $Y$, exactly one term in
\begin{align}
    \sum_{\ell\in[K]} \exp\!\left(-\frac{\|Y-\mu_\ell\|^2-\min_{j\in[K]}\|Y-\mu_j\|^2}{2\tau^2} \right)
\end{align}
is equal to $1$, while all the others converge to $0$ as $\tau\to 0^+$. Hence,
\begin{align}
    \sum_{\ell\in[K]} \exp\!\left(-\frac{\|Y-\mu_\ell\|^2-\min_{j\in[K]}\|Y-\mu_j\|^2}{2\tau^2} \right) \xrightarrow[]{\mathrm{a.s.}} 1.
\end{align}
Thus, the integrand of the expectation in the definition of $r_\tau(\bm{\mu})$ converges almost surely to $0$. Moreover, this integrand is bounded in absolute value by $\log K$. The conclusion follows from dominated convergence.

\paragraph{Step 3: Convergence of minimizers.}
Define the rescaled objective
$F_\tau(\bm{\mu})\triangleq 2\tau^2\,\mathcal{L}_\tau(\bm{\mu}; \bm{\mu^\star}) - d\tau^2\log(2\pi\tau^2) - 2\tau^2\log K$, which has the same minimizers as $\mathcal{L}_\tau$ over $\mathcal{U}$. By~\eqref{eq:Ltau_expansion_pointwise}, $F_\tau(\bm{\mu})=\Phi(\bm{\mu})+2\tau^2 r_\tau(\bm{\mu})$. Hence, by~\eqref{eq:r_tau_bound},
\begin{align}
    \sup_{\bm{\mu}\in\mathcal{U}}\big|F_\tau(\bm{\mu})-\Phi(\bm{\mu})\big| \le 2\tau^2\log K.    \label{eq:proof_uniform_conv_Ftau}
\end{align}
Thus as $\tau \to 0$, $F_\tau\to\Phi$ uniformly on $\mathcal{U}$.

Let $\bm{\mu}^{(\tau)}\in\argmin_{\bm{\mu}\in\mathcal{U}}F_\tau(\bm{\mu})$, and let $m\triangleq \min_{\bm{\mu}\in\mathcal{U}}\Phi(\bm{\mu})=\Phi(\bm{\mu}_{\mathrm{HA}}^\star)$. Fix $\varepsilon>0$. Since $\bm{\mu}_{\mathrm{HA}}^\star$ is the unique minimizer of $\Phi$ over $\mathcal{U}$ up to permutation, continuity of $\Phi$ and compactness of $\mathcal{U}$ imply that there exists $\delta_\varepsilon>0$ such that $d_{\mathrm{perm}}(\bm{\mu},\bm{\mu}_{\mathrm{HA}}^\star)\ge\varepsilon$ implies $\Phi(\bm{\mu})\ge m+\delta_\varepsilon$.
If $d_{\mathrm{perm}}(\bm{\mu}^{(\tau)},\bm{\mu}_{\mathrm{HA}}^\star)\ge\varepsilon$, then by~\eqref{eq:proof_uniform_conv_Ftau},
\begin{align}
    F_\tau(\bm{\mu}^{(\tau)}) \ge m+\delta_\varepsilon-2\tau^2\log K,
    \qquad
    F_\tau(\bm{\mu}_{\mathrm{HA}}^\star) \le m+2\tau^2\log K.
\end{align}
Since $F_\tau(\bm{\mu}^{(\tau)})\le F_\tau(\bm{\mu}_{\mathrm{HA}}^\star)$, it follows that $\delta_\varepsilon\le 4\tau^2\log K$, which is impossible for all sufficiently small $\tau>0$. Therefore $d_{\mathrm{perm}}(\bm{\mu}^{(\tau)},\bm{\mu}_{\mathrm{HA}}^\star)<\varepsilon$ for all sufficiently small $\tau$. Since $\varepsilon>0$ was arbitrary, this proves~\eqref{eq:mu_tau_to_mu_HA}.

\section{Mean estimation under variance misspecification: proofs}

\subsection{Proof of Proposition~\ref{prop:collapse_generalK}}
\label{sec:proof_collapse_generalK}

By translation invariance, we may work in centered coordinates and assume without loss of generality that $\bar{\mu}^\star = \frac{1}{K}\sum_{\ell\in[K]}\mu_\ell^\star = 0$.
In these coordinates, the collapsed configuration is $\bm{\mu}_{\mathrm{coll}}=\bm{0}$, $\mathbb{E}[Y]=0$, and $\mathbb{E}[YY^\top]=\sigma^2 I_d+\Sigma_\mu$.
Thus $\bm{\mu}_{\mathrm{coll}}$ is a stationary point by symmetry and the fact that the population mean is zero.

To determine if it is a local minima, we analyze the Hessian at $\bm{\mu}=\bm{0}$. By Lemma~\ref{lem:hessian_blocks}, its $d\times d$ blocks are
\begin{align}
    H_{kk} &= \frac{I_d}{K\tau^2} -\frac{(K-1)\,\mathbb{E}[YY^\top]}{K^2\tau^4}, \\
    H_{kj}
    &= \frac{\mathbb{E}[YY^\top]}{K^2\tau^4},
    \qquad k\neq j.
\end{align}

Let $\bm{h}=(h_1,\dots,h_K)\in(\mathbb{R}^d)^K$. Then,
\begin{align}
    \bm{h}^\top H\bm{h}
    &=
    \sum_{\ell=1}^K h_\ell^\top H_{\ell\ell} h_\ell
    +\sum_{\ell\neq j} h_\ell^\top H_{\ell j} h_j
    \notag\\
    &=
    \frac{1}{K\tau^2}\sum_{\ell=1}^K \|h_\ell\|^2
    -\frac{K-1}{K^2\tau^4}\sum_{\ell=1}^K h_\ell^\top \mathbb{E}[YY^\top] h_\ell
    +\frac{1}{K^2\tau^4}\sum_{\ell\neq j} h_\ell^\top \mathbb{E}[YY^\top] h_j.
\end{align}
Using
\begin{align}
    \sum_{\ell\neq j} h_\ell^\top \mathbb{E}[YY^\top] h_j
    =
    \left(\sum_{\ell=1}^K h_\ell\right)^\top
    \mathbb{E}[YY^\top]
    \left(\sum_{\ell=1}^K h_\ell\right)
    -
    \sum_{\ell=1}^K h_\ell^\top \mathbb{E}[YY^\top] h_\ell,
\end{align}
we obtain
\begin{align}
    \bm{h}^\top H\bm{h} = \frac{1}{K\tau^2}\sum_{\ell=1}^K \|h_\ell\|^2 -\frac{1}{K\tau^4}\sum_{\ell=1}^K h_\ell^\top \mathbb{E}[YY^\top] h_\ell +\frac{1}{K^2\tau^4} \left(\sum_{\ell=1}^K h_\ell\right)^\top \mathbb{E}[YY^\top] \left(\sum_{\ell=1}^K h_\ell\right).
\end{align}

Now decompose $(\mathbb{R}^d)^K$ as the orthogonal direct sum of the common-shift subspace
\begin{align}
    \mathcal{S} \triangleq \{(u,\dots,u):u\in\mathbb{R}^d\},
\end{align}
and the zero-sum subspace
\begin{align}
    \mathcal{S}_0 \triangleq \left\{ (h_1,\dots,h_K)\in(\mathbb{R}^d)^K: \sum_{\ell=1}^K h_\ell=0 \right\}.
\end{align}
That is, $(\mathbb{R}^d)^K = \mathcal{S} \oplus \mathcal{S}_0$, and $\mathcal{S}^\perp=\mathcal{S}_0$.
We will show below that the Hessian is strictly positive on $\mathcal{S}$ and positive semidefinite on $\mathcal{S}_0$ exactly when $\tau^2 I_d-\mathbb{E}[YY^\top]\succeq0$, the origin configuration is a local minimizer if and only if this latter condition holds.

\paragraph{Step 1: Common-shift direction.}
For a common-shift direction $\bm{h}=(u,\dots,u)$, a direct computation gives
\begin{align}
    \bm{h}^\top H\bm{h} = \frac{1}{\tau^2}\|u\|^2 >0.
\end{align}
Thus the Hessian is strictly positive on $\mathcal{S}$.

\paragraph{Step 2: Zero-sum direction.}
For a zero-sum direction $\bm{h}\in\mathcal{S}_0$, the last term vanishes, and therefore
\begin{align}
    \bm{h}^\top H\bm{h} = \frac{1}{K\tau^2}\sum_{\ell=1}^K \|h_\ell\|^2 -\frac{1}{K\tau^4}\sum_{\ell=1}^K h_\ell^\top \mathbb{E}[YY^\top] h_\ell.
\end{align}
Hence the Hessian is positive semidefinite on $\mathcal{S}_0$ if and only if $\tau^2 I_d - \mathbb{E}[YY^\top] \succeq 0$, or equivalently, $\tau^2 \ge \lambda_{\max}\!\big(\mathbb{E}[YY^\top]\big)$.
Since $\mathbb{E}[YY^\top]=\sigma^2 I_d+\Sigma_\mu$, this condition becomes
\begin{align}
    \tau^2 \ge \sigma^2+\lambda_{\max}(\Sigma_\mu),
\end{align}
or, after dividing by $\sigma^2$,
\begin{align}
    \rho^2 = \frac{\tau^2}{\sigma^2} \ge 1+\frac{\lambda_{\max}(\Sigma_\mu)}{\sigma^2}.
\end{align}
This proves part~(1).

\paragraph{Step 3: proof of~\eqref{eqn:lambda_max_bounds}.}
For part~(2) and~\eqref{eqn:lambda_max_bounds}, since $\Sigma_\mu\succeq 0$,
\begin{align}
    \frac{\tr(\Sigma_\mu)}{d} \le \lambda_{\max}(\Sigma_\mu) \le \tr(\Sigma_\mu).
\end{align}
Using $\tr(\Sigma_\mu)=\sigma^2\,\mathrm{SNR}$, we obtain
\begin{align}
    1+\frac{\mathrm{SNR}}{d} \le 1+\frac{\lambda_{\max}(\Sigma_\mu)}{\sigma^2} \le 1+\mathrm{SNR}.
\end{align}
This completes the proof.

\subsection{Proof of Corollary~\ref{cor:collapse_K2}}
\label{sec:proof_collapse_K2}
For $K=2$ with $(\mu_0^\star,\mu_1^\star)=(\mu,-\mu)$, we have $\bar{\mu}^\star=0$ and
\begin{align}
    \Sigma_\mu = \frac{1}{2}\big(\mu\mu^\top+(-\mu)(-\mu)^\top\big) =  \mu\mu^\top.
\end{align}
Hence $\lambda_{\max}(\Sigma_\mu)=\|\mu\|^2$. Since in this symmetric two-component model $\mathrm{SNR}=\frac{\|\mu\|^2}{\sigma^2}$, the claim follows immediately from Proposition~\ref{prop:collapse_generalK}.

\subsection{Proof of Corollary~\ref{cor:collapse_simplex}}
\label{sec:proof_collapse_simplex}

Since $\mu_\ell^\star=\beta v_\ell$ and $\sum_{\ell}v_\ell=0$, we have $\bar{\mu}^\star=0$, and therefore
\begin{align}
    \Sigma_\mu =\frac{\beta^2}{K}\sum_{\ell\in[K]} v_\ell v_\ell^\top.
\end{align}
For a regular simplex, $\frac{1}{K}\sum_{\ell\in[K]} v_\ell v_\ell^\top$ is the orthogonal projector onto $\mathrm{span}\{v_\ell\}_{\ell\in[K]}$, scaled by $1/(K-1)$. Hence $\lambda_{\max}(\Sigma_\mu)= \beta^2/(K-1)$.
Since for this model $\mathrm{SNR}=\beta^2/\sigma^2$, the conclusion follows from Proposition~\ref{prop:collapse_generalK}.

\subsection{Proof of Theorem~\ref{thm:lowSNR_under_smoothed_fixed_rho}}
\label{sec:proof_lowSNR_under_smoothed_fixed_rho}

It is convenient to work in the normalized model with unit data variance, $\sigma^2=1$, and encode the signal level by a scalar parameter $\beta\ge 0$; that is, the data are drawn from the unit-variance mixture $p_{\beta\bm{\mu}^\star,1}$, while the fitted model uses variance $\rho^2$, where $\rho\in(0,1)$ is fixed. Accordingly,
\begin{align}
    \mathcal{L}_\rho(\bm{\mu};\beta\bm{\mu}^\star) \triangleq -\mathbb{E}_{Y\sim p_{\beta\bm{\mu}^\star,1}} \big[\log p_{\bm{\mu},\rho}(Y)\big].
\end{align}
In this parameterization,
\begin{align}
    \mathrm{SNR} = \frac{1}{K}\sum_{\ell\in[K]} \|\beta\mu_\ell^\star-\beta\bar{\mu}^\star\|^2 = \beta^2\cdot \frac{1}{K}\sum_{\ell\in[K]} \|\mu_\ell^\star-\bar{\mu}^\star\|^2,
\end{align}
so the low-SNR regime is exactly $\beta\to 0$. At $\beta=0$,
\begin{align}\label{eq:Lrho_zero_signal}
    \mathcal{L}_\rho(\bm{\mu};0) = -\mathbb{E}_{X\sim\mathcal{N}(0,I_d)}  \big[\log p_{\bm{\mu},\rho}(X)\big].
\end{align}
By the scaling reduction, it is enough to prove that in the normalized model,
\begin{align}
    d_{\mathrm{perm}}^2(\widehat{\bm{\mu}}(\beta),\beta\bm{\mu}^\star)=\Omega(1)
    \qquad\text{as }\beta\to0,
    \label{eq:normalized_omega_one}
\end{align}
where $\widehat{\bm{\mu}}(\beta) \in \argmin_{\bm{\mu}\in(\mathbb{R}^d)^K} \mathcal{L}_\rho(\bm{\mu};\beta\bm{\mu}^\star)$.

By Lemma~\ref{lem:persistence_uniform_convergence}, there exist constants $\eta_\rho>0$ and $\beta_0>0$ such that every minimizer $\widehat{\bm{\mu}}(\beta)$ satisfies $\|\widehat{\bm{\mu}}(\beta)\|_F\ge \eta_\rho$, for all $\beta\in[0,\beta_0]$.
Since $d_{\mathrm{perm}}(\bm{\mu},0)=\|\bm{\mu}\|_F$ and $\beta\bm{\mu}^\star\to0$, the triangle inequality gives
\begin{align}
    d_{\mathrm{perm}}(\widehat{\bm{\mu}}(\beta),\beta\bm{\mu}^\star)
    &\ge
    d_{\mathrm{perm}}(\widehat{\bm{\mu}}(\beta),0) - d_{\mathrm{perm}}(\beta\bm{\mu}^\star,0)
    \notag\\
    &=
    \|\widehat{\bm{\mu}}(\beta)\|_F-\beta\|\bm{\mu}^\star\|_F.
\end{align}
Hence, for all sufficiently small $\beta$, we have $d_{\mathrm{perm}}(\widehat{\bm{\mu}}(\beta),\beta\bm{\mu}^\star) \ge \eta_\rho/2$, and therefore
\begin{align}
    d_{\mathrm{perm}}^2(\widehat{\bm{\mu}}(\beta),\beta\bm{\mu}^\star) \ge \frac{\eta_\rho^2}{4}.
    \label{eq:normalized_lower_bound}
\end{align}
This proves~\eqref{eq:normalized_omega_one}.

Finally, return to the original variables. With $\beta=\sigma^{-1}$ and $\tau=\rho\sigma$, distances scale by $\sigma$, 
\begin{align}
    d_{\mathrm{perm}}^2(\bm{\mu}_{\tau}^\star,\bm{\mu}^\star) = \sigma^2\, d_{\mathrm{perm}}^2(\widehat{\bm{\mu}}(\beta),\beta\bm{\mu}^\star).
\end{align}
Combining this identity with~\eqref{eq:normalized_lower_bound}, we conclude that there exist constants $C_\rho>0$ and $\sigma_0<\infty$, depending on $K$, $d$, $\bm{\mu}^\star$, and $\rho$, such that
\begin{align}
    d_{\mathrm{perm}}^{2}\!\left(\bm{\mu}_{\tau}^\star,\bm{\mu}^\star\right) \ge C_\rho\,\sigma^2,
    \qquad \forall\,\sigma\ge \sigma_0.
\end{align}
This proves the theorem.

\subsection{Proof of Proposition~\ref{prop:highSNR_under_smoothed_upper_uniform}}
\label{sec:proof-prop-highSNR_under_smoothed_upper_uniform}

We divide the proof into three steps.

\paragraph{Step 1: uniform localization and preliminary identities.}
For each fixed $\tau\in(0,\sigma]$, the data-generating law $p_{\bm{\mu}^\star,\sigma}$ converges  as $\sigma\to 0$, to the discrete measure
\begin{align}
    \frac{1}{K}\sum_{\ell\in[K]}\delta_{\mu_\ell^\star}.
\end{align}
Accordingly, the population objective $\mathcal{L}_{\tau}(\bm{\mu},\bm{\mu}^\star)$ converges pointwise, up to additive constants independent of $\bm{\mu}$, to the corresponding discrete objective obtained by evaluating the fitted model on the atoms $\mu_\ell^\star$. The unique minimizers of that limiting objective are exactly the permutations of $\bm{\mu}^\star$. Therefore, by compactness of $\mathcal{U}$ and Assumption~\ref{ass:unique_minimizers_up_to_perm},
\begin{align}
    \sup_{0<\tau\le \sigma} d_{\mathrm{perm}}\!\bigl(\bm{\mu}_{\tau}^\star,\bm{\mu}^\star\bigr)  \xrightarrow[\sigma\to0]{}0,    \label{eq:uniform_localization_sigma_to_zero_tau}
\end{align}
proving~\eqref{eq:high_SNR_conv}.
Fix now $\varepsilon\in(0,\Delta_{\min}/4)$. By \eqref{eq:uniform_localization_sigma_to_zero_tau}, there exists $\sigma_\varepsilon>0$ such that for all $\sigma\le \sigma_\varepsilon$ and all $\tau\in(0,\sigma]$, one can choose a permutation $\pi=\pi(\tau,\sigma)$ satisfying
\begin{align}
    \max_{\ell\in[K]} \bigl\| \mu_{\tau,\pi(\ell)}^\star-\mu_\ell^\star \bigr\| \le \varepsilon.    \label{eq:localized_matching_under_smoothed_tau}
\end{align}

For the remainder of the proof, fix $\tau\in(0,\sigma]$, $\sigma\le \sigma_\varepsilon$, and let $\pi$ be as in \eqref{eq:localized_matching_under_smoothed_tau}. Define the posterior responsibilities
\begin{align}
    \gamma_r(y) \triangleq \frac{\exp\!\left(-\|y-\mu_{\tau,r}^\star\|^2/(2\tau^2)\right)}  {\sum_{s\in[K]}\exp\!\left(-\|y-\mu_{\tau,s}^\star\|^2/(2\tau^2)\right)},
    \qquad r\in[K],
\end{align}
and write $m_\ell \triangleq \mu_{\tau,\pi(\ell)}^\star$ and $\gamma_\ell(Y) \triangleq \gamma_{\pi(\ell)}(Y)$.
Since $\bm{\mu}_\tau^\star$ minimizes $\mathcal{L}_{\tau}(\bm{\mu},\bm{\mu}^\star)$, the first-order stationarity condition with respect to $\mu_{\tau,\pi(\ell)}^\star$ gives
\begin{align}
    0 = \nabla_{\mu_{\tau,\pi(\ell)}^\star}\mathcal{L}_{\tau}(\bm{\mu}_{\tau}^\star,\bm{\mu}^\star) = -\frac{1}{\tau^2}\, \mathbb{E}\!\left[\gamma_\ell(Y)\bigl(Y-m_\ell\bigr)\right].
\end{align}
Equivalently,
\begin{align}
    m_\ell = \frac{\mathbb{E}[\gamma_\ell(Y)Y]}{\mathbb{E}[\gamma_\ell(Y)]},    \label{eq:soft_mean_identity_step3_tau}
\end{align}
and hence
\begin{align}
    m_\ell-\mu_\ell^\star = \frac{\mathbb{E}[\gamma_\ell(Y)(Y-\mu_\ell^\star)]}{\mathbb{E}[\gamma_\ell(Y)]}.    \label{eq:mean_error_basic_step3_tau}
\end{align}
Finally, define
\begin{align}
    q_\ell(\tau,\sigma) \triangleq \mathbb{E}\!\left[1-\gamma_{\pi(\ell)}(Y)\mid L=\ell\right] + \sum_{j\neq \ell}\mathbb{E}\!\left[\gamma_{\pi(\ell)}(Y)\mid L=j\right].
\label{eq:def_qell_tau}
\end{align}
Thus, $q_\ell(\tau,\sigma)$ measures the total soft-assignment mass that is misallocated relative to the matched pair $(\ell,\pi(\ell))$.

\paragraph{Step 2: bound on the misallocated soft-assignment mass $q_\ell(\tau,\sigma)$.}
For $\ell\neq j$, write
\begin{align}
    u_{\ell j}  \triangleq \frac{\mu_j^\star-\mu_\ell^\star}{\|\mu_j^\star-\mu_\ell^\star\|},
    \qquad
    \Delta_{\ell j} \triangleq \|\mu_j^\star-\mu_\ell^\star\|.
\end{align}
By \eqref{eq:localized_matching_under_smoothed_tau}, the matched estimated mean $m_\ell=\mu_{\tau,\pi(\ell)}^\star$ lies within $\varepsilon$ of $\mu_\ell^\star$, while every competing matched mean $m_j=\mu_{\tau,\pi(j)}^\star$, $j\neq \ell$, lies within $\varepsilon$ of $\mu_j^\star$. Therefore, for a sample $Y=\mu_\ell^\star+\sigma Z$ with $Z\sim\mathcal{N}(0,I_d)$, substantial posterior mass can be transferred from $\pi(\ell)$ to a mismatched component $\pi(j)$ only if $Y$ crosses the corresponding pairwise decision boundary. Likewise, for a sample generated from component $j\neq \ell$, substantial posterior mass can be assigned to $\pi(\ell)$ only if the same boundary is crossed. In either case, this implies
\begin{align}
    \langle Z,u_{\ell j}\rangle \ge \frac{\Delta_{\ell j}-2\varepsilon}{2\sigma}.
\end{align}
Consequently,
\begin{align}
    q_\ell(\tau,\sigma)
    &\le
    \sum_{j\neq \ell} \mathbb{P}\!\left( \langle Z,u_{\ell j}\rangle \ge \frac{\Delta_{\ell j}-2\varepsilon}{2\sigma}
    \right).
\label{eq:qell_union_bound_tau}
\end{align}
Since $\langle Z,u_{\ell j}\rangle\sim \mathcal{N}(0,1)$, the standard Gaussian tail bound gives
\begin{align}
    \mathbb{P}\!\left(\langle Z,u_{\ell j}\rangle \ge  \frac{\Delta_{\ell j}-2\varepsilon}{2\sigma}  \right)
    &\le
    \exp\!\left( -\frac{(\Delta_{\ell j}-2\varepsilon)^2}{8\sigma^2} \right).
\end{align}
Using $\Delta_{\ell j}\ge \Delta_{\min}$, we obtain
\begin{align}
    q_\ell(\tau,\sigma)
    &\le (K-1)\exp\!\left( -\frac{(\Delta_{\min}-2\varepsilon)^2}{8\sigma^2}
\right).
\label{eq:qell_bound_eps_tau}
\end{align}
Since $\varepsilon>0$ is arbitrary, for every $\eta>0$ we may choose it small enough so that
\begin{align}
    \frac{(\Delta_{\min}-2\varepsilon)^2}{8}  \ge  \Bigl(\frac{1}{8}-\eta\Bigr)\Delta_{\min}^2.
    \label{eq:eps_choice_eta_tau}
\end{align}
Thus, we obtain
\begin{align}
    q_\ell(\tau,\sigma) \le (K-1) \exp\!\left( -\Bigl(\frac{1}{8}-\eta\Bigr)\frac{\Delta_{\min}^2}{\sigma^2} \right)
    \label{eq:qell_bound_eta_tau}
\end{align}
for all $\ell\in[K]$, all $\tau\in(0,\sigma]$, and all $\sigma\le \sigma_\varepsilon$.

\paragraph{Step 3: converting misallocated soft-assignment mass into a centers bound.}
Next, we bound the numerator and denominator of~\eqref{eq:mean_error_basic_step3_tau}. We start with the numerator. Define
\begin{align}
    A_\ell(Y,L) \triangleq
    \begin{cases}
        (\gamma_\ell(Y)-1)(Y-\mu_\ell^\star), & L=\ell,\\[1ex]
        \gamma_\ell(Y)(Y-\mu_\ell^\star), & L\neq \ell.
    \end{cases}
\end{align}
Using $\mathbb{E}[Y-\mu_\ell^\star\mid L=\ell]=0$, we may rewrite
\begin{align}
    \mathbb{E}[\gamma_\ell(Y)(Y-\mu_\ell^\star)]  = \mathbb{E}[A_\ell(Y,L)].
    \label{eq:numerator_as_A}
\end{align}
Therefore, by Jensen's inequality,
\begin{align}
    \left\|  \mathbb{E}[\gamma_\ell(Y)(Y-\mu_\ell^\star)] \right\|^2  \le \mathbb{E}\!\left[\|A_\ell(Y,L)\|^2\right].
\label{eq:jensen_A}
\end{align}
Conditioning on $L$, we obtain
\begin{align}
    \mathbb{E}\!\left[\|A_\ell(Y,L)\|^2\right]
    &=
    \frac{1}{K}\,  \mathbb{E}\!\left[(1-\gamma_\ell(Y))^2\|Y-\mu_\ell^\star\|^2\mid L=\ell\right]
    \nonumber\\
    &\quad+
    \frac{1}{K}\sum_{j\neq \ell} \mathbb{E}\!\left[\gamma_\ell(Y)^2\|Y-\mu_\ell^\star\|^2\mid L=j\right].
\label{eq:A_second_moment}
\end{align}
Since $0\le \gamma_\ell(Y)\le 1$, we have $(1-\gamma_\ell(Y))^2\le 1-\gamma_\ell(Y)$, as well as $\gamma_\ell(Y)^2\le \gamma_\ell(Y)$.
Hence,
\begin{align}
    \left\| \mathbb{E}[\gamma_\ell(Y)(Y-\mu_\ell^\star)] \right\|^2
    &\le
    \frac{1}{K}\, \mathbb{E}\!\left[(1-\gamma_\ell(Y))\|Y-\mu_\ell^\star\|^2\mid L=\ell\right]
    \nonumber\\
    &\quad+
    \frac{1}{K}\sum_{j\neq \ell} \mathbb{E}\!\left[\gamma_\ell(Y)\|Y-\mu_\ell^\star\|^2\mid L=j\right].    \label{eq:numerator_squared_bound_step3_tau}
\end{align}

If $L=\ell$, then $Y=\mu_\ell^\star+\sigma Z$, so
\begin{align}
    \mathbb{E}\!\left[\|Y-\mu_\ell^\star\|^2\mid L=\ell\right] = \sigma^2 d.  \label{eq:same_component_second_moment_step3_tau}
\end{align}
If $L=j\neq \ell$, then
\begin{align}
    Y-\mu_\ell^\star  = (\mu_j^\star-\mu_\ell^\star)+\sigma Z,
\end{align}
and therefore
\begin{align}
    \mathbb{E}\!\left[\|Y-\mu_\ell^\star\|^2\mid L=j\right] =  \|\mu_j^\star-\mu_\ell^\star\|^2+\sigma^2 d \le   \Delta_{\max}^2+\sigma^2 d.  \label{eq:cross_component_second_moment_step3_tau}
\end{align}
Using \eqref{eq:def_qell_tau}, \eqref{eq:same_component_second_moment_step3_tau}, and \eqref{eq:cross_component_second_moment_step3_tau} in \eqref{eq:numerator_squared_bound_step3_tau}, and noting that $\sigma^2 d\le \Delta_{\max}^2+\sigma^2 d$, we obtain
\begin{align}
    \left\| \mathbb{E}[\gamma_\ell(Y)(Y-\mu_\ell^\star)] \right\|^2
    &\le \frac{1}{K}\,    q_\ell(\tau,\sigma)\bigl(\Delta_{\max}^2+\sigma^2 d\bigr).    \label{eq:numerator_bound_step3_tau}
\end{align}

We next lower-bound the denominator in \eqref{eq:mean_error_basic_step3_tau}. Since
\begin{align}
    \mathbb{E}[\gamma_\ell(Y)]
    &=
    \frac{1}{K}\mathbb{E}[\gamma_\ell(Y)\mid L=\ell] + \frac{1}{K}\sum_{j\neq \ell}\mathbb{E}[\gamma_\ell(Y)\mid L=j]
    \nonumber\\
    &\ge
    \frac{1}{K}\mathbb{E}[\gamma_\ell(Y)\mid L=\ell] = \frac{1}{K}\Bigl(1-\mathbb{E}[1-\gamma_\ell(Y)\mid L=\ell]\Bigr)
    \nonumber\\
    &\ge
    \frac{1}{K}\bigl(1-q_\ell(\tau,\sigma)\bigr),   \label{eq:denominator_lower_step3_tau}
\end{align}
and $q_\ell(\tau,\sigma)\to0$ exponentially fast by \eqref{eq:qell_bound_eta_tau}. Hence there exists $\sigma_1>0$ such that for all $\sigma\le \sigma_1$,
\begin{align}
    \mathbb{E}[\gamma_\ell(Y)]\ge \frac{1}{2K}.    \label{eq:denominator_uniform_lower_step3_tau}
\end{align}
Combining \eqref{eq:mean_error_basic_step3_tau}, \eqref{eq:numerator_bound_step3_tau}, and \eqref{eq:denominator_uniform_lower_step3_tau}, we conclude that for all sufficiently small $\sigma$,
\begin{align}
    \|m_\ell-\mu_\ell^\star\|^2 \le 4K\bigl(\Delta_{\max}^2+\sigma^2 d\bigr)\,q_\ell(\tau,\sigma).\label{eq:mean_from_leakage_soft_tau}
\end{align}

Finally, combining \eqref{eq:mean_from_leakage_soft_tau} with
\eqref{eq:qell_bound_eta_tau} and summing over $\ell\in[K]$, we obtain
\begin{align}
    d_{\mathrm{perm}}^2\!\bigl(\bm{\mu}_{\tau}^\star,\bm{\mu}^\star\bigr) &\le  4K^2(K-1)\bigl(\Delta_{\max}^2+\sigma^2 d\bigr) \exp\!\left( -\Bigl(\frac{1}{8}-\eta\Bigr)\frac{\Delta_{\min}^2}{\sigma^2} \right),
\end{align}
uniformly over $0<\tau\le \sigma$, for all sufficiently small $\sigma$. This proves
\eqref{eq:highSNR_under_smoothed}.

\subsection{Proof of Corollary~\ref{prop:highSNR_mean_upper}}
\label{subsec:proof_highSNR_HA}
By Proposition~\ref{prop:hard_assignment_limit}, for every fixed $\sigma>0$,
\begin{align}
    d_{\mathrm{perm}}\!\bigl(\bm{\mu}_{\tau}^\star,\bm{\mu}_{\mathrm{HA}}^\star(\sigma)\bigr) \xrightarrow[\tau\to0]{}0.    \label{eq:tau_to_zero_hard_assignment_limit}
\end{align}
Fix $\sigma>0$. Since the bound in Proposition~\ref{prop:highSNR_under_smoothed_upper_uniform} holds uniformly over all $0<\tau\le \sigma$, we may let $\tau\to0$ in that bound and use \eqref{eq:tau_to_zero_hard_assignment_limit} to conclude that
\begin{align}
    d_{\mathrm{perm}}^2\!\bigl(\bm{\mu}_{\mathrm{HA}}^\star(\sigma),\bm{\mu}^\star\bigr) \le 4K^2(K-1)\bigl(\Delta_{\max}^2+\sigma^2 d\bigr) \exp\!\left( -\Bigl(\frac{1}{8}-\eta\Bigr)\frac{\Delta_{\min}^2}{\sigma^2} \right)
\end{align}
for all sufficiently small $\sigma$. The convergence $d_{\mathrm{perm}}(\bm{\mu}_{\mathrm{HA}}^\star(\sigma),\bm{\mu}^\star)\to 0$
follows immediately.

\section{Latent-label clustering: proofs}

\subsection{Proof of Proposition~\ref{prop:low_snr_lb_classification}}
\label{sec:proof-lowSNR_classification}
Under Model~\ref{model:gmm_isotropic_equal}, let $P$ denote the joint law of $(L,Y)$, and let $P_L$ and $P_Y$ denote its marginals on $[K]$ and $\mathbb{R}^d$, respectively. Define
\begin{align}
    Q \triangleq P_L \otimes P_Y
\end{align}
to be the product measure on $[K]\times\mathbb{R}^d$, namely the unique probability measure satisfying
\begin{align}\label{eq:Q_rectangles}
    Q(B\times C)=P_L(B)\,P_Y(C)
\end{align}
for all measurable sets $B\subseteq [K]$ and $C\subseteq \mathbb{R}^d$.

For any measurable classifier $\psi:\mathbb{R}^d\to[K]$, define its success event
\begin{align}
    A_\psi \triangleq  \{(\ell,y)\in[K]\times\mathbb{R}^d:\ \psi(y)=\ell\}.
\end{align}
Its success probability under $P$ is $P_{\mathrm{succ}}(\psi)=P(A_\psi)$, and the Bayes success probability is
\begin{align}
    P_{\mathrm{succ}}^\star  \triangleq   \sup_{\psi} P_{\mathrm{succ}}(\psi),
    \qquad P_{\mathrm{err}}^\star  =  1-P_{\mathrm{succ}}^\star.
\end{align}
Since $Q=P_L\otimes P_Y$ and $P_L$ is uniform on $[K]$, we have, for every $\psi$,
\begin{align}
    Q(A_\psi)
    &= \sum_{\ell=0}^{K-1} \int_{\mathbb{R}^d} \mathbbm{1}\{\psi(y)=\ell\}\,Q(L=\ell)\,dP_Y(y) \\
    &= \sum_{\ell=0}^{K-1} \int_{\mathbb{R}^d} \mathbbm{1}\{\psi(y)=\ell\}\,\frac{1}{K}\,dP_Y(y)
     \\ & = \frac{1}{K}\int_{\mathbb{R}^d}\sum_{\ell=0}^{K-1}\mathbbm{1}\{\psi(y)=\ell\}\,dP_Y(y)
     = \frac{1}{K}.
\end{align}
Therefore, for any $\psi$, and using the definition of total-variation (TV) distance~\cite{cover1999elements}
\begin{align}
    P_{\mathrm{succ}}(\psi)-\frac{1}{K} = P(A_\psi)-Q(A_\psi)  \le \sup_{A} |P(A)-Q(A)| = \|P-Q\|_{\rm TV}.
\end{align}
Taking the supremum over $\psi$ yields
\begin{align}
    P_{\mathrm{succ}}^\star  \le  \frac{1}{K} + \|P-Q\|_{\rm TV},
\end{align}
and hence
\begin{align}\label{eq:TV_to_error}
    P_{\mathrm{err}}^\star = 1-P_{\mathrm{succ}}^\star \ge  1-\frac{1}{K}-\|P-Q\|_{\rm TV}.
\end{align}
By Pinsker's inequality~\cite[Lemma~2.5]{tsybakov2009introduction}, together with the standard identity
\begin{align}
    D_{\rm KL}(P\|Q) = D_{\rm KL}(P_{L,Y}\,\|\,P_L\otimes P_Y) = I(L;Y),
\end{align}
where $I(L;Y)$ is the mutual information between $L$ and $Y$ (with natural logarithms), we obtain from~\eqref{eq:TV_to_error} that
\begin{align}\label{eq:MI_to_error}
    P_{\mathrm{err}}^\star \ge 1-\frac{1}{K}-\sqrt{\frac{I(L;Y)}{2}}.
\end{align}
This proves~\eqref{eq:bayes_error_lb_in_terms_of_MI}.

\subsection{Proof of Corollary~\ref{cor:upper-bound-mutual-information}} \label{sec:corr-upper-bound-mutual-information}

Define the random vector $X \triangleq \mu_L$. Then $X$ is supported on $\{\mu_0,\dots,\mu_{K-1}\}$ and the observation model (Model~\ref{model:gmm_isotropic_equal}) can be written as the additive Gaussian-noise channel
\begin{align}\label{eq:AWGN_channel}
    Y = X + \sigma Z, \qquad Z\sim \mathcal{N}(0,I_d), 
\end{align}
with $Z\perp X$.
Since $X$ is a deterministic function of $L$, the data-processing identity~\cite{cover1999elements} implies
\begin{align}\label{eq:MI_L_equals_MI_X}
    I(L;Y) = I(X;Y).
\end{align}
Using the differential-entropy representation of mutual information,
\begin{align}\label{eq:MI_entropy_identity}
    I(X;Y) = h(Y)-h(Y|X) = h(X+\sigma Z)-h(\sigma Z),
\end{align}
where $h(\cdot)$ denotes differential entropy.

Let $\Sigma_X\triangleq \Cov(X)$. By~\eqref{eq:AWGN_channel},
\begin{align}\label{eq:CovY}
    \Cov(Y)=\Sigma_X+\sigma^2 I_d.
\end{align}
Among all random vectors with a fixed covariance, the Gaussian distribution maximizes differential entropy.
Therefore,
\begin{align}\label{eq:hY_gaussian_max}
    h(Y) \le \frac{1}{2}\log \Big((2\pi e)^d \det(\Sigma_X+\sigma^2 I_d)\Big).
\end{align}
Moreover, since $\sigma Z\sim \mathcal{N}(0,\sigma^2 I_d)$,
\begin{align}\label{eq:hSigmaZ}
    h(\sigma Z) = \frac{1}{2}\log \Big((2\pi e)^d \det(\sigma^2 I_d)\Big) = \frac{d}{2}\log(2\pi e\sigma^2).
\end{align}
Combining~\eqref{eq:MI_entropy_identity}--\eqref{eq:hSigmaZ} yields
\begin{align}\label{eq:logdet_bound}
    I(L;Y)=I(X;Y) & \le \frac{1}{2}\log\det(\Sigma_X+\sigma^2 I_d)-\frac{1}{2}\log\det(\sigma^2 I_d) \\ & = \frac{1}{2}\log\det \Big(I_d+\frac{1}{\sigma^2}\Sigma_X\Big).
\end{align}
Since $\Sigma_X\succeq 0$, all eigenvalues of $A\triangleq \sigma^{-2}\Sigma_X$ are nonnegative and
\begin{align}\label{eq:logdet_le_trace}
    \log\det(I_d+A) = \sum_{i=1}^d \log\big(1+\lambda_i(A)\big) \le \sum_{i=1}^d \lambda_i(A) = \tr(A),
\end{align}
where we used $\log(1+t)\le t$ for all $t\ge 0$. Applying~\eqref{eq:logdet_le_trace} to~\eqref{eq:logdet_bound},
\begin{align}\label{eq:MI_le_traceSigma}
    I(L;Y) \le \frac{1}{2}\tr \Big(\frac{1}{\sigma^2}\Sigma_X\Big) = \frac{1}{2\sigma^2}\tr(\Sigma_X).
\end{align}
Finally, since $L\sim\mathrm{Unif}([K])$ and $X=\mu_L$, we have
\begin{align}
    \mathbb{E}X = \frac{1}{K}\sum_{\ell=0}^{K-1}\mu_\ell = \bar{\mu},
\end{align}
where $\bar{\mu}$ is the mixture mean. Therefore
\begin{align}\label{eq:traceSigma_bound}
    \tr(\Sigma_X) &= \mathbb{E}\!\left[\|X-\mathbb{E}X\|^2\right] = \mathbb{E}\!\left[\|X-\bar{\mu}\|^2\right] \notag\\
    &= \frac{1}{K}\sum_{\ell=0}^{K-1}\|\mu_\ell-\bar{\mu}\|^2.
\end{align}
Combining~\eqref{eq:MI_le_traceSigma} and~\eqref{eq:traceSigma_bound} yields
\begin{align}\label{eq:MI_upper}
    I(L;Y) \le \frac{1}{2\sigma^2}\cdot \frac{1}{K}\sum_{\ell=0}^{K-1}\|\mu_\ell-\bar{\mu}\|^2,
\end{align}
which is inequality~\eqref{eq:MI_energy_upper_bound_raw}.

Substituting~\eqref{eq:MI_upper} into~\eqref{eq:MI_to_error} gives
\begin{align}\label{eq:prop32_final}
    P_{\mathrm{err}}^\star
    &\ge
    1-\frac{1}{K}-\sqrt{\frac{1}{2}I(L;Y)} \notag\\
    &\ge 1-\frac{1}{K} - \sqrt{\frac{1}{2}\cdot \frac{1}{2\sigma^2}\cdot \frac{1}{K}\sum_{\ell=0}^{K-1}\|\mu_\ell-\bar{\mu}\|^2
    } \notag\\
    &= 1-\frac{1}{K} - \frac{1}{2}\sqrt{\frac{1}{\sigma^2}\cdot \frac{1}{K}\sum_{\ell=0}^{K-1}\|\mu_\ell-\bar{\mu}\|^2},
\end{align}
which is exactly inequality~\eqref{eq:bayes_error_lb_energy}. This completes the proof.

\subsection{Proof of Proposition~\ref{prop:highSNR_exp_bound}}
\label{sec:proof-of-highSNR-classification}

Fix $\ell\in[K]$ and condition on the event $\{L=\ell\}$. Under Model~\ref{model:gmm_isotropic_equal}, we have $Y=\mu_\ell+\sigma Z$, where $Z\sim \mathcal{N}(0,I_d)$.
For any $j\neq \ell$, corresponding to a wrong class whose mean $\mu_j$ competes with the true mean $\mu_\ell$, define the pairwise confusion event
\begin{align}\label{eq:pairwise_conf_event}
    E_{\ell,j} \;\triangleq\; \Big\{\|Y-\mu_j\|^2 \le \|Y-\mu_\ell\|^2\Big\}.
\end{align}
Since the Bayes rule~\eqref{eq:bayes_is_nearest_mean} selects the closest mean, the overall error event is contained in the union of the pairwise confusion events:
\begin{align}\label{eq:error_union}
    \{\hat L(Y)\neq \ell\} \;\subseteq\; \bigcup_{j\neq \ell} E_{\ell,j}.
\end{align}
By the union bound and~\eqref{eq:error_union},
\begin{align}\label{eq:union_bound_cond}
    \mathbb{P}(\hat L(Y)\neq \ell \mid L=\ell) \;\le\; \sum_{j\neq \ell}\mathbb{P}(E_{\ell,j}\mid L=\ell).
\end{align}

We now compute $\mathbb{P}(E_{\ell,j}\mid L=\ell)$. Expanding the squared norms in~\eqref{eq:pairwise_conf_event},
\begin{align}
    \|Y-\mu_j\|^2 \le \|Y-\mu_\ell\|^2
    \quad \Longleftrightarrow \quad
    2\langle Y-\mu_\ell,\mu_j-\mu_\ell\rangle \ge \|\mu_j-\mu_\ell\|^2. \label{eq:hyperplane_form}
\end{align}
Since $Y-\mu_\ell=\sigma Z$, ~\eqref{eq:hyperplane_form} becomes
\begin{align}\label{eq:gaussian_projection_event}
    E_{\ell,j} = \Big\{\langle Z,\mu_j-\mu_\ell\rangle \ge \frac{\|\mu_j-\mu_\ell\|^2}{2\sigma}\Big\}.
\end{align}
Let $u_{\ell j}\triangleq (\mu_j-\mu_\ell)/\|\mu_j-\mu_\ell\|$. Since $Z\sim\cN(0,I_d)$,
\begin{align}\label{eq:proj_distribution}
    \langle Z,u_{\ell j}\rangle \sim \cN(0,1),
    \qquad
    \langle Z,\mu_j-\mu_\ell\rangle = \|\mu_j-\mu_\ell\|\,\langle Z,u_{\ell j}\rangle.
\end{align}
Therefore, writing $\Delta_{\ell j}=\|\mu_j-\mu_\ell\|$, we obtain
\begin{align}\label{eq:pairwise_prob_exact}
    \mathbb{P}(E_{\ell,j}\mid L=\ell) \; = \; \mathbb{P} \left(\langle Z,u_{\ell j}\rangle \ge \frac{\Delta_{\ell j}}{2\sigma}\right) \; \triangleq \;  \Phi_{\mathcal{N}} \left(-\frac{\Delta_{\ell j}}{2\sigma}\right).
\end{align}
where $\Phi_{\mathcal{N}}$ denotes the cumulative distribution function of the standard normal distribution $\mathcal{N}(0,1)$. Substituting~\eqref{eq:pairwise_prob_exact} into~\eqref{eq:union_bound_cond} yields
\begin{align}\label{eq:cond_err_bound_phi}
    \mathbb{P}(\hat L(Y)\neq \ell \mid L=\ell) \; \le \; \sum_{j\neq \ell}\Phi_{\mathcal{N}} \left(-\frac{\Delta_{\ell j}}{2\sigma}\right).
\end{align}
Averaging over $\ell$ and using $\mathbb{P}(L=\ell)=1/K$ gives the first inequality in~\eqref{eq:highSNR_pairwise_sum_bound}:
\begin{align}
    P_{\mathrm{err}}^\star = \mathbb{P}(\hat L(Y)\neq L)
    & = \frac{1}{K}\sum_{\ell=0}^{K-1}\mathbb{P}(\hat L(Y)\neq \ell \mid L=\ell)
    \\ & \le \frac{1}{K}\sum_{\ell=0}^{K-1}\sum_{j\neq \ell}\Phi_{\mathcal{N}} \left(-\frac{\Delta_{\ell j}}{2\sigma}\right).
    \label{eq:avg_bound_phi}
\end{align}

For the exponential bound, we use the standard Gaussian tail inequality
\begin{align}
    \label{eq:gaussian_tail}
    \Phi_{\mathcal{N}}(-t)\le \frac{1}{2}e^{-t^2/2},
    \qquad t\ge 0,
\end{align}
to obtain the second inequality in~\eqref{eq:highSNR_pairwise_sum_bound}. Finally, since
$\Delta_{\ell j}\ge \Delta_{\min}$ for all $\ell\neq j$, we have
\begin{align}
    \frac{1}{2K}\sum_{\ell=0}^{K-1}\sum_{j\neq \ell} \exp \left(-\frac{\Delta_{\ell j}^2}{8\sigma^2}\right) \le \frac{1}{2K}\sum_{\ell=0}^{K-1}\sum_{j\neq \ell} \exp \left(-\frac{\Delta_{\min}^2}{8\sigma^2}\right) = \frac{K-1}{2}\exp \left(-\frac{\Delta_{\min}^2}{8\sigma^2}\right),
\end{align}
which proves~\eqref{eq:highSNR_minsep_bound}.

\begin{remark}[Sharper prefactor via Mills' ratio]
\label{rem:mills_ratio_high_snr}
The exponential bound derived above is sufficient for our purposes, since it captures the correct decay rate uniformly over all pairs of centers. If one is interested in the more precise high-SNR asymptotic regime $t\to\infty$, then the Gaussian tail estimate in~\eqref{eq:gaussian_tail} can be sharpened using Mills' ratio:
\begin{align}
    \Phi_{\mathcal{N}}(-t) \sim \frac{1}{t\sqrt{2\pi}}e^{-t^{2}/2},
    \qquad t\to\infty.
\end{align}
Applying this with $t=\Delta_{\ell j}/(2\sigma)$ shows that each pairwise term satisfies
\begin{align}
    \Phi_{\mathcal{N}}\!\left(-\frac{\Delta_{\ell j}}{2\sigma}\right) \sim \frac{2\sigma}{\Delta_{\ell j}\sqrt{2\pi}} \exp\!\left(-\frac{\Delta_{\ell j}^{2}}{8\sigma^{2}}\right),
    \qquad \sigma\to 0.
\end{align}
Thus, the exponent in~\eqref{eq:highSNR_minsep_bound} is unchanged, but the prefactor can be refined from a crude constant to one of order $\sigma/\Delta_{\ell j}$. In particular, since $\Delta_{\ell j}\ge \Delta_{\min}$ for all $\ell\neq j$, one may replace the pairwise prefactor by the uniform upper bound $2\sigma/(\Delta_{\min}\sqrt{2\pi})$.
\end{remark}

\section{Symmetric two-component Gaussian mixture model: proofs}

Throughout this appendix we will repeatedly
invoke the following standard expansions of the error and complementary
error functions, available
in~\cite[Eqs.~7.1.2, 7.1.5, and 7.1.23]{abramowitz1964handbook}. As $x\to 0$,
\begin{align}
    \label{eq:erf_smallx}
    \mathrm{erf}(x)  &= \frac{2}{\sqrt{\pi}} \left(x - \frac{x^3}{3} + O(x^5)\right), \\
    \label{eq:erfc_smallx}
    \mathrm{erfc}(x) &= 1 - \frac{2}{\sqrt{\pi}} \left(x - \frac{x^3}{3} + O(x^5)\right),
\end{align}
and as $x\to\infty$,
\begin{align}
    \label{eq:erfc_largex}
    \mathrm{erfc}(x) = \frac{e^{-x^2}}{\sqrt{\pi}\,x} \left(1 - \frac{1}{2x^2} + O(x^{-4})\right).
\end{align}

\subsection{Reduction to a one-dimensional representation}

We begin by reducing the symmetric two-component model introduced in~\eqref{eq:K2_d_gmm_2} to a one-dimensional problem along the signal direction.

\begin{lem}[Reduction to a one-dimensional representation]
\label{lem:K2_reduction_to_1D}
Consider the centered two-component model $Y$ drawn according to~\eqref{eq:K2_d_gmm_2} with $\mu\neq 0$. Let $u\triangleq \mu/\|\mu\|_2\in\mathbb{S}^{d-1}$ and $T\triangleq \langle u,Y\rangle$. Then the population hard-assignment minimizer has the symmetric form $\bm{\mu}_{\mathrm{HA}}^\star = \bigl(\mu_{\mathrm{HA},1}^\star,\mu_{\mathrm{HA},2}^\star\bigr)$, with $\mu_{\mathrm{HA},2}^\star = -\mu_{\mathrm{HA},1}^\star$, and
\begin{align}
    \mu_{\mathrm{HA},1}^\star = u\,\mathbb{E}[T\mid T\ge 0].
\end{align}
\end{lem}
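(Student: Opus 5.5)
We need to show that the population 2-means minimizer for the symmetric mixture is $(c u, -c u)$ with $c=\mathbb{E}[T\mid T\ge0]$. The plan is to reduce to the one-dimensional problem along $u$, identify the optimal partition as the half-space split by the hyperplane $u^\perp$, and then read off the centroids.

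\textbf{Step 1 (Lloyd fixed-point structure).} For any configuration $(a,b)$ with $a\neq b$, write $\Phi(a,b)=\mathbb{E}[\min(\|Y-a\|^2,\|Y-b\|^2)]$. A minimizer must satisfy two conditions:
\begin{itemize}
\item it is a centroid of its own Voronoi cells, since replacing a center by its cell mean does not increase $\Phi$;
\item the cells are the half-spaces $H=\{y:\langle y,a-b\rangle\ge (\|a\|^2-\|b\|^2)/2\}$ and $H^c$.
\end{itemize}
Hence the minimum of $\Phi$ equals the minimum over affine half-space partitions $(H,H^c)$ of the within-cluster variance
\begin{align}
W(H)=\mathbb{E}\|Y\|^2-\mathbb{P}(H)\|\mathbb{E}[Y\mid H]\|^2-\mathbb{P}(H^c)\|\mathbb{E}[Y\mid H^c]\|^2 .
\end{align}
This reduces the problem to maximizing the between-cluster term $B(H)=\mathbb{P}(H)\mathbb{P}(H^c)\|\mathbb{E}[Y\mid H]-\mathbb{E}[Y\mid H^c]\|^2$ over half-spaces $H=\{\langle v,y\rangle\ge t\}$ with $\|v\|=1$. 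Here I used $\mathbb{E}Y=0$.

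\textbf{Step 2 (reduce to direction $u$ and threshold $0$).} Decompose $Y=Tu+W$, where $W$ is the projection onto $u^\perp$. Then $W\sim\mathcal{N}(0,\sigma^2(I-uu^\top))$ is independent of $T$, and $T\sim\tfrac12\mathcal{N}(\|\mu\|,\sigma^2)+\tfrac12\mathcal{N}(-\|\mu\|,\sigma^2)$. The key claim is that $B(H)$ is maximized at $v=\pm u$, $t=0$. This is the step I expect to be the main obstacle.

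\emph{Threshold.} For a fixed direction $v$, write $S=\langle v,Y\rangle$. One computes $\mathbb{E}[Y\mid S\ge t]-\mathbb{E}[Y\mid S<t]=\kappa\,(\text{vector})$, where only the component along $\mathrm{Cov}(Y)v$ survives, because $Y$ given $S$ is affine-linear in $S$ in the Gaussian-mixture sense only approximately. To avoid this difficulty, I would instead argue via the Lloyd fixed-point equations directly.

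\emph{Direction, via fixed points.} At a minimizer, each center $a$ is the conditional mean of $Y$ over its cell. Using the independence of $W$ from $T$, together with the symmetry $W\mapsto -W$ within any slab, I would show the following. If the boundary normal $v$ had a nonzero $u^\perp$ component, then the variance criterion, expanded as a function of the angle, is strictly improved by rotating $v$ toward $u$. The reason is that the $u^\perp$ directions carry pure isotropic Gaussian noise with no bimodality, and the one-dimensional fact applies: for a single Gaussian, every split yields a smaller $B$ than splitting the bimodal direction. Concretely, I would compare $B$ for the split $\{\langle v,Y\rangle\ge t\}$ with $v=\cos\theta\,u+\sin\theta\, w$, noting that $\langle v,Y\rangle=\cos\theta\,T+\sin\theta\,\langle w,W\rangle$ is a mixture with smaller separation-to-spread ratio than $T$. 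I would then invoke that $B$ for a one-dimensional symmetric mixture $\tfrac12\mathcal{N}(m,s^2)+\tfrac12\mathcal{N}(-m,s^2)$ with fixed variance $m^2+s^2$ is increasing in $m$.

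\emph{Symmetry of the threshold.} Having fixed $v=u$, I would use the symmetry of $T$ and a one-dimensional computation of $\tfrac{d}{dt}B$ to show that $t=0$ is the optimum.

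\textbf{Step 3 (conclusion).} With cells $\{T\ge0\}$ and $\{T<0\}$, compute the centroids:
\begin{align}
\mathbb{E}[Y\mid T\ge0]=u\,\mathbb{E}[T\mid T\ge0]+\mathbb{E}[W]=u\,\mathbb{E}[T\mid T\ge0],
\end{align}
using the independence of $W$ and $T$ and $\mathbb{E}W=0$. The symmetry $T\overset{d}{=}-T$ gives the second center as its negative.

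Uniqueness up to permutation, assumed in Assumption~\ref{ass:unique_minimizers_up_to_perm}, identifies this configuration as $\bm{\mu}_{\mathrm{HA}}^\star$. Uniqueness also provides an alternative shortcut for Step 2: the reflections $Y\mapsto -Y$ and $W\mapsto RW$ for any orthogonal $R$ fixing $u$ preserve the law of $Y$. Hence the minimizer's orbit is invariant under these maps, which forces the centers to lie on $\mathrm{span}(u)$ and to be antipodal. I would lead with this symmetry argument if the assumption is available, and only fall back to the direct variational comparison otherwise.
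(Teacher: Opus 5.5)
\textbf{Verdict: there is a gap in your Step~2, fixable with a short explicit comparison.} Your Step~3 coincides with the paper's proof. The paper starts from antipodal centers $\pm a u$, reads off the Voronoi boundary $\{\langle u,y\rangle=0\}$, and computes $\mathbb{E}[Y\mid T\ge 0]=u\,\mathbb{E}[T\mid T\ge 0]$ from $Y=Tu+W$ with $W$ independent of $T$ and centered. It does not argue further why the minimizer has this antipodal-along-$u$ form, beyond ``by symmetry.'' So what you add is Step~2, and that is where the gap is.

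\textbf{The symmetry shortcut fails for $d=2$.} Let $G=\{R\in O(d):Ru=\pm u\}$ and assume uniqueness up to permutation.
\begin{itemize}
\item The map $-I$ gives $b=-a$ (or $a=b=0$).
\item Each $R$ fixing $u$ gives $Ra=\pm a$.
\item For $d\ge 3$, a quarter-turn inside $u^\perp$ rules out any nonzero $u^\perp$-component of $a$ (and $d=1$ is trivial).
\item For $d=2$, the only such $R$ is the reflection $e\mapsto -e$ of the line $u^\perp=\mathrm{span}(e)$. The configuration $(\alpha e,-\alpha e)$ is then sent to its own permutation by every element of $G$, so invariance cannot exclude it.
\end{itemize}
The collapsed configuration $(0,0)$ is also $G$-invariant in every dimension, and your Step~1 only treats $a\neq b$.

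\textbf{How to close it.} Both leftover candidates are excluded by comparing costs. By the centroid condition the surviving configurations have
\begin{itemize}
\item $\Phi=\mathbb{E}\|Y\|^2-(\mathbb{E}|T|)^2$ for the split along $u$;
\item $\Phi=\mathbb{E}\|Y\|^2-\tfrac{2}{\pi}\sigma^2$ for the split along $e$;
\item $\Phi=\mathbb{E}\|Y\|^2$ for the collapse.
\end{itemize}
Since $|T|\overset{d}{=}\bigl|\|\mu\|+\sigma Z_1\bigr|$ and $m\mapsto\mathbb{E}|m+\sigma Z_1|$ is strictly increasing on $[0,\infty)$, you get $\mathbb{E}|T|>\sigma\sqrt{2/\pi}$ whenever $\mu\neq 0$, so the split along $u$ wins. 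The orbit argument also needs the set $\mathcal{U}$ of Assumption~\ref{ass:unique_minimizers_up_to_perm} to be $G$-invariant (e.g.\ a centered ball) and to contain the unconstrained minimizer, so that the centroid condition applies.

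\textbf{The fallback variational route does not work as sketched.}
\begin{itemize}
\item For $v\neq\pm u$, the vector $\mathbb{E}[Y\mid H]-\mathbb{E}[Y\mid H^c]$ is not parallel to $v$. The law of $\langle v,Y\rangle$ therefore only gives a \emph{lower} bound on $B(H)$. Showing that the projected one-dimensional mixture has a smaller between-cluster term than $T$ says nothing about $B(H_v)$ versus $B(H_u)$.
\item $\mathrm{Var}\langle v,Y\rangle=\cos^2\theta\,\|\mu\|^2+\sigma^2$ changes with $\theta$. The relevant monotonicity is in the separation at fixed noise $\sigma$, not at fixed variance.
\end{itemize}
A correct version runs as follows.
\begin{itemize}
\item Write $B(H)=\max_{\|v'\|=1}\mathrm{Cov}(\langle v',Y\rangle,\mathbf{1}_H)^2/\bigl(\mathbb{P}(H)\mathbb{P}(H^c)\bigr)$ and exchange the two maximizations.
\item For a scalar variable, the optimal event is a threshold event.
\item $\langle v',Y\rangle\overset{d}{=}|\langle v',u\rangle|\,\|\mu\|\,S+\sigma Z_1$, where $S$ is a uniform random sign independent of $Z_1\sim\mathcal{N}(0,1)$.
\item The optimal value increases with $|\langle v',u\rangle|$. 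To see this, couple $mS+\sigma Z_1$ with $m'S+\sigma Z_1$ and use $\mathrm{Cov}(S,\mathbf{1}_H)\ge 0$ for an upper threshold event.
\end{itemize}

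\textbf{The threshold step is incomplete.} Your derivative computation in $t$ only shows that $t=0$ is stationary. Because $T$ is bimodal once $\|\mu\|>\sigma$, global optimality of $t=0$ needs either the uniqueness assumption combined with $T\mapsto -T$, or a separate argument.
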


\begin{proof}[Proof of Lemma~\ref{lem:K2_reduction_to_1D}]
For antipodal mean $\pm au$ with $a\ge 0$, the Voronoi boundary is the hyperplane $\langle u,y\rangle=0$, since
\begin{align}
    \|y-au\|^2-\|y+au\|^2=-4a\langle u,y\rangle.
\end{align}
Hence the corresponding hard-assignment rule assigns $y$ to the first cluster if and only if $\langle u,y\rangle\ge 0$. The associated population mean is therefore
\begin{align}
    \mu_{\mathrm{HA},1}^\star = \mathbb{E}[Y\mid T\ge 0].
\end{align}
Now write $Y=Tu+W$, where $W\triangleq (I_d-uu^\top)Y\in u^\perp$. Also, writing $Y=S\mu+\sigma Z$, where $S\sim\mathrm{Unif}\{\pm1\}$ and $Z\sim\mathcal{N}(0,I_d)$, we have
\begin{align}
    T & = S\|\mu\|_2+\sigma\langle u,Z\rangle,
    \\
    W & = \sigma(I_d-uu^\top)Z.
\end{align}
By isotropy of $Z$, the scalar $\langle u,Z\rangle$ is independent of $(I_d-uu^\top)Z$. Hence $W$ is independent of $T$, and since $\mathbb{E}[W]=0$, it follows that $\mathbb{E}[W\mid T\ge 0]=0$. Therefore
\begin{align}
    \mu_{\mathrm{HA},1}^\star = \mathbb{E}[Y\mid T\ge 0] = u\,\mathbb{E}[T\mid T\ge 0].
\end{align}
By symmetry, $\mu_{\mathrm{HA},2}^\star=-\mu_{\mathrm{HA},1}^\star$.
\end{proof}

Lemma~\ref{lem:K2_reduction_to_1D} shows that $\mu_{\mathrm{HA},1}^\star = u\,\mathbb{E}[T\mid T\ge 0]$, where $T\sim \frac{1}{2}\,\mathcal{N}(\|\mu\|_2,\sigma^2) + \frac{1}{2}\,\mathcal{N}(-\|\mu\|_2,\sigma^2)$.
Since the density of $T$ is symmetric about $0$, we have $\mathbb{P}(T>0)=\frac{1}{2}$, and therefore
\begin{align}
    \mathbb{E}[T\mid T>0]=2\,\mathbb{E}[T\,\mathbf 1_{\{T>0\}}].
\end{align}
Moreover, by symmetry,
\begin{align}
    \mathbb{E}[T\mid T>0]=\mathbb{E}[|T|].
\end{align}
Thus $|T|$ follows the folded normal distribution with parameters $\|\mu\|_2$ and $\sigma^2$. By the folded-normal mean formula~\cite[Eq.~(10)]{leone1961folded} and the relation between $\mathrm{erf}$ and the standard normal CDF,
\begin{align}
    \label{eq:E:num_final}
    \mathbb{E}[|T|] = \|\mu\|_2\,\mathrm{erf}\!\left(\frac{\|\mu\|_2}{\sqrt{2}\sigma}\right) + \sigma\sqrt{\frac{2}{\pi}}\, \exp\!\left(-\frac{\|\mu\|_2^2}{2\sigma^2}\right).
\end{align}
Substituting this into the representation of $\mu_{\mathrm{HA},1}^\star$ from Lemma~\ref{lem:K2_reduction_to_1D}, and using $u=\mu/\|\mu\|_2$ together with $\mathrm{erf}=1-\mathrm{erfc}$, we obtain
\begin{align}
    \label{eq:c_k2_final}
    \mu_{\mathrm{HA},1}^\star = \mu -  \mu\,\mathrm{erfc}\!\left(\frac{\|\mu\|_2}{\sqrt{2}\sigma}\right) + \sqrt{\frac{2}{\pi}}\, \frac{\mu\sigma}{\|\mu\|_2}\, \exp\!\left(-\frac{\|\mu\|_2^2}{2\sigma^2}\right).
\end{align}
By symmetry,
\begin{align}
    \mu_{\mathrm{HA},2}^\star=-\mu_{\mathrm{HA},1}^\star.
\end{align}

\subsection{Proof of Theorem~\ref{thm:perm_MSE_d_K2}}
\label{app:MSEForK2}

For the symmetric two-component mixture~\eqref{eq:K2_d_gmm_2}, the population hard-assignment minimizer has the form $\bm{\mu}_{\mathrm{HA}}^\star=(\mu_{\mathrm{HA},1}^\star,\mu_{\mathrm{HA},2}^\star)$ with $\mu_{\mathrm{HA},2}^\star=-\mu_{\mathrm{HA},1}^\star$. Hence the normalized MSE from Definition~\ref{eq:mseDef_gmm} reduces to
\begin{align}
    \mathrm{MSE}(\bm{\mu}_{\mathrm{HA}}^\star,\bm{\mu}^\star) = \frac{1}{2\|\mu\|_2^2}\, d_{\mathrm{perm}}^2(\bm{\mu}_{\mathrm{HA}}^\star,\bm{\mu}^\star) = \frac{1}{\|\mu\|_2^2}\, \|\mu_{\mathrm{HA},1}^\star-\mu\|_2^2.
\end{align}
Substituting the explicit expression for $\mu_{\mathrm{HA},1}^\star$ from~\eqref{eq:c_k2_final} gives
\begin{align}
    \label{eq:1DMse}
    \mathrm{MSE}(\bm{\mu}_{\mathrm{HA}}^\star,\bm{\mu}^\star) = \frac{1}{\|\mu\|_2^2} \left(\sqrt{\frac{2}{\pi}}\,\sigma\, \exp\!\left(-\frac{\|\mu\|_2^2}{2\sigma^2}\right) - \|\mu\|_2\, \mathrm{erfc}\!\left(\frac{\|\mu\|_2}{\sqrt{2}\sigma}\right)\right)^2.
\end{align}

\subsection{Proof of Corollary~\ref{cor:K2_HA_asymptotics}}
\label{app:AsymptoticSigmaInfK2}

We prove the two asymptotic regimes by expanding the exact expression~\eqref{eq:1DMse}.

\paragraph{Step 1: $\sigma\to\infty$.}
As $\sigma\to\infty$, applying the standard Taylor expansions of $\exp(x)$ and $\mathrm{erfc}(x)$ as seen in~\eqref{eq:erfc_smallx} to~\eqref{eq:1DMse} yields
\begin{align}
    \sqrt{\frac{2}{\pi}}\,\sigma\, \exp\!\left(-\frac{\|\mu\|_2^2}{2\sigma^2}\right) - \|\mu\|_2\, \mathrm{erfc}\!\left(\frac{\|\mu\|_2}{\sqrt{2}\sigma}\right) = \sqrt{\frac{2}{\pi}}\,\sigma-\|\mu\|_2+O\!\left(\frac{1}{\sigma}\right).
\end{align}
Therefore
\begin{align}
    \mathrm{MSE}(\bm{\mu}_{\mathrm{HA}}^\star,\bm{\mu}^\star) = \frac{1}{\|\mu\|_2^2} \left(\sqrt{\frac{2}{\pi}}\,\sigma-\|\mu\|_2+O\!\left(\frac{1}{\sigma}\right)\right)^2,
\end{align}
and hence
\begin{align}
    \lim_{\sigma\to\infty} \frac{\mathrm{MSE}(\bm{\mu}_{\mathrm{HA}}^\star,\bm{\mu}^\star)}{\sigma^2} = \frac{2}{\pi\|\mu\|_2^2}.
\end{align}

\paragraph{Step 2: $\sigma\to0$.}
Set $\alpha\triangleq \|\mu\|_2/(\sqrt{2}\sigma)$, so that $\alpha\to\infty$ as $\sigma\to0$. Applying the asymptotic expansion of $\mathrm{erfc}(\alpha)$ as seen in \eqref{eq:erfc_largex}, with $\alpha=\|\mu\|_2/(\sqrt{2}\sigma)$, or equivalently $1/\alpha=\sqrt{2}\sigma/\|\mu\|_2$, gives
\begin{align}
    \|\mu\|_2\,\mathrm{erfc}(\alpha) = \sqrt{\frac{2}{\pi}}\,\sigma\,e^{-\alpha^2} \left( 1-\frac{\sigma^2}{\|\mu\|_2^2} +O\!\left(\frac{\sigma^4}{\|\mu\|_2^4}\right)\right).
\end{align}
Writing $A\triangleq \sqrt{2/\pi}\,\sigma\,e^{-\alpha^2}$, the bracket in~\eqref{eq:1DMse} becomes
\begin{align}
    A-\|\mu\|_2\,\mathrm{erfc}(\alpha) =A\left(\frac{\sigma^2}{\|\mu\|_2^2} +O\!\left(\frac{\sigma^4}{\|\mu\|_2^4}\right) \right) = \sqrt{\frac{2}{\pi}}\, \frac{\sigma^3}{\|\mu\|_2^2}\, e^{-\alpha^2} \left( 1+O\!\left(\frac{\sigma^2}{\|\mu\|_2^2}\right) \right).
\end{align}
Squaring and substituting into~\eqref{eq:1DMse} yields
\begin{align}
    \mathrm{MSE}(\bm{\mu}_{\mathrm{HA}}^\star,\bm{\mu}^\star) = \frac{2}{\pi}\, \frac{\sigma^6}{\|\mu\|_2^6} \left(1+O\!\left(\frac{\sigma^2}{\|\mu\|_2^2}\right)\right) e^{-\|\mu\|_2^2/\sigma^2},
    \qquad \sigma\to0.
\end{align}
Equivalently, since
$d_{\mathrm{perm}}^2(\bm{\mu}_{\mathrm{HA}}^\star,\bm{\mu}^\star)
= \|\bm{\mu}^\star\|_F^2\,\mathrm{MSE}
=2\|\mu\|_2^2\,\mathrm{MSE}$,
we obtain
\begin{align}
    d_{\mathrm{perm}}^2\!\bigl(\bm{\mu}_{\mathrm{HA}}^\star(\sigma),\bm{\mu}^\star\bigr) = \frac{4}{\pi}\, \frac{\sigma^6}{\|\mu\|_2^4} \left(1+O\!\left(\frac{\sigma^2}{\|\mu\|_2^2}\right) \right) e^{-\|\mu\|_2^2/\sigma^2},
    \qquad \sigma\to0.
\end{align}

\subsection{Proof of Proposition~\ref{prop:bayes_error_K2}}
\label{sec:proof_bayes_error_K2}

In the symmetric two-component model, the Bayes-optimal classifier assigns $Y$ to the nearer of $\pm\mu$, equivalently according to the sign of $T=\langle u,Y\rangle$, where $u=\mu/\|\mu\|_2$. Conditional on $L=1$, we have
$T\sim\mathcal{N}(\|\mu\|_2,\sigma^2)$, and therefore the corresponding misclassification probability is
\begin{align}
    \mathbb{P}(T<0\mid L=1) = \frac{1}{2}\,\mathrm{erfc}\!\left(\frac{\|\mu\|_2}{\sqrt{2}\sigma}\right).
\end{align}
By symmetry, the same expression holds for $\mathbb{P}(T>0\mid L=2)$. Since the two components have equal weight,
\begin{align}
    \label{eq:PerrK2Full}    P_{\mathrm{err}}^\star(\bm{\mu}^\star,\sigma) = \frac{1}{2}\, \mathrm{erfc}\!\left(\frac{\|\mu\|_2}{\sqrt{2}\sigma}\right).
\end{align}

\subsection{Proof of Corollary~\ref{cor:bayes_error_K2_asymptotics}}
\label{sec:proof_bayes_error_K2_asymptotics}

We derive the two asymptotic regimes from the exact expression~\eqref{eq:PerrK2Full}.

\paragraph{Step 1: Low SNR.}
Using the standard Taylor expansion of $\mathrm{erf}(x)$ as seen in~\eqref{eq:erf_smallx} with $x=\|\mu\|_2/(\sqrt{2}\sigma)$ into~\eqref{eq:PerrK2Full}, and using~\eqref{eq:SNR_K2_identity}, we obtain
\begin{align}
    P_{\mathrm{err}}^\star(\bm{\mu}^\star,\sigma) = \frac{1}{2} - \frac{1}{\sqrt{2\pi}}\sqrt{\mathrm{SNR}} + O\!\left(\mathrm{SNR}^{3/2}\right),
    \qquad \mathrm{SNR}\to0.
\end{align}

\paragraph{Step 2: High SNR.}
Using the asymptotic expansion of $\mathrm{erfc}(x)$ as seen in \eqref{eq:erfc_largex}, and substituting $x=\|\mu\|_2/(\sqrt{2}\sigma)=\sqrt{\mathrm{SNR}/2}$ into~\eqref{eq:PerrK2Full}, we obtain
\begin{align}
    P_{\mathrm{err}}^\star(\bm{\mu}^\star,\sigma) = \frac{1}{\sqrt{2\pi\,\mathrm{SNR}}}\exp\!\left(-\frac{\mathrm{SNR}}{2}\right) (1+o(1)),
    \qquad \mathrm{SNR}\to\infty.
\end{align}

\end{appendices}
\end{document}